\let\newfloat\newfloat@ltx
\def\HC{\mathcal{H}}
\def\ad{^{\dagger}}
\newcommand{\fsnull}[1]{}
\newcommand{\old}[1]{}
\tikzset{every picture/.style=remember picture}
\newcommand{\dya}[1]{\ket{#1}\!\bra{#1}}
\newcommand{\Ubb}{\mathbb{U}}
\newcommand{\Rbb}{\mathbb{R}}
\newcommand{\AC}{\mathcal{A}}
\newcommand{\BC}{\mathcal{B}}
\newcommand{\CC}{\mathcal{C}}
\newcommand{\EC}{\mathcal{E}}
\newcommand{\GC}{\mathcal{G}}
\newcommand{\OC}{\mathcal{O}}
\newcommand{\RC}{\mathcal{R}}
\newcommand{\SC}{\mathcal{S}}
\newcommand{\YC}{\mathcal{Y}}
\newcommand{\Var}{{\rm Var}}
\renewcommand{\geq}{\geqslant}
\renewcommand{\leq}{\leqslant}
\renewcommand{\vec}[1]{\boldsymbol{#1}}  
\newcommand*{\id}{\openone}
\newcommand{\bs}{\textsf{BS}}
\newcommand{\tin}{{\text{in}}}
\newcommand{\thv}{\vec{\theta}}
\newcommand{\losalamos}{Theoretical Division, Los Alamos National Laboratory, Los Alamos, New Mexico 87545, USA}
\def\be{\begin{equation}}
\def\ee{\end{equation}}
\def\bs{\begin{split}}
\def\e{\end{split}}
\def\ba{\begin{eqnarray}}
\def\bea{\begin{eqnarray}}
\def\tea{\end{eqnarray}}
\def\ea{\end{eqnarray}}
\def\eea{\end{eqnarray}}
\newcommand\mf[1]{\mathfrak{#1}}
\newtheorem{theorem}{Theorem}
\newtheorem{lemma}{Lemma}
\newtheorem{proposition}{Proposition}
\newtheorem{definition}{Definition}
\newtheorem{hypothesis}{Hypothesis Class}
\def\be{\begin{equation}}
\def\te{\end{equation}}
\def\ee{\end{equation}}
\def\ba{\begin{eqnarray}}
\def\bea{\begin{eqnarray}}
\def\tea{\end{eqnarray}}
\def\ea{\end{eqnarray}}
\def\eea{\end{eqnarray}}
\begin{document}

\title{ Group-Invariant Quantum Machine Learning}

\author{Mart\'{i}n Larocca}
\thanks{The two first authors contributed equally.}
\affiliation{\losalamos}
\affiliation{Center for Nonlinear Studies, Los Alamos National Laboratory, Los Alamos, New Mexico 87545, USA}

\author{Fr\'{e}d\'{e}ric Sauvage}
\thanks{The two first authors contributed equally.}
\affiliation{\losalamos}

\author{Faris M. Sbahi}
\affiliation{X, Mountain View, CA, 94043, USA} \affiliation{Google Brain, Mountain View, CA, 94043, USA}

\author{Guillaume Verdon}
\affiliation{X, Mountain View, CA, 94043, USA}
\affiliation{Institute for Quantum Computing, University of Waterloo, ON, Canada}
\affiliation{Department of Applied Mathematics, University of Waterloo, ON, Canada}

\author{Patrick J. Coles}
\affiliation{\losalamos}
\affiliation{Quantum Science Center, Oak Ridge, TN 37931, USA}

\author{M. Cerezo}
\email{cerezo@lanl.gov} 
\affiliation{Information Sciences, Los Alamos National Laboratory, Los Alamos, NM 87545, USA}
\affiliation{Quantum Science Center, Oak Ridge, TN 37931, USA}

\begin{abstract}
Quantum Machine Learning (QML) models are aimed at learning from data encoded in quantum states. Recently, it has been shown that models with little to no inductive biases (i.e., with no assumptions  about the problem embedded in the model) are likely to have trainability and generalization issues, especially for large problem sizes. As such, it is fundamental to develop schemes that encode as much information as available about the problem at hand.  In this work we present a simple, yet powerful, framework where the underlying invariances in the data are used to build QML models that, by construction, respect those symmetries. These so-called group-invariant models produce outputs that remain invariant under the action of any element of the symmetry group $\mathfrak{G}$ associated to the dataset. 
We present theoretical results underpinning the design of $\mathfrak{G}$-invariant models, and exemplify their application through several paradigmatic QML classification tasks including cases when $\mathfrak{G}$ is a continuous Lie group and also when it is a discrete symmetry group. Notably, our framework allows us to recover, in an elegant way, several well known algorithms for the literature, as well as to discover new  ones.  Taken together, we expect that our results will help pave the way towards a more geometric and group-theoretic  approach to QML model design.
\end{abstract}

\maketitle


\section{Introduction}

Symmetries have always held a special place in the imaginarium of scientists seeking to understand the universe through physical theories. As such, it is not strange for a scientist to equate a theory's  beauty and elegance with its symmetry and harmony~\cite{gellmann2007beauty}. Still, the role of symmetries in science is more than simply aesthetic, as in many cases they constitute the underlying force behind a theory. For instance, Galilean invariance is  pivotal in Newton's laws of motion~\cite{newton1999principia}, and Lorentz and gauge invariances were fundamental for Maxwell to unify electricity and magnetism  into the general theory of electromagnetism~\cite{maxwell1865viii}. In the 20th century, symmetries would take the center stage as Einstein's theory of general relativity provided the first geometrization of symmetries~\cite{einstein1922general,gross1996role}. Soon after, Noether's theorem showed a  connection between differentiable symmetries and conserved quantities~\cite{noether1918invariante}, proving that  symmetries have defining implications in  nature.

More recently, the importance of symmetries has been explored in the context of machine learning, and is core to the development of the field of \textit{geometric deep learning}~\cite{bronstein2021geometric}. Here, the key insight was to note that the most successful neural network architectures can be viewed as models with inductive biases that respect the underlying structure and symmetries of the domain over which they act.  The inductive bias refers to the fact that the model explores only a subset of the space of functions due to the assumptions imposed on its definition. Geometric deep learning not only constitutes a unifying mathematical framework for studying neural network architectures, but also provides guidelines to incorporate prior physical (and geometrical) knowledge into new architectures with better generalization performance, more efficient data requirements, as well as favorable optimization landscapes~\cite{cohen2016group,kondor2018generalization,bronstein2021geometric,bogatskiy2022symmetry}.

In this work, we propose to import ideas from the field of geometric deep learning to the realm of Quantum Machine Learning (QML). QML has recently emerged as a leading candidate to make practical use of near-term quantum devices~\cite{biamonte2017quantum,cerezo2020variationalreview}. QML formally generalizes classical machine learning by embedding it into the formalism of quantum mechanics and quantum computation. This formal generalization can lead to practical speedups with the potential to significantly outperform classical machine learning~\cite{huang2021provably}, since quantum computers can efficiently manipulate information stored in quantum states living in exponentially large Hilbert spaces.

Similar to their classical counterparts, the ability of QML models to solve a given task hinges on several factors, with one of the most important being the choice of the model itself. If the inductive biases~\footnote{This is also known as the choice and parameterization of \textit{prior}, in the language of Bayesian theory \cite{battaglia2018relational}.} of a model are uninformed, its expressibility is large, leading to issues such as barren plateaus in the training landscape~\cite{holmes2021connecting,mcclean2018barren,cerezo2020cost,sharma2020trainability,thanasilp2021subtleties,arrasmith2021equivalence}. Adding sharp priors to the model narrows the effective search space, enhancing its trainability and improving its generalization performance~\cite{cong2019quantum,pesah2020absence,volkoff2021large,caro2021generalization}. As such, a great deal of effort has been recently put forward towards designing more problem-specific schemes with strong inductive biases~\cite{larocca2021diagnosing,larocca2021theory,wecker2015progress,gard2020efficient,lee2021towards,tang2019qubit,glick2021covariant,verdon2019quantumgraph,verdon2019quantum}. Despite these efforts, most architectures currently used in the literature remain problem-agnostic, as there is no overarching theoretical framework that provides guidelines for how to embed symmetries of the problem into the QML model.

The main contribution of this work is a series of Propositions (Propositions~\ref{prop:no-post-processing-symmetries},~\ref{prop:no-post-processing-orthogonal},~\ref{prop:no-post-processing-symmetries-two} and~\ref{prop:no-post-processing-orthogonal-two}) characterizing the landscape of possible quantum machine learning models that are invariant under a given symmetry group $\mf{G}$. By identifying different types of strategies leading to group invariant models, we pave the way for a more systematic and efficient symmetry-informed QML model design. The power of our framework is showcased by applying it to classifying datasets based on purity,  time-reversal dynamics, multipartite entanglement, and graph isomorphism, where we are able to recover, in an effortless manner, many celebrated quantum protocols and algorithms including very recent ones~\cite{buhrman2001quantum,cotler2021revisiting,beckey2021computable,brennen2003observable,meyer2002global,rungta2001universal,bhaskara2017generalized,carvalho2004decoherence,wong2001potential,hen2012solving,gaitan2014graph,zick2015experimental,izquierdo2020discriminating}. 
We also discuss the extension of our framework to the design of equivariant quantum neural networks. 
Finally, we highlight the exciting outlook for the new field of \textit{geometric quantum machine learning}, for which our article lays some of the groundwork.

\begin{figure}[t!]
	\includegraphics[width= .9 \columnwidth]{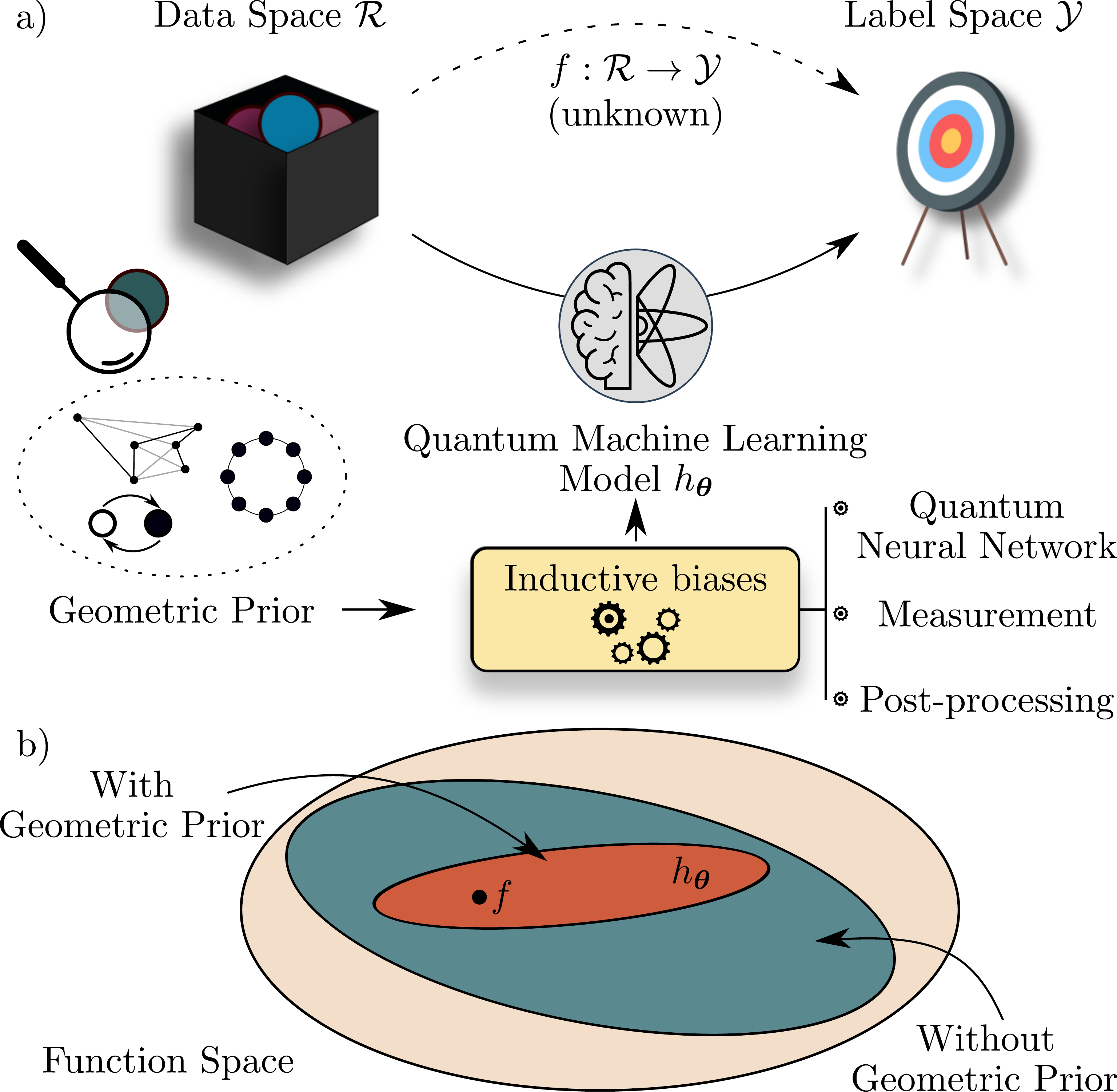}
	\caption{\textbf{The role of inductive biases.} a) In a quantum supervised learning task, the goal is to train a parameterized model $h_{\thv}$ to match the label predictions of an unknown function $f$. The inductive biases represent the assumptions (priors) about our knowledge of how the inputs  are related to the output predictions. These are encoded in the way the model is built. The goal of this work is to determine geometric priors from the underlying symmetries in the quantum data. b) Due to the biases in the model,  $h_{\thv}$ explores only a subset of all possible functions. }
	\label{fig:1}
\end{figure}

\section{Preliminaries}\label{section:framework}

Here we provide the background and definitions needed for our group-invariant QML framework.

\subsection{Symmetry groups in supervised QML}

In this work we consider supervised binary classification tasks on quantum data. We remark, however, that the methods derived here can be readily applied to   more general supervised learning scenarios or to unsupervised learning tasks. In addition, our work applies to classical data that has been encoded into quantum states.

For our purposes, we consider the case where one is given repeated access to a  set of labeled training data from a dataset of the form $\SC=\{(\rho_i,y_i)\}_{i=1}^N$. Here, $\rho_i$ are $n$-qubit quantum states from a data domain $\RC$  in a $d$-dimensional Hilbert space (with $d=2^n$), while  $y_i$ are binary labels from a label domain $\YC=\{0,1\}$. The data in $\SC$ is drawn i.i.d. from a distribution defined over $\RC\times\YC$, and we assume that the labels associated to each quantum state are assigned according to some (unknown) function $f:\RC\rightarrow \YC$, so that $f(\rho_i)=y_i$. As shown in Fig.~\ref{fig:1}(a), the goal is  to train a  parameterized model (or hypothesis) $h_{\thv}$ to produce labels that match those of the target function $f$ with high probability. Here $\thv$ denotes the set of trainable parameters in the model.

For a given dataset, a fundamental question to ask is: \textit{what is the set of unitary operations on the states $\rho_i$ that leave their respective labels $y_i$ unchanged?} Such set of operations forms a group~\footnote{We make here two important remarks. First, $\mathfrak{G}$ must form a group as composing two symmetries leads to a new symmetry. Similarly, symmetric transformations are always invertible and their inverse is a symmetry itself. Second, in more precise terms, $\mathfrak{G}$ is a unitary representation of a group.} $\mathfrak{G} \subseteq \mathbb{U}(d)$, a subset of $\mathbb{U}(d)$ the unitary group of degree $d$. In the following, $\mathfrak{G}$ is referred to as the symmetry group of the dataset. Explicitly, for every element $V$ in $\mathfrak{G}$, the label associated with any transformed state $V\rho_i V\ad$ is exactly the same as the label associated with the original $\rho_i$. Hence, it is natural to require that the model we are training should produce labels that also remain invariant under the action of $\mathfrak{G}$ on the data. To capture such invariance, we introduce the following definition.
\begin{definition}[$\mathfrak{G}$-invariance]\label{def:G-invariance}
    A function $h$ is $\mathfrak{G}$-invariant iff 
    \begin{equation}
        h(V \rho V\ad)=h(\rho)\quad \text{for all $V\in \mathfrak{G}$, $\rho\in\RC$.}
    \end{equation} 
\end{definition}

In principle, such $\mathfrak{G}$-invariance could be heuristically learnt by $h_{\thv}$ via data-augmentation~\cite{bekkers2018roto}, i.e., by including additional training instances of the form $\{V\rho_i V\ad,y_i\}$. However, such effort is undesirable for two main reasons. First, it obviously means an increased algorithmic run-time cost. But second, and most importantly, such invariance learning is not guaranteed to be completely successful (especially when $\mathfrak{G}$ is large or  continuous)~\cite{bronstein2021geometric}. Instead, as shown in Fig.~\ref{fig:2}, our main approach here is to design QML models $h_{\thv}$ that are, \textit{by construction}, $\mathfrak{G}$-invariant for all $\thv$. To achieve this, we introduce biases in the structure of $h_{\thv}$, for instance, by carefully choosing the architecture of the quantum neural network employed and the physical observable measured. 

\begin{figure}[t!]
	\includegraphics[width= 1\columnwidth]{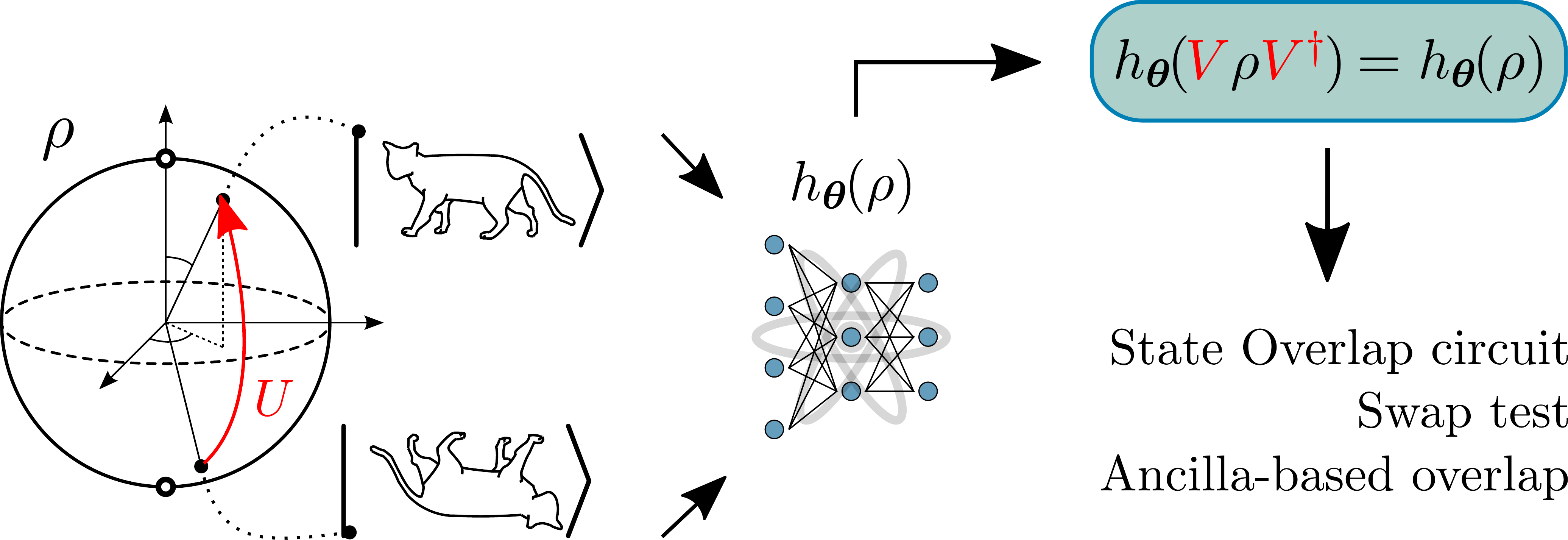}
	\caption{\textbf{Group invariance in QML.} Consider a QML task of classifying single-qubit states according to their purity. The symmetry group  $\mathfrak{G}$ of this task is the unitary group $\mathbb{U}(2)$, as any unitary $V\in\mathbb{U}(2)$  preserves the spectral properties of the data. By imposing  $\mathfrak{G}$-invariance into the quantum model, so that $h_{\thv}(V\rho V\ad)=h_{\thv}(\rho)$ for all $V\in\mathbb{U}(2)$, we can rediscover several known algorithms such as those listed in the figure.}
	\label{fig:2}
\end{figure}

In the context of binary classification, there are two main scenarios to consider. In a first scenario, the data in both classes is invariant under a same symmetry group $\mathfrak{G}$, and thus $h_{\thv}$ needs to be invariant under this sole symmetry group. 
In the second scenario the data in different classes have different symmetries, and we denote as $\mathfrak{G}_0$ and $\mathfrak{G}_1$ the symmetry groups corresponding to data with labels $y_i=0$ and $y_i=1$, respectively. 
In such case, we have the freedom to consider QML models that are either $\mathfrak{G}_0$-invariant, $\mathfrak{G}_1$-invariant, or both. As shown below, it is often convenient to build models that are invariant only under the action of one of the symmetry groups, as this is sufficient for data classification.

\subsection{Conventional and quantum-enhanced experiments}

\begin{figure*}[t!]
	\includegraphics[width= 1 \linewidth]{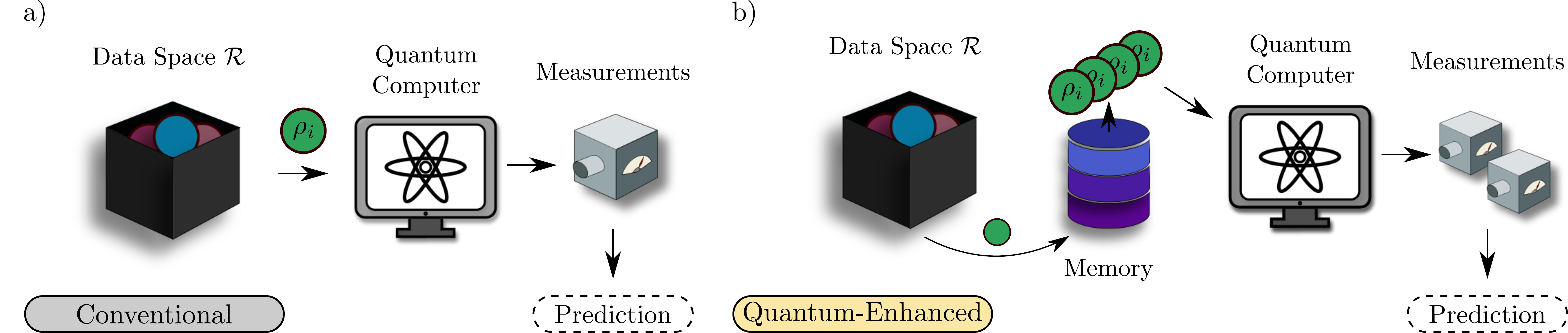}
	\caption{\textbf{Conventional and quantum-enhanced experiments.} a) In a conventional experiment, each data instance $\rho_i$ in the dataset is sent to a quantum device. The state obtained at the output of the quantum computation is measured, with the measurement outcomes used to make predictions. b) In a quantum-enhanced experiment,  a quantum memory allows us to store several copies of each state $\rho_i$ in the dataset. These copies can be simultaneously sent to a quantum device. The state obtained at the output of the quantum computation is measured, with the measurement outcomes used to make predictions.   }
	\label{fig:3}
\end{figure*}

Thus far, we have not defined what constitutes the parameterized model $h_{\thv}$. 
This choice is tied to the physical resources one may have access to, i.e., the way quantum data can be stored, accessed and measured. Since there is a large amount of freedom in this regard, we find it useful to restrict ourselves to two scenarios. Following the demarcation proposed in \cite{huang2021quantum,aharonov2022quantum} we consider two settings, a conventional and a quantum-enhanced one, which are defined as:
\begin{definition}[Conventional experiment]\label{def:conventional_experiment}
In conventional experiments, each data instance $\rho_i$ is processed in a quantum computer and measured individually.
\end{definition}
\begin{definition}[Quantum-enhanced experiment]\label{def:quantum_enhanced}
In quantum-enhanced experiments, multiple copies of each data instance $\rho_i$ can be stored in a quantum memory, and later simultaneously processed and measured in a quantum computer. 
\end{definition}

In both settings, the model predictions are obtained from the quantum device experiment outcomes.
However, as illustrated in Fig.~\ref{fig:3}, the key difference between the classical and quantum-enhanced settings is that, in the latter, the QML model is allowed to act coherently on multiple copies of $\rho_i$. This is in contrast with the conventional setting where the QML model can only operate over a single copy of $\rho_i$ at a time.

\subsection{QML model structure}

Throughout this work we consider models consisting in a quantum neural network $U(\thv)$ (i.e., a parameterized unitary that can be realized on a quantum computer) operating on $k$ copies of an input state $\rho$, followed by a measurement on the resulting state. 
In other words, we work with models belonging to the following hypothesis class.
\begin{hypothesis}~\label{def:hyp_class_1}
We define the Hypothesis Class $\HC_1$ as composed of functions of the form
\begin{align}\label{eq:linear_cost}
    h_{\thv}^{(k)}(\rho)=\Tr[U(\thv)(\rho^{\otimes k}) U\ad(\thv)O]\,,
\end{align}
where $k$ is the number of copies of the data  state $\rho$, $U(\thv)$ is a quantum neural network, and $O$ is a Hermitian operator.
\end{hypothesis}
For $k=1$ copies, the models belonging to the Hypothesis Class $\HC_1$ correspond to those that can be computed on a conventional experiment according to Definition~\ref{def:conventional_experiment}. On the other hand, $k\geq 2$ copies lead to models in quantum-enhanced experiment according to Definition~\ref{def:quantum_enhanced}. 

Arguably, models from the Hypothesis Class~\ref{def:hyp_class_1} are not of the most general form. 
For instance, these could be extended to allow for non-trivial classical post-processing of the measurement outcomes and also to involve more than one circuit or observable.
Still, the Hypothesis Class~\ref{def:hyp_class_1} already encompasses most of the current QML frameworks \cite{cerezo2020variational} and can serve as a basis for more expressive QML models. In the following we restrict our attention to models pertaining to $\HC_1$ and leave the study of more general models for future work. 

\subsection{Classification accuracy}

Let us define some terminology that will allow us to assess the accuracy of a model's classification. First, we remark that we do not consider precision issues when discussing classification accuracy. Recall that the model's predictions in Eq.~\eqref{eq:linear_cost} are expectation values, which in practice need to be estimated via measurements on a quantum computer. Hence, given a finite number of shots (measurement repetitions), these can only be resolved up to some additive errors. However, for the sake of simplicity,  we here assume the limit of zero shot noise (i.e., infinite precision), and we will challenge this assumption when appropriate in the results section. With this remark in hand, consider the following definitions of different degrees of classification accuracy.
\begin{definition}[Classification Accuracy]\label{def:accuracy}
i) We say that a model provides no information that can classify the data if its outputs are always the same irrespective of the label associated to the input quantum state.  ii) We say that a model performs noisy classification if its outputs are the same for some, but not all, data in different classes. iii) We say that a model  perfectly classifies the data if its outputs are never the same for data in different classes.
\end{definition}

We note that in some cases a model can, at best, only perform noisy classification as its accuracy will be fundamentally limited by the distinguishability of the quantum states in the dataset. Note that this is typically not an issue for classical datasets, although the issue does arise for noisy classical data. In some cases, for example as in the time-reversal dataset that we consider below, the quantum data states associated with different output labels are non-orthogonal. In this case, perfect classification cannot be achieved, regardless of the form of the model.

\subsection{Useful definitions}

In this, rather mathematical, section we present definitions that will be used throughout the main text. For further reading we refer  to~\cite{kirillov2008introduction,zeier2011symmetry}. 

While $\mathfrak{G}$ describes the symmetries in the data, it will be crucial to characterize the symmetries of $\mathfrak{G}$ itself. 
The symmetries of a group are captured by the \textit{commutant} 
\begin{equation}
    \CC(\mathfrak{G})=\{W\in\mathbb{C}^{d\times d}\quad|\quad[W,V]=0\,,\quad \forall V\in \mathfrak{G}\}\,,
\end{equation}
which is the vector space of all $d\times d$ complex matrices that commute with $\mathfrak{G}$. More generally one can also consider the space of matrices in $\mathbb{C}^{d^k \times d^k}$  commuting with  the $k$-th power tensor of the elements in $\mathfrak{G}$~\cite{zimboras2015symmetry}.

\begin{definition}[$k$-th order symmetries]\label{def:symmetries}
Given a unitary representation $\mathfrak{G}\subseteq\mathbb{U}(d)$ of a group, its $k$-th order symmetries are
\begin{equation}\label{eq:k_symmetries}
    \CC^{( k)}(\mathfrak{G})=\{W\in \mathbb{C}^{d^k \times d^k}\,\,|\,\,[W,V^{\otimes k}]=0\,,\quad \forall V\in \mathfrak{G}\},
\end{equation}
for all positive integers $k$.
\end{definition}
First-order symmetries ($k=1$) are known as the \textit{ linear symmetries}, while second order ones ($k=2$) are known as \textit{quadratic symmetries} of $\mathfrak{G}$. In general,  there may be $k$-th order symmetries that are not Hermitian (and thus, not physical observables). However, as proved in Appendix~\ref{app:A}, any matrix in $\CC^{( k)}(\mathfrak{G})$ has non-zero projection into the Hermitian subspace of $\CC^{(k)}(\mathfrak{G})$. Hence, one can always associate any non-Hermitian  element in  $\CC^{(k)}(\mathfrak{G})$ to a Hermitian one that also belongs in $\CC^{(k)}(\mathfrak{G})$.

While the $k$-th order symmetries can be defined for any group, in the case when $\mathfrak{G}$ is a Lie group there exists additional structure that one can exploit. In particular, there exists an associated Lie algebra $\mathfrak{g}\subseteq\mathfrak{u}(d)$  such that $e^{\mathfrak{g}}=\mathfrak{G}$. That is, $\mathfrak{g}=\{g\in\mathfrak{u}(d) \,|\, e^g \in\mathfrak{G} \}$. Here, $\mathfrak{u}(d)$ denotes  the set of $d\times d$ skew-symmetric matrices. We also find it convenient to introduce the following definition:
\begin{definition}[Orthogonal complement]\label{def:orthogonal-complement}
Given a Lie algebra $\mathfrak{g}\subseteq \mathfrak{u}(d)$, its orthogonal complement with respect to the Hilbert-Schmidt norm is defined as 
\begin{align}\label{eq:orthogonal_complement}
    \mathfrak{g}^{\perp}=\{h\in\mathfrak{u}(d)\quad|\quad\Tr[h\ad g]=0\,,\quad \forall g\in \mathfrak{g}\}.
\end{align}
\end{definition}
 \noindent Note that $\mathfrak{g}^{\perp}$ is not a Lie algebra.

\section{General Results for $\mathfrak{G}$-invariance}\label{section:general}

In this section we determine conditions leading to  models that are $\mathfrak{G}$-invariant by design. These results are stated in a general problem-agnostic way and will be applied to specific datasets in Secs.~\ref{section:examples} and \ref{section:discrete}.

\subsection{A single symmetry group}
Let us first consider the case when there is a single symmetry group $\mathfrak{G}$ associated with all the instances in the dataset. We aim at finding models  $h_{\thv}^{(k)}$ from Hypothesis Class $\HC_1$ that are $\mathfrak{G}$-invariant, i.e., models such that $h_{\thv}^{(k)}(V\rho V\ad)=h_{\thv}^{(k)}(\rho)$ for all $V\in\mathfrak{G}$ and choice of parameters $\thv$. Defining
\begin{equation}
   \widetilde{O}(\thv) = U\ad(\thv)O U(\thv)\,,
\end{equation}
so that $h_{\thv}^{(k)}(\rho)=\Tr[\rho^{\otimes k}\widetilde{O}(\thv)]$, we explicitly have
\begin{align}
    h_{\thv}^{(k)}(V \rho V\ad)&=\Tr[ V^{\otimes k}\rho^{\otimes k} (V\ad)^{\otimes k} \widetilde{O}(\thv)]\label{eq:for-invariance}\,.
\end{align}
Evidently, the model will be invariant under $\mathfrak{G}$ if $[\widetilde{O}(\thv),V^{\otimes k}]=0$ for all $V\in\mathfrak{G}$. 
Thus, the following proposition holds:
\begin{proposition}\label{prop:no-post-processing-symmetries}
Let $h_{\thv}^{(k)}\in\HC_1$ be a model in Hypothesis Class~\ref{def:hyp_class_1}, and $\mathfrak{G}$ be the symmetry group associated with the dataset. The model will be $\mathfrak{G}$-invariant if  $\widetilde{O}(\thv)\in \CC^{( k)}(\mathfrak{G})$.
\end{proposition}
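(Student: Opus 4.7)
The plan is to verify the claimed sufficient condition by direct manipulation of the trace expression appearing in Eq.~\eqref{eq:for-invariance}. Starting from the form $h_{\thv}^{(k)}(\rho)=\Tr[\rho^{\otimes k}\widetilde{O}(\thv)]$ of models in Hypothesis Class~\ref{def:hyp_class_1}, I would substitute $\rho\mapsto V\rho V\ad$ for an arbitrary $V\in\mathfrak{G}$ and use $(V\rho V\ad)^{\otimes k}=V^{\otimes k}\rho^{\otimes k}(V\ad)^{\otimes k}$ together with $(V\ad)^{\otimes k}=(V^{\otimes k})\ad$ to land on the right-hand side of Eq.~\eqref{eq:for-invariance}.

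Next, I would invoke the hypothesis $\widetilde{O}(\thv)\in\CC^{(k)}(\mathfrak{G})$. By Definition~\ref{def:symmetries}, this means $[\widetilde{O}(\thv),V^{\otimes k}]=0$ for every $V\in\mathfrak{G}$, so $\widetilde{O}(\thv)$ commutes past $V^{\otimes k}$ inside the trace. Combining with the cyclicity of the trace and the unitarity condition $V^{\otimes k}(V\ad)^{\otimes k}=\id$ then collapses the expression to $\Tr[\rho^{\otimes k}\widetilde{O}(\thv)]=h_{\thv}^{(k)}(\rho)$. Since $V$ was arbitrary, Definition~\ref{def:G-invariance} of $\mathfrak{G}$-invariance is satisfied for all $\thv$.

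There is no real technical obstacle: the argument is a one-line calculation given the premise, and the only subtle point worth highlighting in the writeup is that the statement provides a \emph{sufficient} condition and not a necessary one, so I would not attempt to prove a converse. I would also briefly note for clarity that $\widetilde{O}(\thv)=U\ad(\thv)OU(\thv)$ being Hermitian is inherited from $O$ being Hermitian, which is consistent with (though not required by) membership in $\CC^{(k)}(\mathfrak{G})$, thanks to the Hermitian-projection observation mentioned in the discussion following Definition~\ref{def:symmetries}.
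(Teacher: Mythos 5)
Your proposal is correct and follows essentially the same route as the paper's proof: both reduce the claim to the observation that $[\widetilde{O}(\thv),V^{\otimes k}]=0$ lets the conjugation by $V^{\otimes k}$ cancel inside the trace in Eq.~\eqref{eq:for-invariance}. Your write-up merely spells out the cyclicity and unitarity steps that the paper leaves implicit, and your side remarks on sufficiency and Hermiticity are consistent with the paper's surrounding discussion.
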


\begin{proof}
The proof of this proposition follows from Definition~\ref{def:symmetries}. If $\widetilde{O}(\thv)$ belongs to the vector space of the $k$-th order symmetries $\CC^{( k)}(\mathfrak{G})$ then,  $[\widetilde{O}(\thv),V^{\otimes k}]=0$, and thus $h_{\thv}^{(k)}(V \rho V\ad)=h_{\thv}^{(k)}(\rho)$ for all $V\in\mathfrak{G}$. 
\end{proof}

Furthermore, as previously discussed,  we can guarantee that $\widetilde{O}(\thv)$ in Proposition~\ref{prop:no-post-processing-symmetries} can always be taken as a Hermitian operator and thus as an observable. 

Complementary to Proposition~\ref{prop:no-post-processing-symmetries} we now prescribe a second way of ensuring $\mathfrak{G}$-invariance of the model when $\mathfrak{G}$ forms a Lie group. This is achieved when the operator $\widetilde{O}(\thv)$ can be taken orthogonal to $ (V \rho V\ad)^{\otimes k} $ for all $V$ in $\mathfrak{G}$ and for all $\rho$ in $\SC$. We formalize this statement in the following proposition, proved in Appendix~\ref{app:B}.

\begin{proposition}\label{prop:no-post-processing-orthogonal}
Let $h_{\thv}^{(k)}\in\HC_1$ be a model in  Hypothesis Class~\ref{def:hyp_class_1}. Then, let $\mathfrak{G}$ be the symmetry Lie group associated with the dataset, and let $\mathfrak{g}\subseteq\mathfrak{u}(d)$ be its Lie algebra with $i\id\in\mathfrak{g}$. 
The model will be $\mathfrak{G}$-invariant when $\rho \in i \mathfrak{g}$ and $\widetilde{O}(\thv)\in{\rm span}(\{A_j\otimes A_{\overline{j}}\}_j) $. Here, $A_j\in i\mathfrak{g}^{\perp}$, an element of the orthogonal complement of $\mathfrak{g}$, is a Hermitian operator acting on the $j$-th copy of $\rho$, and $A_{\overline{j}}$ is an operator acting on all copies of $\rho$ but the $j$-th one.
\end{proposition}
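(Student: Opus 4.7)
The plan is to prove the stronger statement that, under the stated hypotheses, $h_{\thv}^{(k)}(V\rho V\ad)$ vanishes identically for every $V \in \mathfrak{G}$, from which $\mathfrak{G}$-invariance follows trivially (both $h_{\thv}^{(k)}(\rho)$ and $h_{\thv}^{(k)}(V\rho V\ad)$ then equal zero). The argument rests on two structural facts: the closure of $i\mathfrak{g}$ under $\mathfrak{G}$-conjugation, and the Hilbert--Schmidt orthogonality between $i\mathfrak{g}$ and $i\mathfrak{g}^{\perp}$ inherited from Definition~\ref{def:orthogonal-complement}.

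First I would establish that $V \rho V\ad \in i\mathfrak{g}$ for every $V \in \mathfrak{G}$. Writing $\rho = ig$ with $g \in \mathfrak{g}$ (the hypothesis $i\id \in \mathfrak{g}$ is precisely what allows unit-trace Hermitian operators to lie in $i\mathfrak{g}$), one has $V \rho V\ad = i\,\mathrm{Ad}_V(g)$, where $\mathrm{Ad}_V(g) = V g V\ad \in \mathfrak{g}$ by the standard fact that the Lie algebra of a Lie group is invariant under the group's adjoint action on itself. Setting $\sigma \equiv V\rho V\ad \in i\mathfrak{g}$, it remains to show $\Tr\!\big[\sigma^{\otimes k}\, \widetilde{O}(\thv)\big] = 0$.

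By linearity it suffices to treat a single generator $A_j \otimes A_{\overline{j}}$ of $\widetilde{O}(\thv)$. The tensor-product structure gives
\begin{equation*}
\Tr\!\big[\sigma^{\otimes k}(A_j \otimes A_{\overline{j}})\big] = \Tr[\sigma A_j]\,\Tr\!\big[\sigma^{\otimes (k-1)} A_{\overline{j}}\big].
\end{equation*}
Writing $\sigma = ig'$ and $A_j = ih$ with $g' \in \mathfrak{g}$, $h \in \mathfrak{g}^{\perp}$, and using $h\ad = -h$, one obtains
\begin{equation*}
\Tr[\sigma A_j] = -\Tr[h g'] = \Tr[h\ad g'] = 0,
\end{equation*}
by Definition~\ref{def:orthogonal-complement}. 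Hence each generator contributes zero, and so $h_{\thv}^{(k)}(V\rho V\ad) = 0$ for all $V \in \mathfrak{G}$; the claim follows.

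The main technical subtlety I anticipate is the adjoint-stability step. For connected $\mathfrak{G}$, expressing every group element as a product of exponentials $e^{g_\ell}$ with $g_\ell \in \mathfrak{g}$ reduces $\mathrm{Ad}_\mathfrak{G}(\mathfrak{g}) \subseteq \mathfrak{g}$ to the bracket closure $[\mathfrak{g},\mathfrak{g}] \subseteq \mathfrak{g}$; for disconnected $\mathfrak{G}$ one additionally invokes that each coset of the identity component acts on $\mathfrak{g}$ by a Lie algebra automorphism. A secondary sanity check worth stating is that the argument uses only linearity of $h_{\thv}^{(k)}$ in $\rho$ and makes no purity assumption, so the proposition applies to arbitrary mixed states whose Hermitian representative lies in $i\mathfrak{g}$.
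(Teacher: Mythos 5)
Your proof is correct and follows essentially the same route as the paper's: both reduce to showing $\Tr[(V\rho V\ad) A_j]=0$ via the Hilbert--Schmidt orthogonality of $i\mathfrak{g}$ and $i\mathfrak{g}^{\perp}$ together with closure of $i\mathfrak{g}$ under conjugation by $\mathfrak{G}$, then extend by linearity over the span. Your treatment of the adjoint-stability step is somewhat more careful than the paper's, but the argument is the same.
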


Note that in Proposition~\ref{prop:no-post-processing-orthogonal} we have assumed that $i\id\in\mathfrak{g}$, where $\id$ denotes the $d\times d$ identity matrix. However, it could happen that $i\id$ is in $\mathfrak{g}^{\perp}$ instead. In this case, the proposition will hold if   $\rho \in i \mathfrak{g}^{\perp}$ and    $A_j\in i\mathfrak{g}$, as $\rho$ needs to have support on a vector space containing the identity.

\subsection{Multiple symmetry groups}
Let us now consider the case when each of the two classes in the dataset have a different symmetry group  associated to them, which we denote as $\mathfrak{G}_0$ and $\mathfrak{G}_1$.  The concepts used in the previous section to obtain group-invariant models, i.e., commutant and orthogonal complement, can also be leveraged to derive conditions under which a model $h_{\thv}^{(k)}$ is $\mathfrak{G}_0$-invariant,  $\mathfrak{G}_1$-invariant, or both. The following proposition, proved in Appendix~\ref{app:C}, generalizes Proposition~\ref{prop:no-post-processing-symmetries} to the case of two symmetry groups.
\begin{proposition}\label{prop:no-post-processing-symmetries-two}
Let $h_{\thv}^{(k)}\in\HC_1$ be a model in Hypothesis Class~\ref{def:hyp_class_1}, and let $\mathfrak{G}_0$  and $\mathfrak{G}_1$ be the symmetry groups associated with the dataset. 
The model will be $\mathfrak{G}_0$- and $\mathfrak{G}_1$-invariant if $\widetilde{O}(\thv)\in \CC^{( k)}(\mathfrak{G}_0)$ and $\widetilde{O}(\thv)\in \CC^{( k)}(\mathfrak{G}_1)$.
In addition, the model will be $\mathfrak{G}_{0}$-invariant but not necessarily  $\mathfrak{G}_{1}$-invariant if $\widetilde{O}(\thv)\in \CC^{( k)}(\mathfrak{G}_{0})$ but $\widetilde{O}(\thv)\not\in\CC^{( k)}(\mathfrak{G}_{1})$.
\end{proposition}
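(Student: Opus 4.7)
The plan is to reduce this proposition to two independent applications of Proposition~\ref{prop:no-post-processing-symmetries}, exploiting the fact that $\mathfrak{G}_0$-invariance and $\mathfrak{G}_1$-invariance are logically independent assertions that each depend only on commutation with their respective group. From Eq.~\eqref{eq:for-invariance}, the behavior of $h_{\thv}^{(k)}$ under the action of any group element $V$ is governed entirely by the commutator $[\widetilde{O}(\thv),V^{\otimes k}]$, so the sufficient conditions for invariance under $\mathfrak{G}_0$ and under $\mathfrak{G}_1$ decouple completely and can be verified independently.

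First, I would establish the conjunctive case. Assuming $\widetilde{O}(\thv) \in \CC^{(k)}(\mathfrak{G}_0)$, Proposition~\ref{prop:no-post-processing-symmetries} applied with symmetry group $\mathfrak{G}_0$ immediately yields $h_{\thv}^{(k)}(V \rho V\ad) = h_{\thv}^{(k)}(\rho)$ for every $V \in \mathfrak{G}_0$ and every $\rho$ in the relevant data domain. The identical argument, with the roles of $\mathfrak{G}_0$ and $\mathfrak{G}_1$ exchanged, gives the analogous statement for all $V \in \mathfrak{G}_1$. Taking the conjunction of these two facts delivers the simultaneous $\mathfrak{G}_0$- and $\mathfrak{G}_1$-invariance.

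Next, for the asymmetric case, the $\mathfrak{G}_0$-invariance again follows from a direct application of Proposition~\ref{prop:no-post-processing-symmetries}. What remains is the weak claim that $\mathfrak{G}_1$-invariance is \emph{not guaranteed}. I would justify this by noting that $\widetilde{O}(\thv) \notin \CC^{(k)}(\mathfrak{G}_1)$ means there exists some $V_1 \in \mathfrak{G}_1$ with $[\widetilde{O}(\thv),V_1^{\otimes k}] \neq 0$, and one can then exhibit a state $\rho$ in the data domain for which the trace in Eq.~\eqref{eq:for-invariance} with $V = V_1$ differs from $\Tr[\rho^{\otimes k}\widetilde{O}(\thv)]$. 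This step rests on the observation that a nonzero operator cannot vanish in expectation against every product state of the form $\rho^{\otimes k}$, so the failure of the commutant condition translates into at least one concrete violation of invariance.

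The main (and indeed only) subtlety is being precise about the phrase ``not necessarily'': the proposition does not claim that $\mathfrak{G}_1$-invariance must fail in the asymmetric case, since for example $\rho$ could happen to be fixed by $\mathfrak{G}_1$ for accidental reasons, but rather that the sufficient condition of Proposition~\ref{prop:no-post-processing-symmetries} is no longer met and so invariance cannot be guaranteed at the level of the hypothesis class. Beyond this bit of bookkeeping, no new technical ingredients are required; the proposition is essentially a clean extension of Proposition~\ref{prop:no-post-processing-symmetries} to the two-symmetry-group setting.
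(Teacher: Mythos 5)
Your proof takes the same route as the paper's: both cases are obtained by applying Proposition~\ref{prop:no-post-processing-symmetries} separately to $\mathfrak{G}_0$ and $\mathfrak{G}_1$, and the ``not necessarily'' clause is read as the mere failure of the sufficient condition rather than a positive claim of non-invariance. In that respect the proposal is correct and complete, and your final paragraph correctly identifies that this weak reading is all the proposition asserts.

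One caveat: the auxiliary argument in your third paragraph, where you propose to upgrade $\widetilde{O}(\thv)\notin\CC^{(k)}(\mathfrak{G}_1)$ into an actual violation of invariance, rests on a false premise. You claim that a nonzero operator cannot vanish in expectation against every state of the form $\rho^{\otimes k}$, but for $k\geq 2$ this fails: the span of $\{\rho^{\otimes k}\}$ is only the permutation-symmetric part of operator space, so any nonzero Hermitian operator orthogonal to it, e.g.\ $A=B\otimes C-C\otimes B$ with $B,C$ Hermitian and linearly independent, satisfies $\Tr[(\rho\otimes\rho)A]=\Tr[\rho B]\Tr[\rho C]-\Tr[\rho C]\Tr[\rho B]=0$ for every $\rho$. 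There is also the further restriction that $\rho$ must lie in the data domain $\RC$, which need not exhaust all states. So non-membership in the commutant does not by itself produce a witness state, and a model with $\widetilde{O}(\thv)\notin\CC^{(k)}(\mathfrak{G}_1)$ can in fact still be $\mathfrak{G}_1$-invariant. Since the proposition only claims ``not necessarily invariant,'' this does not damage your proof of the stated result, but the stronger step should be deleted rather than left as an asserted (and incorrect) lemma.
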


Conversely, while not stated explicitly in Proposition~\ref{prop:no-post-processing-symmetries-two}, the model will be $\mathfrak{G}_{1}$-invariant but not necessarily  $\mathfrak{G}_{0}$-invariant if $\widetilde{O}(\thv)\in \CC^{( k)}(\mathfrak{G}_{1})$ but $\widetilde{O}(\thv)\not\in\CC^{( k)}(\mathfrak{G}_{0})$.

Additionally, when $\mathfrak{G}_{0}$ and $\mathfrak{G}_{1}$ are Lie groups, with associated Lie algebras $\mathfrak{g}_{0}$ and $\mathfrak{g}_{1}$,  we can generalize Proposition~\ref{prop:no-post-processing-orthogonal} to the case of two symmetry groups. Then, the following proposition, proved in Appendix~\ref{app:C}, holds.
\begin{proposition}\label{prop:no-post-processing-orthogonal-two}
Let $h_{\thv}^{(k)}\in\HC_1$ be a model in Hypothesis Class~\ref{def:hyp_class_1}, and let $\mathfrak{G}_0$  and $\mathfrak{G}_1$ be the symmetry Lie groups associated with the dataset, with $\mathfrak{g}_0$ and $\mathfrak{g}_1$ their respective Lie algebras with $i\id\in\mathfrak{g}_0,\mathfrak{g}_1$. The model will be $\mathfrak{G}_0$-invariant and $\mathfrak{G}_1$-invariant when $\rho \in i \mathfrak{g}_0,i \mathfrak{g}_1$ and when  $\widetilde{O}(\thv)\in{\rm span}(\{A_j\otimes A_{\overline{j}}\}_j) $. Here, $A_j\in i\mathfrak{g}_0^{\perp},i\mathfrak{g}_1^{\perp}$ is a Hermitian operator acting on the $j$-th copy of $\rho$, $A_{\overline{j}}$ is an operator acting on all copies of $\rho$ but the $j$-th one . In addition, the model will be $\mathfrak{G}_{0(1)}$-invariant but not necessarily  $\mathfrak{G}_{1(0)}$-invariant when $\rho \in i \mathfrak{g}_0,i \mathfrak{g}_i$  and $A_j\in i\mathfrak{g}_0^{\perp}$ but $A_j\not\in i\mathfrak{g}_1^{\perp}$. 
\end{proposition}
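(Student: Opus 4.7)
The plan is to derive Proposition~\ref{prop:no-post-processing-orthogonal-two} by applying Proposition~\ref{prop:no-post-processing-orthogonal} separately to each of the two symmetry groups. The hypotheses of Proposition~\ref{prop:no-post-processing-orthogonal-two} are deliberately arranged so that the requirements of Proposition~\ref{prop:no-post-processing-orthogonal} hold simultaneously (for the joint-invariance claim) or partially (for the mixed claim) with respect to $\mathfrak{G}_0$ and $\mathfrak{G}_1$.

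For the joint invariance statement, I would verify that with $\mathfrak{G} = \mathfrak{G}_0$ all premises of Proposition~\ref{prop:no-post-processing-orthogonal} are met: by hypothesis $\rho \in i\mathfrak{g}_0$, every $A_j \in i\mathfrak{g}_0^{\perp}$, and $\widetilde{O}(\thv) \in {\rm span}(\{A_j \otimes A_{\overline{j}}\}_j)$. Proposition~\ref{prop:no-post-processing-orthogonal} then delivers $h_{\thv}^{(k)}(V\rho V\ad) = h_{\thv}^{(k)}(\rho)$ for every $V \in \mathfrak{G}_0$. The identical argument, with the subscript $0$ replaced by $1$ throughout, yields $\mathfrak{G}_1$-invariance, since the analogous hypotheses $\rho \in i\mathfrak{g}_1$ and $A_j \in i\mathfrak{g}_1^{\perp}$ also hold by assumption, and the span condition on $\widetilde{O}(\thv)$ is symmetric in the two groups. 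For the partial statement, $\mathfrak{G}_0$-invariance is obtained in exactly the same way from Proposition~\ref{prop:no-post-processing-orthogonal} applied to $\mathfrak{G}_0$, since the conditions $\rho \in i\mathfrak{g}_0$ and $A_j \in i\mathfrak{g}_0^{\perp}$ are precisely what is required.

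The delicate point is the clause ``not necessarily $\mathfrak{G}_1$-invariant'', since violation of a sufficient condition does not by itself rule out invariance; a witness of failure must be exhibited. I would construct a minimal counterexample with $k=1$ and $\widetilde{O}(\thv) = A_1$: decompose $A_1 = A_1^{\parallel} + A_1^{\perp}$ with $A_1^{\parallel} \in i\mathfrak{g}_1$ and $A_1^{\perp} \in i\mathfrak{g}_1^{\perp}$, so that by hypothesis $A_1^{\parallel} \neq 0$. Since the adjoint action of $\mathfrak{G}_1$ preserves $i\mathfrak{g}_1$ and is generically nontrivial on non-central elements, one can pick $V \in \mathfrak{G}_1$ with $V A_1^{\parallel} V\ad \neq A_1^{\parallel}$; then choosing $\rho \in i\mathfrak{g}_1$ proportional to the nonzero difference $V A_1^{\parallel} V\ad - A_1^{\parallel}$ produces $\Tr[V\rho V\ad A_1^{\parallel}] \neq \Tr[\rho A_1^{\parallel}]$. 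Because Proposition~\ref{prop:no-post-processing-orthogonal} ensures the $A_1^{\perp}$ contribution is invariant, a net mismatch remains and $\mathfrak{G}_1$-invariance fails. The main obstacle I expect is that this construction presumes $A_1^{\parallel}$ can be chosen non-central in $i\mathfrak{g}_1$: if the only freedom allowed by $A_j \in i\mathfrak{g}_0^{\perp}$, $A_j \not\in i\mathfrak{g}_1^{\perp}$ happened to lie entirely in the center of $i\mathfrak{g}_1$, the adjoint action would be trivial and invariance would hold accidentally. One must therefore argue that the admissible space generically contains non-central elements, which is the natural reading of the proposition.
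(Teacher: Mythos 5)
Your proposal is correct and follows essentially the same route as the paper: Appendix~C proves Proposition~\ref{prop:no-post-processing-orthogonal-two} simply by applying the single-group result (Proposition~\ref{prop:no-post-processing-orthogonal}) to $\mathfrak{G}_0$ and $\mathfrak{G}_1$ separately, exactly as you do for the joint and partial invariance claims. The paper does \emph{not} exhibit a witness for the ``not necessarily invariant'' clause --- it reads that phrase merely as the failure of the sufficient condition --- so your counterexample construction is extra; if you keep it, note that a density matrix cannot be proportional to the traceless operator $VA_1^{\parallel}V\ad - A_1^{\parallel}$, so you should instead take $\rho = \id/d + \epsilon D$ with $D$ a suitably chosen traceless element of $i\mathfrak{g}_1$ having nonzero overlap with $V\ad A_1^{\parallel}V - A_1^{\parallel}$ (admissible since $i\id\in\mathfrak{g}_1$), which also sidesteps the question of whether your particular choice of $\rho$ yields a nonzero difference.
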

In Proposition~\ref{prop:no-post-processing-orthogonal-two} we have assumed that $i\id$ belongs to $\mathfrak{g}_0$ and $\mathfrak{g}_1$. However, if $i\id$ instead belongs to $\mathfrak{g}^{\perp}_i$ (with $i=0,1$), then the proposition will hold by replacing $\mathfrak{g}_i $ by $\mathfrak{g}^{\perp}_i$, and conversely. 

Propositions~\ref{prop:no-post-processing-symmetries}--\ref{prop:no-post-processing-orthogonal-two} provide conditions under which one can guarantee that a QML model in Hypothesis Class~\ref{def:hyp_class_1} is $\mathfrak{G}$-invariant. While the results presented in this section are valid for the case when there are two symmetry groups, the previous propositions can be readily extended to more general scenarios (such as multi-class classification), where one has a set $\{\mathfrak{G}_i\}_i$ of symmetry groups. For instance, one could generalize Proposition~\ref{prop:no-post-processing-symmetries-two} to show that a model will be invariant under all symmetry groups $\mathfrak{G}_i$ if $\widetilde{O}(\thv)\in \CC^{( k)}(\mathfrak{G}_i)$ for all $i$.

\section{Lie Group-Invariant Models}\label{section:examples}

We now apply the general results presented in the previous section to identify $\mathfrak{G}$-invariant models that can classify states originating from several paradigmatic quantum datasets whose invariances are captured by  Lie groups. These include the  \textit{purity dataset} (Sec.~\ref{sec:purity}), the \textit{time-reversal dataset} (Sec.~\ref{sec:time-reversal}), and the \textit{multipartite entanglement dataset} (Sec.~\ref{sec:entanglement}). Our results are stated in the form of theorems.  For pedagogical reasons, we include in the main text the proofs for most of these theorems, as they provide a constructive introduction to our framework.

\subsection{Purity dataset}
\label{sec:purity}

As a first application, we consider the  QML task of classifying $n$-qubit states according to their purity~\cite{huang2021quantum}. Given a dataset $\SC=\{(\rho_i,y_i)\}_{i=1}^N$, we want to discriminate those $\rho_i$ that are pure from those that are not. That is, we assign labels 
\begin{equation}
    y_i = 
    \begin{cases}
        1 & \text{if } \Tr[\rho_i^2]=1, \\
        0 & \text{if } \Tr[\rho_i^2]=b<1\,
    \end{cases}
\end{equation}
to states $\rho_i$ according to values of their purities $\Tr[\rho_i^2]$.

The symmetry group $\mathfrak{G}$ associated with the data in \textit{both} classes is the group of unitaries $\mathbb{U}(d)$. This follows from the fact that unitaries preserve the spectral properties of quantum states, and thus their purity remain unchanged under the action of $\mathbb{U}(d)$.

\subsubsection{Conventional experiments}
Let us first consider the case of conventional experiments (see Definition~\ref{def:conventional_experiment}), i.e., when the model $h_{\thv}^{(1)}$ in Eq.~\eqref{eq:linear_cost} has access to $k=1$ copy of each data at a time. For such a case, we can derive the following theorem.

\begin{theorem}\label{theo:purity-conventional}
Let $h_{\thv}^{(1)}\in\HC_1$ be a model in Hypothesis Class~\ref{def:hyp_class_1}, computable in a conventional experiment. There exists no quantum neural network $U(\thv)$ and operator $O$ such that $h_{\thv}^{(1)}$ is invariant under the action of $\mathbb{U}(d)$ and can classify (i.e., provide any relevant information about) the data in the purity dataset. 
\end{theorem}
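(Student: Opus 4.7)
\textbf{Proof plan for Theorem~\ref{theo:purity-conventional}.} The plan is to leverage the fact that $\mathfrak{G}=\mathbb{U}(d)$ acts irreducibly on the single-copy Hilbert space, so that its first-order commutant $\CC^{(1)}(\mathbb{U}(d))$ collapses to the span of the identity, and then show that any $\mathbb{U}(d)$-invariant model in $\HC_1$ with $k=1$ must be a constant function of $\rho$.

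First, I would write out the model explicitly as $h_{\thv}^{(1)}(\rho)=\Tr[\rho\,\widetilde{O}(\thv)]$ with $\widetilde{O}(\thv)=U\ad(\thv)OU(\thv)$, and invoke $\mathbb{U}(d)$-invariance from Definition~\ref{def:G-invariance} to write, for every $V\in\mathbb{U}(d)$ and every $\rho\in\RC$,
\begin{equation}
h_{\thv}^{(1)}(\rho)=h_{\thv}^{(1)}(V\rho V\ad)=\Tr[V\rho V\ad\,\widetilde{O}(\thv)].
\end{equation}
Since the left-hand side is independent of $V$, I would average the right-hand side over the Haar measure on $\mathbb{U}(d)$ and pull the average inside the trace, giving
\begin{equation}
h_{\thv}^{(1)}(\rho)=\Tr\!\left[\left(\int_{\mathbb{U}(d)}\! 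V\rho V\ad\, dV\right)\widetilde{O}(\thv)\right].
\end{equation}

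Next, I would apply the standard first-moment identity from Schur--Weyl duality (or directly Schur's lemma, since the defining representation of $\mathbb{U}(d)$ is irreducible): $\int_{\mathbb{U}(d)} V\rho V\ad\, dV=\Tr(\rho)\,\id/d=\id/d$. Substituting this back yields
\begin{equation}
h_{\thv}^{(1)}(\rho)=\frac{\Tr[\widetilde{O}(\thv)]}{d}=\frac{\Tr[O]}{d},
\end{equation}
where the last equality uses the cyclicity of the trace and unitarity of $U(\thv)$. Equivalently, one may rephrase this step through Proposition~\ref{prop:no-post-processing-symmetries} by noting that the Haar twirl projects $\widetilde{O}(\thv)$ onto $\CC^{(1)}(\mathbb{U}(d))=\spn(\id)$, and this is the \emph{only} $\mathbb{U}(d)$-invariant observable, so $\widetilde{O}(\thv)$ must be proportional to $\id$ on the support of any sufficiently rich data domain.

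Finally, since $h_{\thv}^{(1)}(\rho)=\Tr[O]/d$ depends on neither $\thv$ nor $\rho$, the model produces the same output on every input regardless of its label, and therefore by item (i) of Definition~\ref{def:accuracy} provides no information that can classify the data. I do not anticipate a serious technical obstacle here; the only subtlety is being careful that the conclusion does not rely on a converse to Proposition~\ref{prop:no-post-processing-symmetries} that is not stated in the excerpt, and the Haar-twirling route avoids this by working directly with the invariance relation rather than assuming $\widetilde{O}(\thv)\in\CC^{(1)}(\mathfrak{G})$ a priori.
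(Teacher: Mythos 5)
Your proposal is correct and follows essentially the same route as the paper's proof: the decisive step in both is the Haar twirl $\int_{\mathbb{U}(d)}V\rho V\ad\,d\mu(V)=\id/d$ applied to the invariance relation, forcing $h_{\thv}^{(1)}$ to be the constant $\Tr[O]/d$ and hence uninformative by Definition~\ref{def:accuracy}(i). The only difference is presentational — the paper first enumerates the candidate models from Propositions~\ref{prop:no-post-processing-symmetries} and~\ref{prop:no-post-processing-orthogonal} before running the same averaging argument to rule out all others, whereas you go straight to the twirl, which subsumes that case analysis.
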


\begin{proof}
The strategy of this proof is as follows. First, we identify the possible $\mathfrak{G}$-invariant models arising from Propositions~\ref{prop:no-post-processing-symmetries} and~\ref{prop:no-post-processing-orthogonal}. 
Then, we show that these models cannot be used to perform classification 
for the purity dataset. 
Finally, we prove that no other $\mathfrak{G}$-invariant model within $\HC_1$ (with $k=1$ copies) exist.

Recall from  Proposition~\ref{prop:no-post-processing-symmetries} that a model is $\mathfrak{G}$-invariant under $\mathbb{U}(d)$ if  $\widetilde{O}(\thv)$ is in the commutant of $\mathbb{U}(d)$. 
Since $\mathfrak{G}=\mathbb{U}(d)$ is irreducible~\footnote{A representation is irreducible if it cannot be further decomposed into a direct sum of representations.} in $\mathbb{C}^{d \times d}$, we know from Schur's Lemma~\cite{kirillov2008introduction} that 
\begin{equation}
    \CC(\mathfrak{G})={\rm span}(\{\id\})\,.
\end{equation}
 It follows that if $\widetilde{O}(\thv)$ is in $\CC(\mathfrak{G})$, it takes the form $\widetilde{O}(\thv)=\lambda \id$. Moreover, we impose $\lambda \in\mathbb{R}$  to ensure the Hermiticity of $\widetilde{O}(\thv)$. This yields a constant model prediction $h_{\thv}^{(1)}(\rho)=\lambda$ for any $\rho$. 
Hence, the $\mathfrak{G}$-invariant models of Proposition~\ref{prop:no-post-processing-symmetries} do not provide any information about the purity of a state, and thus cannot classify the data.  

Let us now analyze the models arising from Proposition~\ref{prop:no-post-processing-orthogonal}. Since $\mathfrak{g}=\mathfrak{u}(d)$, we have 
\begin{equation}\label{eq:perp:unitary}
\mathfrak{g}^\perp=\{\vec{0}\}\,,
\end{equation}
where $\vec{0}$ denotes the $d \times d$ null  matrix. Hence, if $\widetilde{O}(\thv)\in i\mathfrak{g}^\perp$, then $h_{\thv}^{(1)}(\rho)=0$ for any $\rho$. This shows that the  $\mathfrak{G}$-invariant models arising from  Proposition~\ref{prop:no-post-processing-orthogonal} do not provide any information about the purity of a state and cannot classify the data. 

So far, we have seen that $\mathfrak{G}$-invariant models obtained by applying Propositions~\ref{prop:no-post-processing-symmetries} and~\ref{prop:no-post-processing-orthogonal} do not allow for classification of the purity dataset. Still, this does not preclude the existence of other $\mathfrak{G}$-invariant models within  $\HC_1$ that may be adequate for classification. 
However, we now prove that no other $\mathfrak{G}$-invariant models exist, beyond those already considered. 
Given that $h_{\thv}^{(1)}(V \rho V\ad)=h_{\thv}^{(1)}(\rho)$ should be true for any unitary $V$ in $\mathbb{U}(d)$, the latter also needs to hold when uniformly averaging $h_{\thv}^{(1)}(V \rho V\ad)$ over all possible $V$ in $\mathfrak{G}$. Namely, we require that $\mathbb{E}_{\mathfrak{G}}[h_{\thv}^{(1)}(V \rho V\ad)]=h_{\thv}^{(1)}(\rho)$. The left-hand-side of the equality is evaluated as
\begin{align}
     \mathbb{E}_{\mathfrak{G}}[h_{\thv}^{(1)}(V \rho V\ad)]&=\int_{\mathbb{U}(d)}d\mu(V)\Tr[U(\thv)V\rho V\ad\ U\ad(\thv)O]\nonumber\\
    &=\Tr[\left(\int_{\mathbb{U}(d)}d\mu(V) U(\thv)V\rho (U(\thv)V)\ad\right) O]\nonumber\\
    &=\Tr[\left(\int_{\mathbb{U}(d)}d\mu(V) V\rho V\ad\right) O]\nonumber\\
    &=\frac{\Tr[\rho]\Tr[O]}{d}\,,\label{eq:no_c_1}
\end{align}
where the integral denotes the Haar average over the unitary group. 
In the second equality, we have used the linearity of the trace and of the integral. Then, in the third equality, we have used the left-invariance of the Haar-measure. Finally, in the fourth equality, we have explicitly computed the integration via the Weingarten calculus~\cite{collins2006integration,puchala2017symbolic}. 
From Eq.~\eqref{eq:no_c_1} we can see that the only way for  $h_{\thv}^{(1)}(\rho)$ to be equal to $\Tr[\rho]\Tr[O]$ for general quantum states is to have $O\propto\id/d$ or $O=\vec{0}$, which leads to the solutions given by Propositions~\ref{prop:no-post-processing-symmetries} and~\ref{prop:no-post-processing-orthogonal}. Hence, we have shown that there are no models in the Hypothesis Class~\ref{def:hyp_class_1} that are $\mathfrak{G}$-invariant under the action of $\mathbb{U}(d)$  and that can classify the data in the purity dataset as they all provide no information according to Definition~\ref{def:accuracy}.
\end{proof}

The previous proposition shows that one cannot classify the data in the purity dataset with a model in $\HC_1$ operating in a conventional experiment. In hindsight, one could have foreseen this result.  Indeed, computing the purity requires evaluating a polynomial of order two in the matrix elements of $\rho$, and thus, the linear functions as the ones here considered are deemed to fail. 
From a QML perspective, $h_{\thv}^{(1)}$ is ultimately a linear classifier where the parameterized quantum neural network $U(\thv)$ defines a hyperplane such that the expectation value of $\widetilde{O}(\thv)$ is positive for one class and negative for the other. However, the manifolds of quantum states with different purities are not linearly separable in the state space. This can be better exemplified by single-qubit states in the Bloch sphere, where no plane can be drawn across the sphere which linearly separates pure and mixed states. 

In the spirit of kernel tricks~\cite{cristianini2000introduction}, one can introduce non-linearities by allowing the models  $h_{\thv}^{(k)}$ to coherently access multiple copies of $\rho$. This is precisely the setting of quantum-enhanced experiments, which we now explore.

\subsubsection{Quantum-enhanced experiments}

We now consider the case of quantum-enhanced experiments (see Definition~\ref{def:quantum_enhanced}) where multiple copies of a state in the dataset can be operated over in a coherent manner. As we now see, $k=2$ copies is already enough for classifying states according to their purity.
\begin{theorem}\label{theo:unitary-quadratic}
Let $h_{\thv}^{(2)}\in\HC_1$ be a model in Hypothesis Class~\ref{def:hyp_class_1}, computable in a quantum-enhanced experiment.  There always exists quantum neural networks $U(\thv)$ and operators $O$, resulting in $\widetilde{O}(\thv)\in{\rm span}(\{\id \otimes \id,SWAP\})$, such that $h_{\thv}^{(2)}$ is invariant under the action of $\mathbb{U}(d)$. If the model has non-zero component in $SWAP$, it can perfectly classify the data in the purity dataset. The special choice of $\widetilde{O}(\thv)=SWAP$ leads to $h_{\thv}^{(2)}(\rho)=\Tr[\rho^2]$. 
\end{theorem}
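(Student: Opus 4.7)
The plan is to apply Proposition~\ref{prop:no-post-processing-symmetries} with $\mathfrak{G}=\mathbb{U}(d)$ and $k=2$: identify the admissible operators $\widetilde{O}(\thv)$, argue that these operators are realizable within Hypothesis Class~\ref{def:hyp_class_1}, and then evaluate the resulting model on the purity dataset to verify both the closed-form expression for $h_{\thv}^{(2)}(\rho)$ and the perfect-classification claim.

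First, I would identify the second-order commutant. By Schur-Weyl duality, the commutant of $\{V^{\otimes 2}:V\in\mathbb{U}(d)\}$ inside $\mathbb{C}^{d^2\times d^2}$ is exactly the image of the group algebra of the symmetric group $S_2$, so
\begin{equation}
    \CC^{(2)}(\mathbb{U}(d))=\mathrm{span}\{\id\otimes\id,\,SWAP\}.
\end{equation}
Proposition~\ref{prop:no-post-processing-symmetries} then guarantees that any operator of the form $\widetilde{O}(\thv)=\alpha\,\id\otimes\id+\beta\,SWAP$, with $\alpha,\beta\in\mathbb{R}$ to preserve Hermiticity, yields a $\mathbb{U}(d)$-invariant model.

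Second, I would verify existence of a realizing pair $(U(\thv),O)$. Since $SWAP$ is a Hermitian unitary with eigenvalues $\pm 1$, a quantum neural network $U(\thv)$ implementing its eigenbasis change, together with a diagonal observable $O$ whose eigenvalues are chosen to produce the target combination of $\pm 1$ and $1$ spectra, realizes any such $\widetilde{O}(\thv)=U\ad(\thv) O U(\thv)$, placing these models inside $\HC_1$. I would then evaluate the prediction using the identities $\Tr[\rho^{\otimes 2}(\id\otimes\id)]=1$ and the standard swap trick $\Tr[\rho^{\otimes 2}\,SWAP]=\Tr[\rho^2]$, obtaining
\begin{equation}
    h_{\thv}^{(2)}(\rho)=\alpha+\beta\,\Tr[\rho^2].
\end{equation}
For $\beta\neq 0$, pure states map to $\alpha+\beta$ while states with purity $b<1$ map to $\alpha+\beta b\neq\alpha+\beta$, so the two labels are never collapsed and the classification is perfect in the sense of Definition~\ref{def:accuracy}(iii). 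The choice $\alpha=0,\beta=1$ specializes to $h_{\thv}^{(2)}(\rho)=\Tr[\rho^2]$, which recovers the standard SWAP test.

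The only delicate step I anticipate is the clean invocation of Schur-Weyl at the level of the full unitary representation to pin down $\CC^{(2)}(\mathbb{U}(d))$; once this is in hand, realizability, the purity formula, and the separation argument follow by direct computation. Everything else is bookkeeping within the framework already set up by Propositions~\ref{prop:no-post-processing-symmetries} and~\ref{prop:no-post-processing-orthogonal}.
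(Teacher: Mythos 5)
Your proposal is correct and follows essentially the same route as the paper's proof: invoke Proposition~\ref{prop:no-post-processing-symmetries}, use Schur--Weyl duality to identify $\CC^{(2)}(\mathbb{U}(d))={\rm span}(\{\id\otimes\id, SWAP\})$, and evaluate $h_{\thv}^{(2)}(\rho)=a_1+a_2\Tr[\rho^2]$ via the swap trick. Your extra paragraph on realizing $\widetilde{O}(\thv)$ by diagonalizing $SWAP$ is a welcome addition the paper only touches on in the remarks following the theorem, but it does not change the substance of the argument.
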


\begin{proof}
Recall from  Proposition~\ref{prop:no-post-processing-symmetries} that a model $h_{\thv}^{(2)}$ in the Hypothesis Class~\ref{def:hyp_class_1} is $\mathfrak{G}$-invariant if $\widetilde{O}(\thv)$ is a quadratic symmetry of $\mathbb{U}(d)$, i.e., if for all $V\in \mathbb{U}(d)$
\begin{equation}\label{eq:explicit-quadratic}
  (V\ad)^{\otimes 2}\widetilde{O}(\thv)V^{\otimes 2}=\widetilde{O}(\thv)\,.
\end{equation}

\begin{figure}[t]
\centering
\includegraphics[width=1\columnwidth]{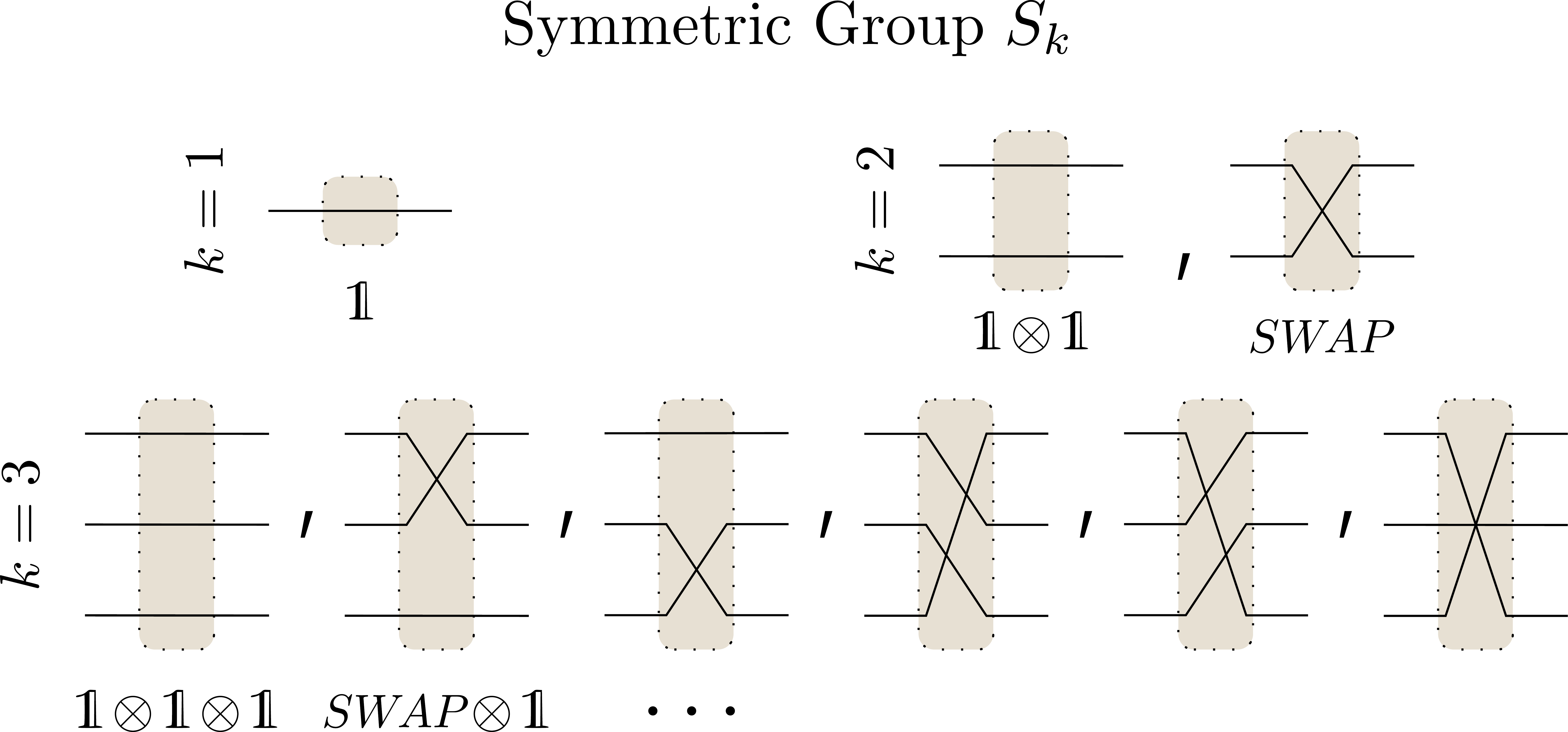}
\caption{\textbf{Elements of the Symmetric group.} 
The Symmetric group $S_k$ is composed of the $k!$ distinct permutations over $k$ indices. Here we illustrate its representation acting on tensor product systems for the cases of $k=1,2$ and $3$ copies of n-qubit states (each represented as a line). For example, the element $SWAP\otimes \id\in S_3$ depicted second from left-to-right, acts on a tensor product state as $SWAP\otimes \id\ket{\psi_1}\ket{\psi_2}\ket{\psi_3}=\ket{\psi_2}\ket{\psi_1}\ket{\psi_3}$.}
\label{fig:quadratic_symmetries-3}
\end{figure}

From the Schur-Weyl duality~\cite{goodman2009symmetry} we know that the $k$-th order symmetries of $\mathbb{U}(d)$ are given by \begin{equation}\label{eq:k_symmetries_unitary}
    \CC^{( k)}(\mathbb{U}(d))={\rm span}\,(S_k)\,,
\end{equation}
with $S_k$ the representation of the Symmetric Group that acts by permuting subsystems of the $k$-fold tensor product of the input state (depicted in Fig.~\ref{fig:quadratic_symmetries-3}).

As shown in Fig.~\ref{fig:quadratic_symmetries-3}, for the case of $k=2$ copies, this group contains only two elements~\footnote{We refer the reader to~\cite{zeier2011symmetry,zimboras2015symmetry} for an in-depth discussion on the quadratic symmetries of $\mathbb{U}(d)$.}
\begin{equation}
 S_2=\{\id\otimes\id,SWAP\}\,,
\end{equation}
with $\id\otimes\id$ the identity acting on each of the two copies of $\rho$, and $SWAP$ the operator swapping these copies. 
As a consequence, $h_{\thv}^{(2)}$ can be made $\mathfrak{G}$-invariant under the action of $\mathbb{U}(d)$ when $\widetilde{O}(\thv)=a_1\,\id\otimes\id + a_2\, SWAP$ with $a_1,a_2\in\mathbb{R}$. The latter is diagrammatically presented in Fig.~\ref{fig:quadratic_symmetries}. This yields predictions
\begin{equation}
    h_{\thv}^{(2)}(\rho)=a_1 +a_2 \Tr[\rho^2]\,,
\end{equation}
showing that the model will be able to perfectly classify the states in the purity dataset (according to Definition~\ref{def:accuracy}) for any choice of $a_2\neq 0$. 
\end{proof}

\begin{figure}[t]
\centering
\includegraphics[width=1\columnwidth]{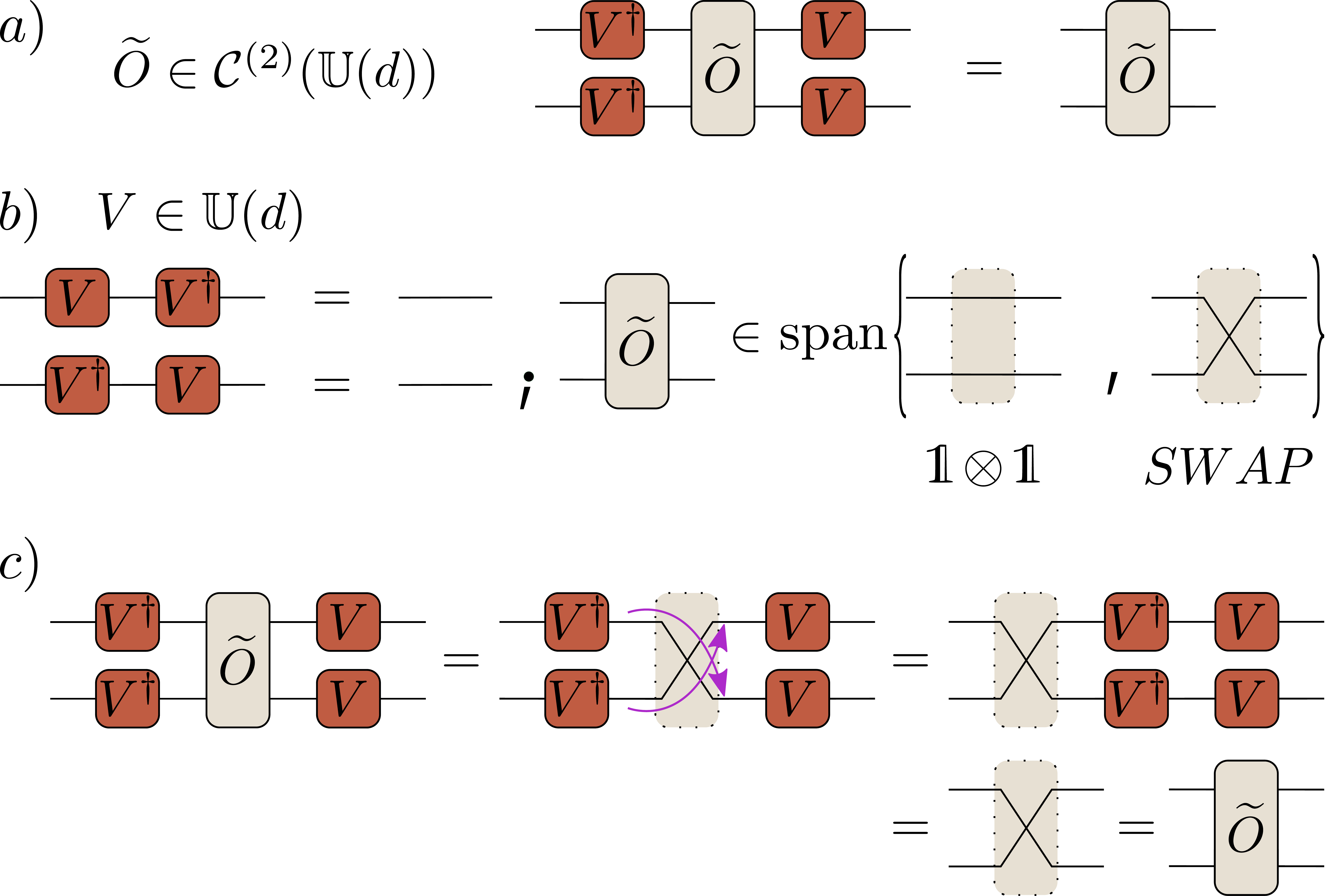}
\caption{\textbf{Quadratic symmetries for $\mathfrak{G}=\mathbb{U}(d)$.} a) A model $h_{\thv}^{(2)}$ is $\mathfrak{G}$-invariant when $\widetilde{O}(\thv)$ is a quadratic symmetry of $\mathfrak{G}$, that is, when $(V\ad)^{\otimes 2}\widetilde{O}(\thv)V^{\otimes 2}=\widetilde{O}(\thv)$ for all $V\in\mathfrak{G}$.
This equality is shown in diagrammatic tensor representation, where each line corresponds to a $d$-dimensional Hilbert space hosting a copy of $\rho$, and boxes represent unitary operations. b) For $V\in\mathbb{U}(d)$, we have $V V\ad =V\ad V =\id$ depicted on the left part. Furthermore, we know that the quadratic symmetries of $\mathbb{U}(d)$ are  spanned by the identity $\id\otimes\id$ and the SWAP operators. We depict these on the right. c) Using the diagrammatic tensor representation we verify that $ (V\ad)^{\otimes 2}\widetilde{O}(\thv)V^{\otimes 2}=\widetilde{O}(\thv)$, for the case when $\widetilde{O}(\thv)=SWAP$. }
\label{fig:quadratic_symmetries}
\end{figure}

We now make  several remarks regarding the results in Theorem~\ref{theo:unitary-quadratic}, and regarding our framework in general. First, we note that while Proposition~\ref{prop:no-post-processing-symmetries} provides a straightforward guideline to obtain $\mathfrak{G}$-invariant models, it does not prescribe how to actually build the quantum neural networks $U(\thv)$ and the measurement operators $O$ ensuring that $\widetilde{O}(\thv)=U\ad(\thv) O U(\thv)$ is a $k$-th order symmetry. All that we know is the specific form that the resulting $\widetilde{O}(\thv)$ needs to have. 
Thus, it is still necessary to find an adequate ansatz for $U(\thv)$ and an appropriate observable $O$ that can be efficiently measured. For instance, it is clear that simply choosing $U(\thv)=\id\otimes \id$ $\forall \thv$ and $O=SWAP$ satisfies the conditions of Theorem~\ref{theo:unitary-quadratic}. This has the issue that one cannot efficiently estimate the expectation value of the SWAP operator -- its Pauli decomposition has a number of terms that scales exponentially with $n$ -- using a model such as $h^{(2)}_{\thv}$. However, it is well known that by adding an ancilla qubit and by using the Hadamard-test~\cite{buhrman2001quantum} one can efficiently estimate the expectation value of the SWAP operator. In Appendix~\ref{app:D}, we show how our present formalism can be applied to models that include an ancillary qubit. Surprisingly, the latter  allows us to discover a new connection between the Swap Test~\cite{buhrman2001quantum} and the ancilla based algorithm of Ref.~\cite{cincio2018learning}.

\subsection{Time-reversal dataset}
\label{sec:time-reversal}
In this section we  are  interested in classifying states according to whether they are obtained from a time-reversal-symmetric~\cite{sachs1987physics} dynamic or from an arbitrary one. 
That is, the states $\rho_i$ of the corresponding dataset $\SC=\{(\rho_i,y_i)\}_{i=1}^N$ now have labels
\begin{equation}
    y_i = 
    \begin{cases}
        1 & \text{if $\rho_i$ is real valued,}\\
        0 & \text{if $\rho_i$ is Haar random.
        }
    \end{cases}
\end{equation}
Specifically, the states in the dataset have a label $y_i=1$ if they are generated by evolving some (fixed) \textit{real-valued} fiduciary  state with a time-reversal-symmetric unitary (and thus are real-valued too), and a label $y_i=0$  if they are generated by evolving the same reference state with a  Haar random unitary.

In contrast to the case of the purity dataset previously considered, one can now associate a distinct symmetry group to each of the two classes.
On one hand, the states with label $y_i=0$ have $\mathfrak{G}_0=\mathbb{U}(d)$ as a symmetry group. On the other hand, the states with label $y_i=1$ have, as a symmetry group, $\mathfrak{G}_1=\mathbb{O}(d)$, which is the orthogonal Lie group of degree $d$. This is because the unitaries in $\mathbb{O}(d)$ preserve the time-reversal symmetry of the states (and thus their label). 

For convenience, we recall that  $\mathbb{O}(d)$ is the group of $d\times d$ orthogonal matrices. That is, $\forall V\in \mathbb{O}(d)$, $VV^t=V^t V=\id$. This group can be obtained by exponentiation of the orthogonal Lie algebra, which consists of $d \times d$ skew-symmetric matrices, $\mathfrak{g}_1=\mathfrak{o}(d)$, i.e., $\mathbb{O}(d)=e^{\mathfrak{o}(d)}$. 
Moreover, the unitary Lie algebra $\mathfrak{g}_0=\mathfrak{u}(d)$ can be split as $    \mathfrak{u}(d)=\mathfrak{o}(d)\oplus \mathfrak{u}_{\mathbb{C}}(d)$. 
Here, note that $\mathfrak{o}(d)$  corresponds to the purely real-valued subspace of the unitary algebra. Its orthogonal complement
\begin{equation}\label{eq:perp-orthogonal}
\mathfrak{g}_1^\perp=\mathfrak{u}_{\mathbb{C}}(d) 
\end{equation}
corresponds to the purely imaginary subspace of the unitary Lie algebra. 

Having two symmetry classes allows for the design of a new classification strategy.
Namely,  one can classify the data using a $\mathfrak{G}_1$-invariant model (but not $\mathfrak{G}_0$-invariant) $h_{\thv}^{(k)}$ such that
\begin{equation}\label{eq:classification_reversal}
\begin{split}
    &h_{\thv}^{(k)}(\rho_i)=
    c\quad \text{if $y_i=1$}\,,\\
    &h_{\thv}^{(k)}(\rho_i)\in[b_1,b_2] \quad \text{if $y_i=0$}\,,
\end{split}
\end{equation}
with $c$, $b_1$ and $b_2$ real values determined by the measurement operator and the states in the dataset. If $c\not\in [b_1,b_2]$, then Eq.~\eqref{eq:classification_reversal} suffices for perfect classification according to Definition~\ref{def:accuracy}. If $c\in[b_1,b_2]$, we can still use Eq.~\eqref{eq:classification_reversal} for noisy classification (see Definition~\ref{def:accuracy}) but there could be a chance of misclassification  as one cannot perfectly distinguish  between states in different classes yielding the same prediction. Such misclassification events will remain unlikely as long as the probability that $h_{\thv}^{(k)}(\rho_i)=c$ (up to some additive error) is small for states with label $y_i=0$. In Appendix~\ref{app:E}, we present a Lemma that formalizes the previous statement. In any case, for now we assume that a model satisfying Eq.~\eqref{eq:classification_reversal} can classify the data in the dataset with probability high enough, and will challenge this assumption in due course.

\subsubsection{Conventional experiments}
For the case of conventional experiments, i.e., $k=1$ copies in Eq.~\eqref{eq:linear_cost}, the results in Proposition~\ref{prop:no-post-processing-symmetries-two} cannot be used to find $\mathfrak{G}_1$-invariant models classifying the time-reversal dataset. 
Indeed, since the representation of $\mathfrak{G}_1=\mathbb{O}(d)$ is irreducible, using Schur's Lemma~\cite{kirillov2008introduction} we know that 
\begin{equation}
    \CC(\mathfrak{G}_1)={\rm span}(\{\id\})\,,
\end{equation}
i.e., $\mathfrak{G}_1$ has no non-trivial linear symmetries that could be exploited for the purpose of classification. 

Still, we can use Proposition~\ref{prop:no-post-processing-orthogonal-two} to find group invariant models. First, we note that the input states $\rho_i$ belong to $\mathfrak{g}_1^{\perp}=\mathfrak{u
}_{\mathbb{C}}(d)$ when they are time-reversal-symmetric but to $\mathfrak{g}_0=\mathfrak{u}(d)$ when they are Haar random. 
Hence,  $h_{\thv}^{(1)}$ will be invariant under the action of $\mathfrak{G}_1$, but not necessarily invariant under the action of $\mathfrak{G}_0$, if $\widetilde{O}(\thv)$ is in $i\mathfrak{g}_{1}$ but not in $i\mathfrak{g}_{0}^{\perp}$. This is formalized below.

\begin{theorem}\label{th:time-reversal-conventional}
Let $h_{\thv}^{(1)}\in\HC_1$ be a model in Hypothesis Class~\ref{def:hyp_class_1}, computable in a conventional experiment. There always exist real-valued quantum neural networks $U(\thv)$ and operators $O$, resulting in $\widetilde{O}(\thv)\in i\mathfrak{o}(d)$, with $\widetilde{O}(\thv)\neq\vec{0}$, such that $h_{\thv}^{(1)}$ is invariant under the action of $\mathbb{O}(d)$  and can perform noisy classification of the data in the time-reversal dataset.
\end{theorem}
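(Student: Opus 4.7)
The plan is to exhibit a concrete family of $(U(\thv), O)$ pairs satisfying the three requirements of the theorem and then verify them in turn. First I would handle the algebraic constraint: taking $U(\thv)\in\mathbb{O}(d)$ (real orthogonal, hence ``real-valued'') and $O=iB$ with $B$ real skew-symmetric (so $O\in i\mathfrak{o}(d)$) yields $\widetilde{O}(\thv) = i\,U^T(\thv)\,B\,U(\thv)$, and the identity $(U^T B U)^T = -U^T B U$ shows that this matrix lives in $i\mathfrak{o}(d)\setminus\{\vec{0}\}$ whenever $B\ne\vec{0}$. A concrete example one can keep in mind is any Pauli string with an odd number of $Y$ factors as the measured observable, together with a real orthogonal QNN built out of generators in $\mathfrak{o}(d)$ (imaginary-antisymmetric Pauli strings such as $Y_j$, $X_jY_k$, etc.).

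Next I would establish the $\mathbb{O}(d)$-invariance by invoking Proposition~\ref{prop:no-post-processing-orthogonal-two} in its ``swapped'' form. Since $i\id\in\mathfrak{g}_1^\perp=\mathfrak{u}_{\mathbb{C}}(d)$ rather than in $\mathfrak{g}_1=\mathfrak{o}(d)$, the roles of $\mathfrak{g}_1$ and $\mathfrak{g}_1^\perp$ swap in the hypothesis of the proposition: real-valued states satisfy $\rho\in i\mathfrak{g}_1^\perp$, and our chosen $\widetilde{O}(\thv)\in i\mathfrak{g}_1$, so all hypotheses are met. Alternatively one can verify this by hand: for $\rho^T=\rho$ and $\widetilde{O}^T=-\widetilde{O}$, the trace identity $\Tr[\rho\widetilde{O}] = \Tr[(\rho\widetilde{O})^T] = \Tr[\widetilde{O}^T\rho^T] = -\Tr[\rho\widetilde{O}]$ forces $h_{\thv}^{(1)}(\rho)=\Tr[\rho\widetilde{O}(\thv)]=0$. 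Thus $h_{\thv}^{(1)}$ is identically $0$ on the entire label-$1$ class, which is trivially $\mathbb{O}(d)$-invariant.

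To argue noisy classification in the sense of Definition~\ref{def:accuracy}(ii) and Eq.~\eqref{eq:classification_reversal}, the previous step already gives $h_{\thv}^{(1)}(\rho_i) = c = 0$ for all $y_i=1$. For $y_i=0$ I would decompose any Hermitian $\rho_i = \rho_{i,R} + i\rho_{i,I}$ into its real-symmetric and real-antisymmetric parts, write $\widetilde{O}(\thv) = iB(\thv)$ with $B(\thv)$ real skew-symmetric, and compute $h_{\thv}^{(1)}(\rho_i) = -\Tr[\rho_{i,I}\,B(\thv)]$, using $\Tr[\rho_{i,R}B(\thv)]=0$ (symmetric times antisymmetric). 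This shows the prediction is a linear functional of the antisymmetric imaginary part of $\rho_i$, which generically does not vanish for Haar-random states, yielding a non-degenerate output range $[b_1,b_2]$ that distinguishes the label-$0$ states from the collapsed value $c=0$ for a typical instance.

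The main obstacle is making this last step quantitative, i.e.\ controlling the probability, over Haar-random $\rho_i$ and with finite-shot precision, that $|\Tr[\rho_{i,I}\,B(\thv)]|$ is large enough to be resolved away from $0$. This is exactly the kind of statement packaged by the lemma promised in Appendix~\ref{app:E}, and I would attack it with a Weingarten calculation of the first two moments of $\Tr[\rho_{i,I}\,B]$ under the Haar measure, combined with a concentration-of-measure bound on $\mathbb{U}(d)$, to certify that misclassification events are rare. The algebraic structure (existence of the invariant model) is easy; the noisy classification claim is where the technical work lies.
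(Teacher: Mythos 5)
Your proposal is correct and follows essentially the same route as the paper: choose $O\in i\mathfrak{o}(d)$ and $U(\thv)\in\mathbb{O}(d)$ so that $\widetilde{O}(\thv)$ stays in $i\mathfrak{o}(d)$ by closure of the Lie algebra under the group action, invoke the swapped form of Proposition~\ref{prop:no-post-processing-orthogonal-two} (since $i\id\in\mathfrak{g}_1^{\perp}$) to get $h_{\thv}^{(1)}=0$ on the real-valued class, and argue the output is generically nonzero on Haar-random states, deferring the quantitative concentration analysis to the appendix. Your explicit transpose-trace identity and the decomposition $\rho_i=\rho_{i,R}+i\rho_{i,I}$ are nice concrete verifications of what the paper states abstractly via orthogonality of $i\mathfrak{g}_1$ and $i\mathfrak{g}_1^{\perp}$, but they do not change the substance of the argument.
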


\begin{proof}
We aim at finding models that are $\mathfrak{G}_1$-invariant (with $\mathfrak{G}_1=\mathbb{O}(d))$ but not necessarily $\mathfrak{G}_0$-invariant (with $\mathfrak{G}_0=\mathbb{U}(d))$, distinguishing time-reversal-symmetric states from Haar random ones. According to Proposition~\ref{prop:no-post-processing-orthogonal-two}, the model will be $\mathfrak{G}_1$-invariant but not $\mathfrak{G}_0$-invariant if $\widetilde{O}(\thv)\in i\mathfrak{g}_1=i\mathfrak{o}(d)$ and $\widetilde{O}(\thv)\neq\vec{0}$, e.g., if  $\widetilde{O}(\thv)$ is a purely imaginary operator.

Now, lets show that there is a choice of $O$ and $U(\thv)$ allowing for classification. Taking $O\in i\mathfrak{g}_1$ and $U(\thv)\in\mathfrak{G}_1=\mathbb{O}(d)$, the resulting $\widetilde{O}(\thv)$ is also contained in $i\mathfrak{g}_1$, since a Lie algebra is closed under the action of its associated Lie group. Because  time-reversal-symmetric states $\rho_i$ are exclusively contained in $i\mathfrak{g}_1^{\perp}$, it follows from Eq.~\eqref{eq:orthogonal_complement} that
\begin{equation}
    h_{\thv}^{(1)}(\rho_i)=0, \quad \forall \rho_i \text{ with label } y_i=1\,.\nonumber
\end{equation}
Moreover, the previous equation is not satisfied for Haar random states, as these will generally have both real and complex parts. As such, $h_{\thv}^{(1)}(\rho_i)$ will not necessarily be zero for states with label  $y_i=0$. Hence, the model satisfies Eq.~\eqref{eq:classification_reversal} such that it can perform noisy classification (according to Definition~\ref{def:accuracy}) for the states in the time-reversal dataset.
\end{proof}

So far, we have identified models that yield predicted values of $0$ for time-reversal-symmetric states, but yield values in a continuous range for states drawn from the Haar distribution. As such, when taking into account noise in the prediction of the model, any non-time-reversal state with prediction values close to zero may be misclassified. 
In fact, as proven in Appendix~\ref{app:F}, Haar random states lead to prediction values that (with probability close to one) lie in a range that becomes exponentially concentrated around zero with the number of qubits $n$. 
In turn, it can be shown that to classify states in the dataset with a success probability of at least $2/3$, one would need to repeat the experiment a number of times that scales as $\Omega(2^{2n/7})$~\cite{chen2021exponential,aharonov2022quantum,huang2021quantum}.

This raises attention towards a practical aspect in the design of QML models that we have not previously considered: the \textit{scaling} in the number of experiment repetitions required for accurate classification. Our framework allows us to identify $\mathfrak{G}$-invariant models, but we are not guaranteed that such models are practical for large  system sizes $n$. In fact, we have seen that an exponential number of repetitions are needed to make practical use of the  models in Theorem~\ref{th:time-reversal-conventional}. This motivates us to further continue the search of $\mathfrak{G}$-invariant models in quantum-enhanced experiments in the hope that these might avoid the exponential scaling present in conventional experiments. 

\subsubsection{Quantum-enhanced experiments}
For quantum-enhanced experiments, i.e., $k=2$ copies in Eq.~\eqref{eq:linear_cost}, we can show that the following theorem holds.
\begin{theorem}\label{theo:orthogonal-quadratic}
    Let $h_{\thv}^{(2)}\in\HC_1$ be a model in Hypothesis Class~\ref{def:hyp_class_1}, computable in a quantum-enhanced experiment.  There always exist quantum neural networks $U(\thv)$ and operators $O$, resulting in  $\widetilde{O}(\thv)=\dya{\Phi^+}$, with $\ket{\Phi^+}$ the Bell state on $2n$ qubits, such that $h_{\thv}^{(2)}$ is invariant under the action of $\mathbb{O}(d)$ and can perform noisy classification of the data in the time-reversal dataset. 
\end{theorem}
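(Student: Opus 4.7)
The plan is to verify that the proposed operator $\widetilde{O}(\thv)=\dya{\Phi^+}$ lies in the quadratic commutant of $\mathbb{O}(d)$, exhibit a circuit realization, and then evaluate the model on both data classes to confirm noisy classification. For the invariance step, I would start from the definition $\ket{\Phi^+}=\tfrac{1}{\sqrt{d}}\sum_i\ket{i}\ket{i}$ and use the well-known transpose trick $(A\otimes\id)\ket{\Phi^+}=(\id\otimes A^{T})\ket{\Phi^+}$ to derive
\begin{equation}
    (V\otimes V)\ket{\Phi^+}=(V V^{T}\otimes\id)\ket{\Phi^+}.
\end{equation}
For $V\in\mathbb{O}(d)$ one has $V V^{T}=\id$, so $\ket{\Phi^+}$ is fixed by $V^{\otimes 2}$ and therefore $[\dya{\Phi^+},V^{\otimes 2}]=0$. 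By Proposition~\ref{prop:no-post-processing-symmetries-two}, this places $\widetilde{O}(\thv)$ in $\CC^{(2)}(\mathbb{O}(d))$ and guarantees $\mathfrak{G}_1$-invariance. It is also worth noting that the same calculation with a general $V\in\mathbb{U}(d)$ gives $V V^{T}\neq \id$ in general, so the operator is not in $\CC^{(2)}(\mathbb{U}(d))$, which is precisely what we need to keep the model sensitive to the $y=0$ class.

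Next, I would construct an explicit circuit implementing this $\widetilde{O}(\thv)$. A concrete choice is to take $U(\thv)$ to be the standard Bell-basis disentangling circuit (a layer of CNOTs between the two $n$-qubit registers followed by Hadamards on the control register), and $O=\dya{0}^{\otimes 2n}$, so that $U\ad O U=\dya{\Phi^+}$. Since this is a fixed, efficient Clifford circuit, the model can be executed with polynomial overhead. One can also parameterize more broadly: any $U(\thv)$ belonging to the stabilizer subgroup of $\dya{\Phi^+}$ (e.g., conjugation by elements of $\mathbb{O}(d)^{\otimes 2}$ acting on the two copies) preserves $\widetilde{O}(\thv)$, providing trainable freedom without breaking invariance.

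Finally, for the classification step I would compute the output on the two data classes. Using the identity $\bra{\Phi^+}\rho\otimes\rho\ket{\Phi^+}=\tfrac{1}{d}\Tr[\rho^{T}\rho]$, for real-symmetric $\rho_i$ (label $y_i=1$) we get the fixed value $h_{\thv}^{(2)}(\rho_i)=\Tr[\rho_i^2]/d$, which equals $1/d$ for pure states, setting the constant $c$ in Eq.~\eqref{eq:classification_reversal}. For a Haar random pure state $\rho=\dya{\psi}$ one has $\bra{\Phi^+}\psi\otimes\psi\rangle=\tfrac{1}{\sqrt{d}}\sum_i\psi_i^2$, and a short Weingarten computation gives $\mathbb{E}_{\text{Haar}}\!\left[|\langle\Phi^+|\psi,\psi\rangle|^2\right]=\tfrac{2}{d(d+1)}$, which is a factor of roughly $d/2$ smaller than the real-state value. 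This already indicates that the two classes are separated in expectation by a gap of order $1/d$; the remaining work is concentration.

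The main obstacle I anticipate is showing that this gap translates into a reliable classifier as $n$ grows, i.e., controlling the fluctuations of $h_{\thv}^{(2)}(\rho_i)$ on the Haar class so that $c=1/d$ falls outside the typical range $[b_1,b_2]$ with high probability. I would address this by combining the second-moment estimate above with a Haar concentration bound (e.g., Levy's lemma applied to the Lipschitz function $\ket{\psi}\mapsto|\langle\Phi^+|\psi,\psi\rangle|^2$), and then invoke the Lemma of Appendix~\ref{app:E} to convert this concentration into the noisy-classification guarantee in the sense of Definition~\ref{def:accuracy}. This would also yield a clean statement about how many shots and how many samples suffice, which is what the preceding discussion about the exponential cost of the $k=1$ case motivates us to improve upon.
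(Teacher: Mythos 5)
Your proposal is correct and follows essentially the same route as the paper: the invariance is established via the transpose trick showing $(V\otimes V)\ket{\Phi^+}=\ket{\Phi^+}$ for $V\in\mathbb{O}(d)$ (the paper's ``ricochet property,'' which it situates inside the Brauer-algebra description of $\CC^{(2)}(\mathbb{O}(d))$), and classification follows from the constant output on the real class versus the $W_i$-dependent output on the Haar class. The only real divergence is your final concentration step, which is not needed for the theorem as stated --- Definition~\ref{def:accuracy}(ii) only requires the outputs to differ for some data across classes --- and which the paper instead treats in Appendix~\ref{app:F} to make the opposite point, namely that both classes' values are exponentially small so this model still demands exponentially many repetitions.
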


\begin{figure}[t]
\centering
\includegraphics[width=1\columnwidth]{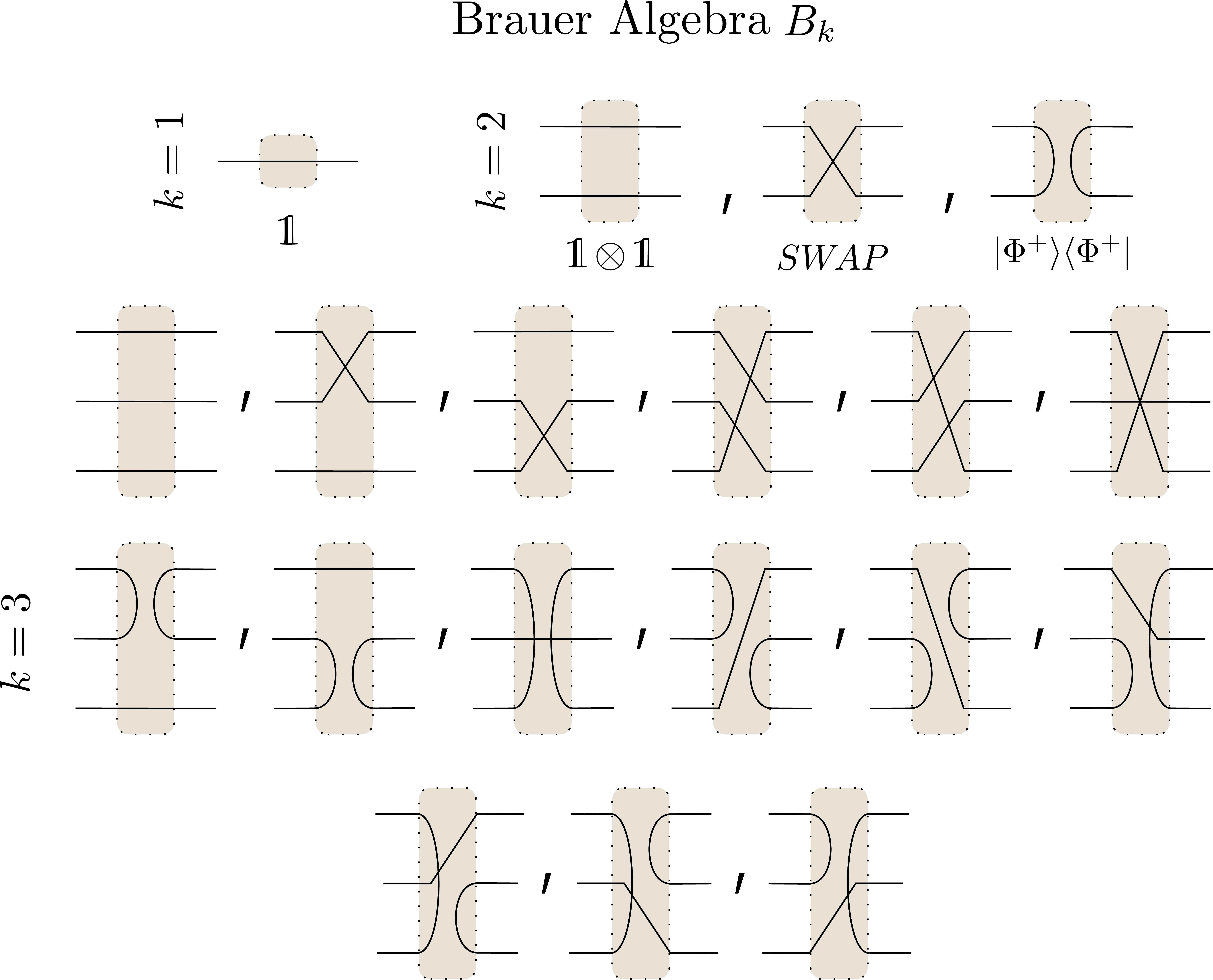}
\caption{\textbf{Elements of the Brauer algebra.} A basis for the Brauer algebra $B_k$ is composed of $2k!/(2^kk!)$ possible pairings on a set of $2k$ elements, where any element may be matched to another. Here we illustrate its representation acting on tensor product systems for the cases of $k=1,2$ and $3$ copies. }
\label{fig:quadratic_symmetries-3-O}
\end{figure}

\begin{proof}
We aim at finding models that are invariant under $\mathbb{O}(d)$, but not under $\mathbb{U}(d)$. According to Proposition~\ref{prop:no-post-processing-symmetries-two}, this can be achieved by ensuring that $\widetilde{O}(\thv)$ is a quadratic symmetry of $\mathbb{O}(d)$ but not of $\mathbb{U}(d)$. From  the Schur-Weyl duality we know that  the $k$-th order symmetries of $\mathbb{O}(d)$ are given by the Brauer algebra $B_k$~\cite{brown1954algebra},
\begin{equation}\label{eq:k_symmetries-orthogonal}
    \CC^{( k)}(\mathbb{O}(d))=B_k\,.
\end{equation}
The elements of $B_k$ are depicted in Fig.~\ref{fig:quadratic_symmetries-3-O}.

As shown in Fig.~\ref{fig:quadratic_symmetries-O}, for $k=2$ the Brauer algebra is spanned by three elements
\begin{equation}\label{eq:brauer-2}
    B_2={\rm span}(\left\{\id\otimes\id,SWAP,\dya{\Phi^+}\right\})\,,
\end{equation}
where $\ket{\Phi^+}$ denotes the Bell state on $2n$ qubits
\begin{equation}
    \ket{\Phi^+}=\frac{1}{d}\sum_{j=1}^d\ket{j}\ket{j}\,.
\end{equation}
It can be verified that $\dya{\Phi^+}$ is indeed a quadratic symmetry for $\mathbb{O}(d)$. To see that, recall the \textit{ricochet property} (also called the \textit{transpose trick}), which states that for any linear operator $A$ acting on a $d$-dimensional Hilbert space
\begin{equation}\label{eq:ricochet}
    (A\otimes \id)\ket{\Phi^+}=(\id\otimes A^t)\ket{\Phi^+}\,.
\end{equation}  
Using Eq.~\eqref{eq:ricochet} one can assert that $ (V^t)^{\otimes 2}\dya{\Phi^+} V^{\otimes 2}=(\id\otimes V^tV)\dya{\Phi^+}(\id\otimes V^t V)=\dya{\Phi^+}$, and hence that $\dya{\Phi^+}\in\CC^{(2)}(\mathbb{O}(d))$ (see also Fig.~\ref{fig:quadratic_symmetries-O}(b) for a diagrammatic proof).

\begin{figure}[t]
\centering
\includegraphics[width=1\columnwidth]{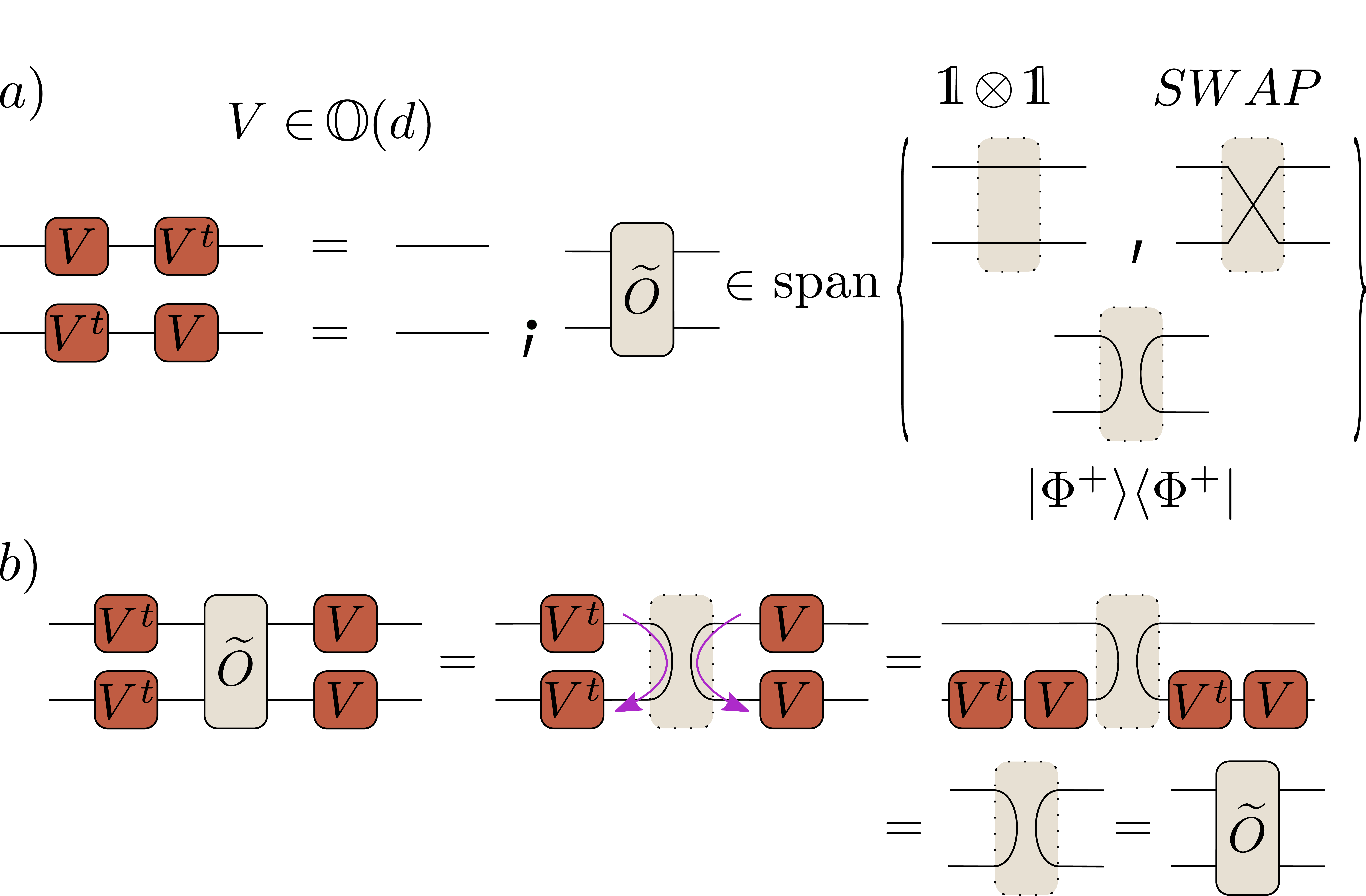}
\caption{\textbf{Quadratic symmetries for $\mathfrak{G}=\mathbb{O}(d)$.} a) We know that any $V\in\mathbb{O}(d)$ is such that $V V^t =V^t V =\id$. We schematically show this fundamental property on the left. We know that the quadratic symmetries of $\mathbb{O}(d)$ are elements of the Brauer algebra $B_2$ whose basis contains three elements: the identity $\id\otimes\id$, the SWAP operator, and the projector onto the Bell state $\dya{\phi^+}$. We depict these on the right. c) Using the diagrammatic tensor representation we verify that $ (V^t)^{\otimes 2}\widetilde{O}(\thv)V^{\otimes 2}=\widetilde{O}(\thv)$. For the  case of $\dya{\phi^+}$ we use the ricochet property of Eq.~\eqref{eq:ricochet}.}
\label{fig:quadratic_symmetries-O}
\end{figure}

The only element that is in $\CC^{(2)}(\mathbb{O}(d))$ but not in $\CC^{(2)}(\mathbb{U}(d))$ is the projector onto the Bell state $\dya{\Phi^+}$. Hence, according to Proposition~\ref{prop:no-post-processing-symmetries-two}, $h_{\thv}^{(2)}$ is $\mathbb{O}(d)$-invariant if $\widetilde{O}(\thv)=\lambda \dya{\Phi^+}$ with $\lambda\in\mathbb{R}$. Now, the model is such that  $h_{\thv}^{(2)}(\rho_i)=\lambda \bra{\Phi^+}(\rho_i^{\otimes 2})\ket{\Phi^+}$. In Fig.~\ref{fig:orthogonal-circuits}(a) we show a circuit that could be used to measure this overlap.

Recall that the time-reversal states $\rho_i$ are obtained by evolving a real-valued fiduciary state -- taken to be $\ket{0}^{\otimes n}$ without loss of generality -- under a unitary in $\mathbb{O}(d)$. One can verify that if $\widetilde{O}(\thv)=\dya{\Phi^+}$, then 
\begin{align}
    h_{\thv}^{(2)}(\rho_i)
    &=|\langle \Phi^+ | 0\rangle ^{\otimes 2n}|^2=\frac{1}{d^2}\,,
\end{align}
for all $\rho_i$ with labels $y_i=1$. On the other hand, the model output will not be constant for states with labels $y_i=0$, i.e., for states obtained  by evolving $\ket{0}^{\otimes n}$  under a Haar random unitary $W_i$. In this case one has
\begin{align}\label{eq:ortho-enhanced-vanish}
    h_{\thv}^{(2)}(\rho_i)=|\langle \Phi^+ |(W_i\otimes W_i ) |0\rangle ^{\otimes 2n}\,|^2\,,
\end{align}
which depends on the choice of $W_i$.  
Overall, we have shown that choosing $\widetilde{O}(\thv)=\dya{\Phi^+}$ leads to a model invariant under $\mathbb{O}(d)$ that satisfies Eq.~\eqref{eq:classification_reversal}, and hence, that can perform noisy classification (according to Definition~\ref{def:accuracy}) of the states in the time-reversal dataset.
\end{proof}

Theorem~\ref{theo:orthogonal-quadratic} shows that measuring the Bell state allows us to do classification. However, this does not solve the scaling issue discussed earlier. 
Indeed, as proven in Appendix~\ref{app:F}, the predictions values of the model given in Eq.~\eqref{eq:ortho-enhanced-vanish} still concentrate exponentially close to zero as a function of the number of qubits. 
This implies that we still need an exponential number of experiment repetitions to accurately classify the data. 
However, as we now show, this problem can be overcome if we slightly modify the task at hand, from the classification of time-reversal-symmetric states to the classification of time-reversal-symmetric \textit{dynamics}.

\begin{figure}[t]
\centering
\includegraphics[width=1\columnwidth]{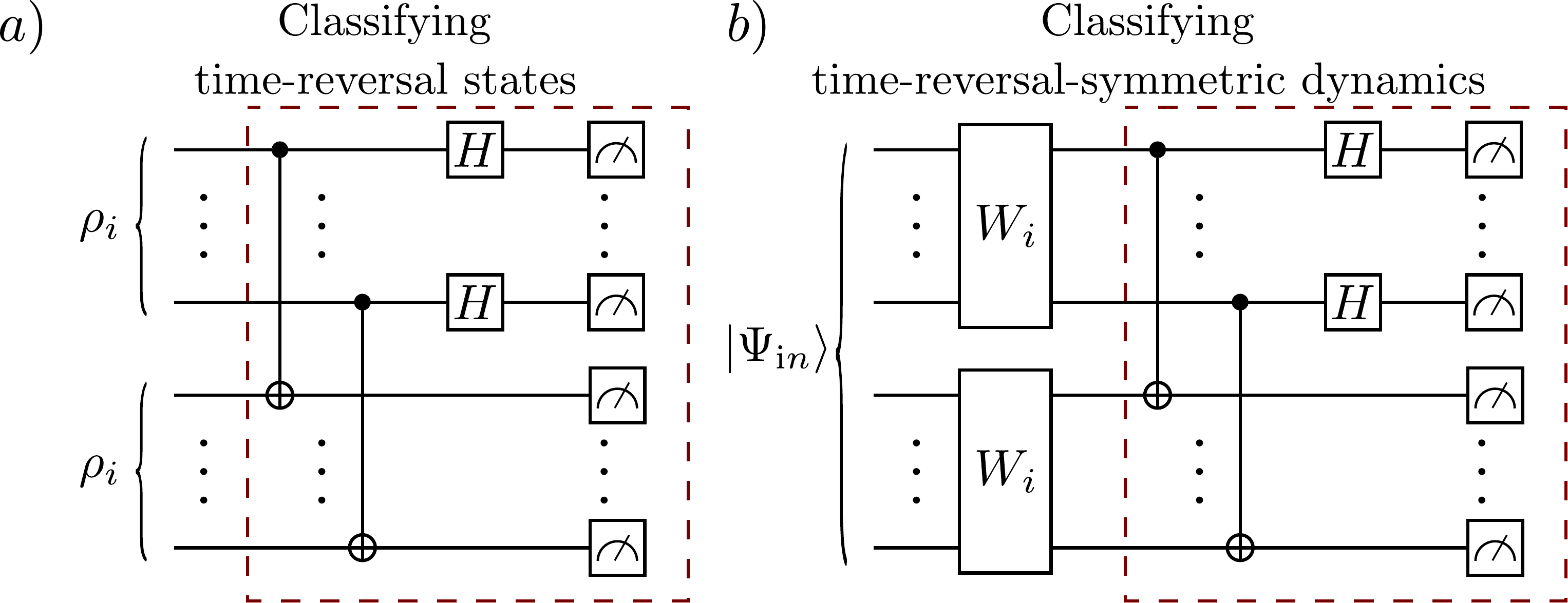}
\caption{\textbf{Circuits used in quantum-enhanced experiments.}  a) When classifying time-reversal states, the dataset is composed of states $\rho_i$ obtained by evolving a fiduciary real-valued state with an orthogonal unitary or with a Haar random unitary.  b) When classifying time-reversal-symmetric dynamics, the dataset is composed of orthogonal and Haar random unitaries $W_i$. Here we are free to choose the $2n$-qubit initial state $\ket{\Psi_{\rm in}}$ that will be evolved under the action of $W_i^{\otimes 2}$. In both panels we have indicated with a red dashed box the circuit for implementing a Bell-basis measurement. In particular, an all-zero measurement outcome corresponds to the probability of measuring $\ket{\Phi^+}$.  }
\label{fig:orthogonal-circuits}
\end{figure}

For this new task, rather than being given states, we assume instead access to the unitaries used to produce these states. The corresponding dataset has the form $\{W_i,y_i\}_{i=1}^N$, with
\begin{equation}
    y_i=
    \begin{cases}
        1 & \text{if $W_i\in \mathbb{O}(d)$,}\\
        0 & \text{if $W_i\in \mathbb{U}(d)$,
        }
    \end{cases}
\end{equation}
which has the same two symmetry groups $\mathfrak{G}_1=\mathbb{O}(d)$ and $\mathfrak{G}_0=\mathbb{U}(d)$ as before.

As shown in Fig.~\ref{fig:orthogonal-circuits}(b), the main advantage of this scenario is that we are now allowed to initialize the $2n$--qubit register to any global state $\ket{\Psi_{\tin}}$, and to simultaneously evolve the first and the second $n$ qubits according to the \textit{same} unitary $W_i$. To capture this additional freedom, we consider models in a new hypothesis class defined as:
\begin{hypothesis}~\label{def:hyp_class_Bell}
We define the Hypothesis Class $\HC_2$, computable in a quantum-enhanced experiment, as composed of functions of the form
\small
\begin{align}
    h_{\thv}(W)=\Tr[U(\thv)(W^{\otimes 2})\dya{\Psi_{\tin}} (W^{\otimes 2})\ad U\ad(\thv) O]\,,
\end{align}
\normalsize
where $U(\thv)$ is quantum neural network acting on the $2n$ qubits,  $O$ is a Hermitian operator, and $\ket{\Psi_{\tin}}$ is an initial state on $2n$ qubits.
\end{hypothesis}

In this context, we can still use Proposition~\ref{prop:no-post-processing-symmetries-two} to show that the following theorem holds.
\begin{theorem}\label{theo:orthogonal-quadratic-2}
Let $h_{\thv}\in\HC_2$ be a model in Hypothesis Class~\ref{def:hyp_class_Bell}, computable in a quantum-enhanced experiment.  There always exist quantum neural networks $U(\thv)$ and operators $O$, resulting in $\widetilde{O}(\thv)=\dya{\Phi^+}$ -- with $\ket{\Phi^+}$ being the Bell state on $2n$ qubits -- such that $h_{\thv}$ is $\mathbb{O}(d)$-invariant, but not $\mathbb{U}(d)$-invariant, that can perfectly classify the dynamics in the time-reversal dataset. The special choice of $\ket{\Psi_{\tin}}=\ket{\Phi^+}$ recovers the algorithm for classifying time-reversal-symmetric dynamics presented in~\cite{huang2021quantum}.
\end{theorem}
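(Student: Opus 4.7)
The plan is to invoke Proposition~\ref{prop:no-post-processing-symmetries-two} with $\mathfrak{G}_0=\mathbb{U}(d)$ and $\mathfrak{G}_1=\mathbb{O}(d)$, requiring an operator $\widetilde{O}(\thv)\in\CC^{(2)}(\mathbb{O}(d))\setminus\CC^{(2)}(\mathbb{U}(d))$. The explicit bases are already in hand: the $\mathbb{U}(d)$ side is spanned by $\{\id\otimes\id,SWAP\}$ (from Theorem~\ref{theo:unitary-quadratic}), while the $\mathbb{O}(d)$ side is spanned by $B_2$ as in Eq.~\eqref{eq:brauer-2}, so the unique additional basis element is the Bell projector $\dya{\Phi^+}$. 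I would therefore set $\widetilde{O}(\thv)=\dya{\Phi^+}$, realized for instance by a $U(\thv)$ that implements a Bell-basis rotation on the $2n$-qubit register followed by the measurement $O=\dya{0}^{\otimes 2n}$ (the red dashed box in Fig.~\ref{fig:orthogonal-circuits}(b)).

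Next, I would rewrite the model output as $h_{\thv}(W)=|\langle\Phi^+|W^{\otimes 2}|\Psi_{\tin}\rangle|^2$ and apply the ricochet identity of Eq.~\eqref{eq:ricochet} twice to obtain
\begin{equation}
(W\otimes W)\ket{\Phi^+}=(WW^t\otimes\id)\ket{\Phi^+}.
\end{equation}
For $W\in\mathbb{O}(d)$ this collapses to $\ket{\Phi^+}$ itself, so the output reduces to the $W$-independent constant $c=|\langle\Phi^+|\Psi_{\tin}\rangle|^2$, confirming $\mathbb{O}(d)$-invariance. Conversely, for a generic $W\in\mathbb{U}(d)$ we have $WW^t\neq\id$ and the output depends nontrivially on $W$, so $\mathbb{U}(d)$-invariance fails, exactly as Proposition~\ref{prop:no-post-processing-symmetries-two} predicts.

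The final step is specialization: choosing $\ket{\Psi_{\tin}}=\ket{\Phi^+}$ yields $c=1$ deterministically for every orthogonal $W$, while for general unitaries $h_{\thv}(W)=|\Tr[WW^t]/d|^2$, which matches the prepare-Bell-pairs, evolve coherently by $W^{\otimes 2}$, measure-in-the-Bell-basis protocol of~\cite{huang2021quantum}. The main obstacle is being rigorous about \emph{perfect} classification in the strict sense of Definition~\ref{def:accuracy}(iii): the exceptional unitaries of the form $\{e^{i\phi}V:\, V\in\mathbb{O}(d),\,\phi\in\mathbb{R}\}$ also satisfy $WW^t=e^{2i\phi}\id$ and hence produce output $1$, so strict perfection fails on this measure-zero subset of $\mathbb{U}(d)$. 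I would resolve this by noting that Haar sampling of label-$0$ unitaries avoids this set with probability one, which combined with the deterministic unit outcome on $\mathbb{O}(d)$ yields the constant-shot separation claimed in the theorem, in sharp contrast to the exponential scaling inherited from Theorems~\ref{th:time-reversal-conventional} and~\ref{theo:orthogonal-quadratic}.
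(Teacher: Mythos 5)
Your proof is correct and follows essentially the same route as the paper: invoke Proposition~\ref{prop:no-post-processing-symmetries-two} to select $\widetilde{O}(\thv)=\dya{\Phi^+}$ as the element of $B_2$ lying outside $\CC^{(2)}(\mathbb{U}(d))$, then evaluate the model on the two classes with $\ket{\Psi_{\tin}}=\ket{\Phi^+}$ via the ricochet identity. You are in fact more explicit than the paper's ``straightforward calculation'' (you derive $h_{\thv}(W)=|\Tr[WW^t]/d|^2$ outright), and your observation that unitaries in $e^{i\phi}\,\mathbb{O}(d)$ also output $1$ --- so that perfect classification in the strict sense of Definition~\ref{def:accuracy} holds only up to a Haar-measure-zero set of label-$0$ instances --- is a genuine refinement of a point the paper passes over with its ``with high probability'' phrasing.
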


\begin{proof}
Recall from Proposition~\ref{prop:no-post-processing-symmetries-two} that $h_{\thv}$ is invariant under $\mathbb{O}(d)$, but not under $\mathbb{U}(d)$, if $\widetilde{O}(\thv)$ is in $\CC^{(2)}(\mathbb{O}(d))$ but not in $\CC^{(2)}(\mathbb{U}(d))$.  
Following the proof of Theorem~\ref{theo:orthogonal-quadratic}, we know that this can be achieved with the choice of $\widetilde{O}(\thv)=\dya{\Phi^+}$. Moreover, a straightforward calculation shows that if we choose $\ket{\Psi_{\tin}}=\ket{\Phi^+}$, we have
\begin{align}
    h_{\thv}(W_i)=1\,, \quad \forall\, W_i \text{ with label } y_i=1\,,
\end{align}
recovering the algorithm in~\cite{huang2021quantum}.
On the other hand, $h_{\thv}(W_i)$ is $W_i$-dependent and will concentrate around zero if $y_i=0$ (see Appendix). 
This means that the model outputs a value of $1$ if the unitary has label $y_i=1$, and outputs a value of $0$ (with high probability) if the unitary has label $y_i=0$. Thus, the models in Hypothesis Class~\ref{def:hyp_class_Bell} can perform perfect classification (according to Definition~\ref{def:accuracy}) of time-reversal-symmetric dynamics.
\end{proof}

As shown in the proof of Theorem~\ref{theo:orthogonal-quadratic-2}, now the model gives non-overlapping predictions for the data in different classes, meaning that we can now perform classification with $\OC(1)$ experiment repetitions. This is in contrast to the model defined in Theorem~\ref{theo:orthogonal-quadratic}, which requires an exponential number of experiments for accurate classification. This illustrates how QML models capable of achieving a quantum advantage naturally emerge in our framework as $\mathfrak{G}$-invariant models.

\subsection{Multipartite entanglement dataset}
\label{sec:entanglement}
In this section, we consider the more involved task of classifying pure quantum states according to the amount of multipartite entanglement they possess. Entanglement has been shown to be a fundamental resource~\cite{horodecki2009quantum,gigena2020one} in quantum information, quantum computation and quantum sensing~\cite{barrett2002nonsequential,gigena2017bipartite,ekert1991quantum,gisin2002quantum,ekert1998quantum,datta2005entanglement,chalopin2018quantum,beckey2020variational,cerezo2021sub}. Hence, its study and characterization is quintessential for quantum sciences. 

Here, we recall that  entanglement is relatively well understood for bipartite pure quantum states (e.g., via the Schmidt decomposition for pure states~\cite{nielsen2000quantum}), and that group-invariance arguments have been previously used to characterize entanglement in bipartite mixed states~\cite{terhal2000entanglement,vollbrecht2001entanglement}. However, the same cannot be said for the multipartite entanglement~\cite{walter2016multipartite,sawicki2014convexity,sawicki2012critical,macikazek2018asymptotic}. In this case, the entanglement complexity scales exponentially with the number of parties and there is no unique measure to quantify it. Thus, we employ our framework to not only obtain $\mathfrak{G}$-invariant QML models -- that can accurately classify multipartite entangled states -- but also to better understand the unique nature of multipartite entanglement. In this context, we also recall that the presence of publicly available datasets, such as the NTangled dataset~\cite{schatzki2021entangled}, composed of quantum states with varying amounts of multipartite entanglement, makes this an extremely rich application for our framework and for benchmarking QML models.

Let $\EC$ be a multipartite entanglement measure satisfying $\EC(\rho)\in[0,1]$, with $\EC(\rho)=0$ if the state is separable, and $\EC(\rho)>0$ if the state contains multipartite entanglement between the $n$ qubits (for instance, see the entanglement measures in Refs.~\cite{walter2013entanglement,prove2021extending,foulds2020controlled,beckey2021computable}). The multipartite  entanglement dataset is of the form  $\SC=\{(\rho_i,y_i)\}_{i=1}^N$, where
\begin{equation}
    y_i = 
    \begin{cases}
        1 & \text{if $\EC(\rho_i)=b>0$,}\\
        0 & \text{if $\EC(\rho_i)=0$.}
    \end{cases}
\end{equation}
Here, the symmetry group $\mathfrak{G}$ associated with the data in both classes is the Lie group $\mathfrak{G}=\bigotimes_{j=1}^n \mathbb{U}(2)$, with an associated Lie algebra $\mathfrak{g}=\bigoplus_{j=1}^n \mathfrak{su}(2)$. This is due to the fact that local unitaries $\bigotimes_{j=1}^n V_j$ do not change the multipartite entanglement in a quantum state. 

\subsubsection{Conventional experiments}

Since computing the entanglement typically requires evaluating a non-linear function of the quantum state~\cite{horodecki2009quantum}, it is expected that models in conventional experiments will not be able to classify the states in this dataset.  This intuition can be confirmed with the following theorem: 
\begin{theorem}\label{theo:entanglement-conventional}
Let $h_{\thv}^{(1)}\in\HC_1$ be a model in Hypothesis Class~\ref{def:hyp_class_1}, computable in a conventional experiment. There exists no quantum neural network $U(\thv)$ and operator $O$ such that $h_{\thv}^{(1)}$ is invariant under the action of $\bigotimes_{j=1}^n \mathbb{U}(2)$ and can classify (i.e., provide any relevant information about) the data in the multipartite entanglement dataset. 
\end{theorem}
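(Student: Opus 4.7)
The plan is to follow the three-step strategy used to prove Theorem~\ref{theo:purity-conventional}, adapting each step to the symmetry group $\mathfrak{G}=\bigotimes_{j=1}^n\mathbb{U}(2)$. First I would apply Proposition~\ref{prop:no-post-processing-symmetries} by computing the commutant $\CC(\mathfrak{G})$. Since $\mathbb{U}(2)$ acts irreducibly on $\mathbb{C}^2$, and a tensor product of irreducible representations of distinct group factors is irreducible on the tensor product space, $\mathfrak{G}$ acts irreducibly on $\mathbb{C}^d$. Schur's lemma then yields $\CC(\mathfrak{G})=\spn(\{\id\})$, so the only $\widetilde{O}(\thv)$ admissible from Proposition~\ref{prop:no-post-processing-symmetries} is of the form $\widetilde{O}(\thv)=\lambda\id$ with $\lambda\in\mathbb{R}$, producing the constant prediction $h_{\thv}^{(1)}(\rho)=\lambda$ that conveys no information about the entanglement of $\rho$.

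Second, I would address Proposition~\ref{prop:no-post-processing-orthogonal}. For the Lie algebra $\mathfrak{g}=\bigoplus_{j=1}^n\mathfrak{su}(2)$ the orthogonal complement $\mathfrak{g}^{\perp}$ is non-trivial, but the proposition additionally requires the data states themselves to lie in $i\mathfrak{g}$ (or in $i\mathfrak{g}^{\perp}$ in the alternative case). Generic entangled states are not sums of single-qubit Hermitian operators, so this precondition fails; and when it does hold, the prediction evaluates to zero by orthogonality. In either case one recovers a model unable to discriminate entangled from separable states.

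Finally I would show, as in the Haar-integration step of Theorem~\ref{theo:purity-conventional}, that no further $\mathfrak{G}$-invariant models exist in $\HC_1$. Invariance under $\mathfrak{G}$ forces $h_{\thv}^{(1)}(\rho)=\mathbb{E}_{V\sim\mathfrak{G}}[h_{\thv}^{(1)}(V\rho V\ad)]$, and the Haar measure on $\mathfrak{G}=\mathbb{U}(2)^{\otimes n}$ factorises into a product of independent single-qubit Haar measures. Iteratively applying the single-qubit twirl $\int_{\mathbb{U}(2)}(V_j\otimes\id_{\bar j})\sigma(V_j\ad\otimes\id_{\bar j})\,d\mu(V_j)=(\id_j/2)\otimes\Tr_j[\sigma]$ qubit by qubit collapses any normalised $\rho$ to the maximally mixed state, giving $\int_{\mathfrak{G}}V\rho V\ad\,d\mu(V)=\id/d$. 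Substituting back produces $h_{\thv}^{(1)}(\rho)=\Tr[O]/d$, a constant in $\rho$ that provides no classification information according to Definition~\ref{def:accuracy}.

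The main obstacle is relatively mild: essentially a careful, iterative application of the single-qubit Haar twirl, which is a routine Weingarten-type computation. The conceptual message mirrors that of Theorem~\ref{theo:purity-conventional}: multipartite entanglement, like purity, is a nonlinear functional of $\rho$, and a single-copy linear expectation value in $\HC_1$ simply does not possess the expressive power to detect such nonlinearities while remaining invariant under the full local-unitary symmetry group.
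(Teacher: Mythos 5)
Your proposal is correct and follows essentially the same three-step strategy as the paper's proof: establish that the commutant is trivial (you via irreducibility of the tensor-product representation and Schur's lemma, the paper via the Commutation Theorem for tensor products), check that Proposition~\ref{prop:no-post-processing-orthogonal} yields nothing useful, and then close the argument with the qubit-by-qubit Haar twirl showing that every $\mathfrak{G}$-invariant model in $\HC_1$ with $k=1$ reduces to the constant $\Tr[O]/d$. The only divergence is in the second step: the paper asserts $\mathfrak{g}^{\perp}=\{\vec{0}\}$, whereas you note (correctly for $n\geq 2$, since e.g.\ $iZ_1Z_2$ is Hilbert--Schmidt-orthogonal to every one-local generator) that $\mathfrak{g}^{\perp}$ is non-trivial and instead argue that the precondition $\rho\in i\mathfrak{g}$ of Proposition~\ref{prop:no-post-processing-orthogonal} fails for generic data states --- a more careful reading that leaves the theorem's conclusion unaffected, since the twirl argument subsumes both cases.
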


\begin{proof}
First let us  verify that Propositions~\ref{prop:no-post-processing-symmetries} and~\ref{prop:no-post-processing-orthogonal} do not yield any adequate model for classification purposes. 
To identify the linear symmetries of $\mathfrak{G}$ required for the application of Proposition~\ref{prop:no-post-processing-symmetries},
we apply the Commutation Theorem for tensor products~\cite{rieffel1975commutation,mendl2009unital}, which  states that the commutant of a tensor product of operators is the tensor product of the commutants of each operator. 
Hence
\begin{align}
    \CC(\mathfrak{G})&={\rm span}\, (\{ \id_2^{\otimes n} \}) \,,
\end{align}
where $\id_2$ denotes the $2\times 2$ identity. This results in the choice  $\widetilde{O}(\thv)=\lambda \id$ ($\lambda \in \Rbb$) and constant model predictions ($\forall \rho_i$, $h^{(1)}_{\thv}(\rho_i)=\lambda$) that cannot distinguish between states.
Additionally, one can verify that  the orthogonal complement of $\mathfrak{g}$ is trivial:
\begin{align}
\mathfrak{g}^\perp &=\{\vec{0}_2^{\otimes n}\}\,,
\end{align}
with $\vec{0}_2$ the $2\times 2$ null matrix, such that models designed under Proposition~\ref{prop:no-post-processing-orthogonal} would also result in uninformative constant value predictions.

Hence, using Propositions~\ref{prop:no-post-processing-symmetries} and~\ref{prop:no-post-processing-orthogonal} to obtain  $\mathfrak{G}$-invariant models from Hypothesis Class~\ref{def:hyp_class_1} (with $k=1$)  will lead to trivial models that cannot classify the states in the multipartite entanglement dataset. Following a similar argument as the one developed in the last part of the proof of Theorem~\ref{theo:purity-conventional}, one can  also verify that no other $\mathfrak{G}$-invariant models exist with $k=1$. 
Indeed, if $h_{\thv}^{(1)}\left(V \rho V\ad\right)$ is invariant for any $V=\bigotimes_{j=1}^d V_j$, it also has to be invariant when uniformly averaged over every $V_j$ in $\mathbb{U}(2)$. Performing this averaging, we obtain
\begin{align}
     \mathbb{E}_{\mathfrak{G}}\left[h_{\thv}^{(1)}\left(V \rho V\ad\right)\right]&=\frac{\Tr[\rho]\Tr[O]}{d}\,,
\end{align}
for  $V=\bigotimes_{j=1}^d V_j$. The only way for  $h_{\thv}^{(1)}(\rho)$ to be equal to $\Tr[\rho]\Tr[O]/d$ for any state $\rho$ is to have $O\propto\id/d$ or $O=\vec{0}$, that is, solutions already  covered by Propositions~\ref{prop:no-post-processing-symmetries} and~\ref{prop:no-post-processing-orthogonal}. 
\end{proof}

\subsubsection{Quantum-enhanced experiments}
Let us first consider the case of $k=2$ copies in Eq.~\eqref{eq:linear_cost}. We can show that the following theorem holds.

\begin{theorem}\label{theo:entanglement}
Let $h_{\thv}^{(2)}\in\HC_1$ be a model in Hypothesis Class~\ref{def:hyp_class_1}, computable in a quantum-enhanced experiment.  There always exist quantum neural networks $U(\thv)$ and operators $O$, resulting in $\widetilde{O}(\thv)={\rm span}\big(\bigotimes_{j=1}^n\{\id_4^{(j)},SWAP^{(j)}\}\big)$, such that $h^{(2)}_{\thv}$ is invariant under the action of $\bigotimes_{j=1}^n \mathbb{U}(2)$ and can perfectly classify the data in the multipartite entanglement dataset. Here,  $\id_4^{(j)}$ denotes the $4\times 4$ identity matrix on the $j$-th qubit of  each copy of $\rho$, and $SWAP^{(j)}$ denotes the operator that swaps the $j$-th qubits of  each copy of $\rho$. There exist special choices of $\widetilde{O}(\thv)$ which recover all the multipartite entanglement measures proposed in Refs.~\cite{brennen2003observable,meyer2002global,rungta2001universal,bhaskara2017generalized,beckey2021computable,carvalho2004decoherence,wong2001potential}.
\end{theorem}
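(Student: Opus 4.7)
The plan is to first identify the full space of quadratic symmetries of $\mathfrak{G}=\bigotimes_{j=1}^n \mathbb{U}(2)$ through Proposition~\ref{prop:no-post-processing-symmetries}, and then to show that this space contains observables that separate entangled from separable pure states, including as special cases the measures cited in the statement. Since $\mathfrak{G}$ factorizes over qubits, I would invoke the Commutation Theorem for tensor products (used already in the proof of Theorem~\ref{theo:entanglement-conventional}) to write $\CC^{(2)}(\mathfrak{G})=\bigotimes_{j=1}^n \CC^{(2)}(\mathbb{U}(2))$. On each single-qubit factor, Schur--Weyl duality (Eq.~\eqref{eq:k_symmetries_unitary}) with $k=2$ and $d=2$ gives $\CC^{(2)}(\mathbb{U}(2))=\mathrm{span}(\{\id_4^{(j)},SWAP^{(j)}\})$, which tensored across qubits yields exactly the span claimed in the theorem. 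This identifies the admissible $\widetilde{O}(\thv)$.

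Next, I would analyze how such observables act on $\rho^{\otimes 2}$. A basis element is labeled by a subset $A\subseteq[n]$: it applies $SWAP^{(j)}$ for $j\in A$ and $\id_4^{(j)}$ for $j\notin A$, and the swap trick gives
\begin{equation}
\Tr\!\left[\rho^{\otimes 2}\!\Big(\bigotimes_{j\in A}\!SWAP^{(j)}\!\otimes\!\bigotimes_{j\notin A}\!\id_4^{(j)}\Big)\right]=\Tr[\rho_A^{2}],
\end{equation}
where $\rho_A$ is the reduced state on the qubits in $A$. Therefore any $\mathfrak{G}$-invariant model of the allowed form evaluates to a linear combination $h^{(2)}_{\thv}(\rho)=\sum_{A\subseteq [n]} c_A \Tr[\rho_A^{2}]$.

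To establish perfect classification, I would use the standard fact that a pure $n$-qubit state is fully separable if and only if $\Tr[\rho_A^{2}]=1$ for every $A\subseteq [n]$, while for any pure multipartite entangled state there exists at least one nontrivial $A$ with $\Tr[\rho_A^{2}]<1$. Choosing, e.g., strictly positive coefficients $c_A$ for all nontrivial $A\neq \emptyset,[n]$, one obtains a model whose output is constant (equal to $\sum_A c_A$) on every separable pure state and strictly smaller on every entangled pure state, meeting Definition~\ref{def:accuracy} for perfect classification. The ansatz and observable realizing such a $\widetilde{O}(\thv)$ can be built by following the same recipe as in Theorem~\ref{theo:unitary-quadratic} (trivial $U(\thv)$ with a linear combination of local SWAP observables, measured e.g.\ via destructive SWAP tests on each qubit pair).

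The last step is to recover the specific measures in~\cite{brennen2003observable,meyer2002global,rungta2001universal,bhaskara2017generalized,beckey2021computable,carvalho2004decoherence,wong2001potential}. For each one, I would exhibit the specific coefficient assignment $\{c_A\}$ that reproduces it, using the known expressions of the Meyer--Wallach global entanglement and its generalizations as averages of $1-\Tr[\rho_A^{2}]$ over subsets of fixed size, and of the concentratable entanglement of~\cite{beckey2021computable} as a uniform average over all $A\subseteq [n]$. Each reduces to a routine identification. The main obstacle is not the group-theoretic part, which is immediate from Schur--Weyl and Proposition~\ref{prop:no-post-processing-symmetries}, but rather tightly relating the linear combinations $\sum_A c_A \Tr[\rho_A^{2}]$ to genuine multipartite entanglement: because $\Tr[\rho_A^{2}]=1$ for every $A$ only on product pure states, any strictly positive choice of $c_A$'s already suffices to witness entanglement, but obtaining a quantitatively meaningful classifier that reproduces the cited measures requires matching normalizations carefully.
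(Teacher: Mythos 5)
Your proposal is correct and follows essentially the same route as the paper's proof: Proposition~\ref{prop:no-post-processing-symmetries} plus the Commutation Theorem for tensor products (with Schur--Weyl on each $\mathbb{U}(2)$ factor) to identify $\CC^{(2)}(\mathfrak{G})$, followed by interpreting the basis elements as subset purities $\Tr[\rho_A^2]$ and matching coefficients to recover the cited measures. Your general parameterization $\sum_A c_A \Tr[\rho_A^2]$ together with the explicit separability criterion is a slightly more systematic packaging of the classification step, which the paper instead argues by exhibiting specific operator choices that are known entanglement monotones.
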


\begin{proof}
From Proposition~\ref{prop:no-post-processing-symmetries} we know that $h_{\thv}^{(2)}$ will be $\mathfrak{G}$-invariant  if $\widetilde{O}(\thv)$ is a quadratic symmetry of $\mathfrak{G}$. Here we can again invoke the Commutation Theorem for tensor products to obtain the space of these symmetries:
\begin{align}\label{eq:quadratic-entanglement}
    \CC^{(2)}(\mathfrak{G})={\rm span}\bigg(\,\bigotimes_{j=1}^n \left\{\id_4^{(j)},SWAP^{(j)} \right\}\bigg)\,,
\end{align}
where $\id_4^{(j)}$ denotes the $4\times 4$ identity matrix acting on the $j$-th qubit of each of the two copies of $\rho$, and where $SWAP^{(j)}$ denotes the operator that swaps the $j$-th qubits of the copies of $\rho$. Note that $\CC^{(2)}(\mathfrak{G})$ is spanned by $2^n$ elements, meaning that there exists an exponentially large  freedom in choosing $\widetilde{O}(\thv)$. Evidently, some choices of $\widetilde{O}(\thv)$ will not be useful for characterizing multipartite entanglement. For instance,  $\widetilde{O}(\thv)=\bigotimes_{j=1}^n \id_4^{(j)}$ leads to trivial model's predictions $h_{\thv}^{(2)}(\rho)=\Tr[\rho^{\otimes 2}]=1$ for any state $\rho$. Similarly, $\widetilde{O}(\thv)=\bigotimes_{j=1}^n SWAP^{(j)}=SWAP$ leads to $h_{\thv}^{(2)}(\rho)=\Tr[\rho^{\otimes 2}SWAP]=\Tr[\rho^2]=1$. 

On the other hand, there are choices for $\widetilde{O}(\thv)$ that can indeed characterize entanglement. For instance,
\begin{equation}\label{eq:impurity}
    \widetilde{O}(\thv)=2\left(\id - SWAP^{(j)}\otimes\id^{(\overline{j})} \right)\,,
\end{equation}
with $\id^{(\overline{j})}$ the identity on all qubits but the $j$-th ones leads to $h_{\thv}^{(2)}(\rho)=2(1-\Tr[\rho_j^2])$ which is the impurity of $\rho_j=\Tr_{\overline{j}}[\rho]$, i.e., the impurity of the reduced state on the $j$-qubit, and a bipartite entanglement measure across the cut $j$-th qubit / rest. Averaging over each of the $n$ qubits, i.e.,
\begin{equation}
    \widetilde{O}(\thv)=\frac{2}{n}\sum_{j=1}^n \left( \id -  SWAP^{(j)}\otimes\id^{(\overline{j})} \right)\,,
\end{equation}
recovers the multipartite entanglement measures of~\cite{brennen2003observable,meyer2002global}. 

Notably, the result in Eq.~\eqref{eq:impurity} can be further generalized as follows.  First, let us define $S=\{1,2,\ldots, n\}$  as the set of integers indexing each qubit, and let $P(S)$ be its \textit{power set} (i.e., the set of subsets of $S$, with cardinality $|P(S)| = 2^n$). Defining the operator
\begin{equation}
\widetilde{O}(\thv)=2\left(\id - \bigotimes_{j\in Q}SWAP^{(j)}\otimes\id^{(\overline{j})}\right)\,,
\end{equation}
for any $Q\in P(S)\backslash\{\emptyset\}$, leads to the generalized version of the Concurrence measure for multipartite pure states in~\cite{rungta2001universal,bhaskara2017generalized}. Even more generally, for any choice of $Q$, the operator
\begin{equation}\label{eq:con-cent}
    \widetilde{O}(\thv)=\id - \frac{1}{2^{|Q|}}\bigotimes_{j\in Q} \left(\id_4^{(j)} + SWAP^{(j)}\right)\otimes\id^{\overline{Q}}\,,
\end{equation}
where $\overline{Q}=S\backslash Q$, leads to the Concentratable Entanglement family of multipartite entanglement measures introduced in~\cite{beckey2021computable} (see Proposition 3 in~\cite{beckey2021computable}), where the special case $Q=S$ also leads to the measure of~\cite{carvalho2004decoherence}. On the other hand, 
\begin{equation}\label{eq:con-cent-2}
    \widetilde{O}(\thv)=\id - \frac{1}{2^{n}}\bigotimes_{j=1}^n \left(\id_4^{(j)} - SWAP^{(j)}\right)\,,
\end{equation}
leads to the $n$-tangle measure~\cite{wong2001potential} (see Proposition 5 in~\cite{beckey2021computable}). 

Since several of the previous choices for $\widetilde{O}(\thv)$ lead to entanglement monotones, the model's output will be different for data in different classes. Hence, one can perfectly classify the data in the multipartite entanglement dataset.
\end{proof}

The results in Theorem~\ref{theo:entanglement}  showcase how  Propositions~\ref{prop:no-post-processing-symmetries}--\ref{prop:no-post-processing-orthogonal-two} can lead to extremely powerful and non-trivial results. By simply imposing the $\mathfrak{G}$-invariance condition on the model one is able to naturally find an exponentially large manifold of solutions capable of classifying the states in the multipartite entanglement dataset. The latter leads to the intriguing possibility that $\CC^{(2)}(\mathfrak{G})$ contains solutions leading to new entanglement measures. 

Going further, one could also investigate the potential of models acting on more than $k=2$ copies, a prospect that has been largely unexplored. 
Here we know from Propositions~\ref{prop:no-post-processing-symmetries} that if $\widetilde{O}(\thv)$ is a $k$-th order symmetry, then the model $h_{\thv}^{(k)}$ is $\mathfrak{G}$-invariant under the action of $\bigotimes_{j=1}^n \mathbb{U}(2)$. Combining Eq.~\eqref{eq:k_symmetries_unitary} and the  Commutation Theorem leads to
\begin{align}\label{eq:quadratic-entanglement-3}
    \CC^{(k)}(\mathfrak{G})={\rm span}\, \bigg(\bigotimes_{j=1}^n S_k^{j}\bigg)\,,
\end{align}
where $S_k^{j}$ denotes the Symmetric group acting on the $k$ copies of the $j$-th qubit. Since the dimension of $\CC^{(k)}(\mathfrak{G})$ scales as $(k!)^n$, i.e., exponentially with $k$, the manifold of $\mathfrak{G}$-invariant models is likely to lead to a rich variety of entanglement measures.

\section{Discrete Group-Invariant Models}\label{section:discrete}
In the previous section we focused solely on situations where the symmetry group associated with the dataset was a unitary representation of some compact Lie group. 
Still, our formalism can be equally applied in the case of representations of discrete groups. 
Discrete groups are the relevant mathematical structure, for instance, when the quantum data is invariant under a finite set of permutations. 
This covers cases involving spatial invariance of condensed matter states on a lattice, or structural invariances in states of molecular systems. 
To illustrate such potential applications, we now address the task of classifying states belonging to a dataset with symmetry group $\mathfrak{G}=S_n$, i.e., the \textit{Symmetric group} consisting of all the $n!$ permutations over a set of $n$ indices.

\subsection{Graph isomorphism dataset}
In this section, we consider a dataset related to the so-called graph isomorphism problem, where the goal is to determine if two graphs are isomorphic. 
This classification task has a rich history in computational sciences~\cite{kobler2012graph}, and is known to be in the NP complexity class (although it has not been shown to be NP-complete). To solve this problem, several classical algorithms (with quasipolynomial complexity in the graph size~\cite{babai2016graph}), and also quantum heuristics~\cite{hen2012solving,gaitan2014graph,zick2015experimental,izquierdo2020discriminating} have been proposed. When using a quantum model for graph classification purposes, the first step is to encode graphs onto quantum states.  Here we take such encoding to be fixed and start by detailing how it is performed and how the ``graph isomorphism'' dataset is generated.

Recall that a graph is specified as $\mathcal{G} = (\mathcal{V} , \mathcal{E})$, where $\mathcal{V}$ is a collection of $n$ nodes, and $\mathcal{E}$ is a collection of edges. Two graphs $\GC$ and $\GC'$ are said to be isomorphic (and denoted as $\GC\cong \GC'$) if there exists a bijection between the sets of edges belonging to $\GC$ and $\GC'$. 
To build the dataset, we fix two reference non-isomorphic graphs  $\GC^0$ and $\GC^1$ ($\GC^0\not\cong\GC^1$), and generate graphs $\GC_i$ that are isomorphic to either $\GC^0$ or $\GC^1$.
To each of these graphs we assign labels $y_i=0$ or $y_i=1$, if it is isomorphic to $\GC^0$ or  $\GC^1$, respectively. 

Next, we encode these graphs $\{ \GC_i\}$ into quantum states $\{\rho_i\}$. This is achieved by evolving an initial $S_n$-invariant fiduciary state $\rho_{\rm in}$ (e.g., $\rho_{\rm in}=\dya{+}^{\otimes n}$)  as
\begin{equation}\label{eq:graph-states}
    \rho_i=W(\GC_i) \rho_{\rm in} W\ad(\GC_i)\,,
\end{equation}
where $ W(\GC_i)=e^{-i t H(\GC_i)}$. Here, $t>0$ is a fixed evolution time chosen such that the action of  $W(\GC^0)$ and $W(\GC^1)$ over $\rho_{\rm in}$ is different, and $H(\GC_i)$ is an Hamiltonian whose topology is that of the graph $\GC_i$. Specifically, we take it to be defined as
\begin{equation}\label{eq:graph-states-ham}
    H(\GC_i)=\sum_{(j,j')\in\mathcal{E}_i} Z_j Z_{j'}+\sum_{j\in\mathcal{V}_i}X_j\,,
\end{equation}
with $Z_j$ and $X_j$ denoting the Pauli $Z$ and $X$ operators acting on qubit $j$, respectively. 
We note that there exists more general ways to encode and process graph information in quantum states. In particular, the \textit{quantum graph convolutional neural network} introduced in Ref.~\cite{verdon2019quantumgraph} generalizes the unitary $W(\GC_i)$ used here and is specifically tailored to represent quantum systems that have a graph structure (see Appendix~\ref{app:QGCNN} for a detailed description).

\begin{figure}[t]
\centering
\includegraphics[width=.9\columnwidth]{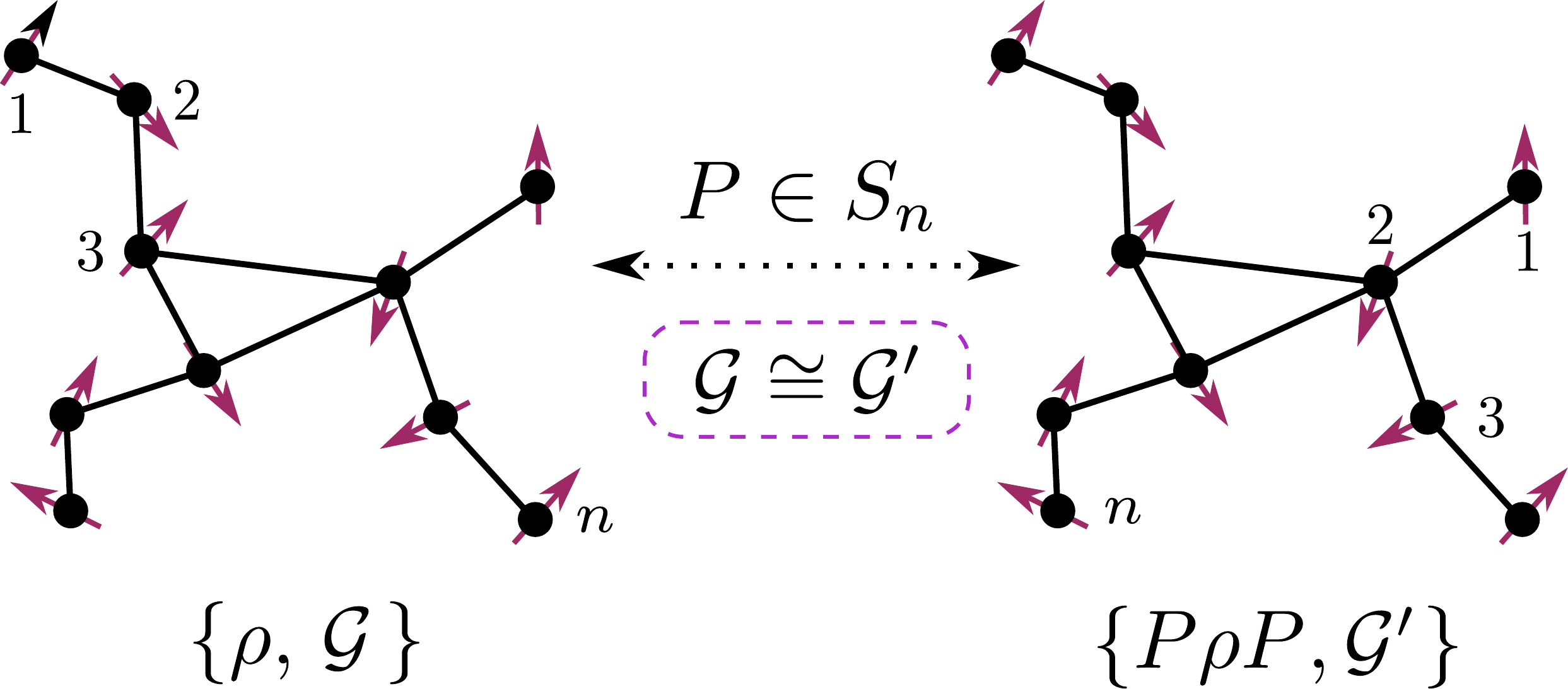}
\caption{\textbf{Permutation invariance in the graph isomorphism dataset.} Consider the state $\rho$ representing a quantum system of $n$-spins interacting under some Hamiltonian that follows the topology of an underlying graph $\GC$. By conjugating the state with an element $P\in S_n$, one obtains a new quantum state $P\rho P$ whose interaction graph $\GC'$ is isomorphic to $\GC$. That is, $\rho$ and $P\rho P$ have the same label in the dataset.}
\label{fig:graph}
\end{figure}

Taken together, the graph generation and the encoding in Equation~\eqref{eq:graph-states} allow us to define the graph isomorphism dataset as a collection $\SC=\{(\rho_i,y_i)\}_{i=1}^N$ of states $\rho_i$ with labels
\begin{equation}
    y_i = 
    \begin{cases}
        1 & \text{if $\GC_i\cong\GC^1$,}\\
        0 & \text{if $\GC_i\cong\GC^0$.}
    \end{cases}
\end{equation}
As shown in Fig.~\ref{fig:graph}, the states $\rho_i$ can be thought as representing an $n$-qubit quantum system whose interaction topology follows that of a graph $\GC_i$.  The symmetry group $\mathfrak{G}$ associated with both classes in the dataset is the Symmetric group $S_n$, as one can map states within the same class via the action of elements in $S_n$ (see Fig.~\ref{fig:graph}). Explicitly, let $P$ be an operator in $S_n$, and define the state $\rho_i'=P\rho_i P$, which can be expressed as $\rho_i'=P W(\GC_i)\rho_{\rm in} W\ad(\GC_i)\ad P$. 
Using the fact that $P e^{-i t H(\GC_i)}P=e^{-i t H(\GC_i')}$ (for a Hamiltonian  $H(\GC_i')=P H(\GC_i)P$ with $\GC_i'\cong\GC_i$) we have $\rho_i'=W(\GC_i')P \rho_{\rm in}P W\ad(\GC_i')$. 
Since $\rho_{\rm in}$ is $S_n$-invariant, then $P \rho_{\rm in}P=\rho_{\rm in}$, and given that $\GC_i'\cong \GC_i$, we conclude that the state $\rho_i'=W(\GC_i') \rho_{\rm in}W\ad(\GC_i')$ shares the same label as $\rho_i$.

Having defined the dataset of interest, we proceed to show that models in conventional experiments suffice to classify the data:
\begin{theorem}\label{theo:graphs}
Let $h_{\thv}^{(1)}\in\HC_1$ be a model in Hypothesis Class~\ref{def:hyp_class_1}, computable in a conventional experiment. There always exist quantum neural networks $U(\thv)$ and operators $O$, resulting in $\widetilde{O}(\thv)\in{\rm span}(A^{\otimes n})$ with $A$ in $ \mathbb{U}(2)$, such that $h_{\thv}^{(1)}$ is invariant under the action of $S_n$.
\end{theorem}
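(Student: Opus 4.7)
The plan is to apply Proposition~\ref{prop:no-post-processing-symmetries} with $k=1$, which reduces the task of establishing $S_n$-invariance to exhibiting a choice of $U(\thv)$ and $O$ such that $\widetilde{O}(\thv) = U\ad(\thv)\, O\, U(\thv)$ sits inside the linear commutant $\CC(S_n)$. A useful guiding observation is that, by Schur--Weyl duality on $(\mathbb{C}^{2})^{\otimes n}$, the commutant of the $S_n$-action by permutation of tensor factors coincides with ${\rm span}(\{A^{\otimes n} : A\in\mathbb{U}(2)\})$, so the theorem is really asserting that the entire linear commutant can be populated by product operators of this form.

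First I would verify directly, without invoking Schur--Weyl, that $A^{\otimes n}\in\CC(S_n)$ for every single-qubit $A$: since permuting $n$ identical tensor factors leaves the product unchanged, one has $P\, A^{\otimes n}\, P\ad = A^{\otimes n}$ for every permutation operator $P\in S_n$, and hence $[A^{\otimes n},P]=0$. Next I would exhibit an explicit realization via a \emph{weight-tied}, permutation-equivariant ansatz in which the same parameterized single-qubit unitary is applied on every wire. Concretely, take $U(\thv) = V(\thv)^{\otimes n}$ with $V(\thv)\in\mathbb{U}(2)$, together with a product observable $O = B^{\otimes n}$ for a single-qubit Hermitian $B$. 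A direct computation gives
\[
   \widetilde{O}(\thv)\; =\; \bigl(V\ad(\thv)\, B\, V(\thv)\bigr)^{\otimes n}\; =\; A(\thv)^{\otimes n},
\]
with $A(\thv) = V\ad(\thv) B V(\thv)$, so that $\widetilde{O}(\thv)\in{\rm span}(A^{\otimes n})$, as required. Invariance then follows in one line from Proposition~\ref{prop:no-post-processing-symmetries}: $h_{\thv}^{(1)}(P\rho P\ad) = \Tr[\rho\,\widetilde{O}(\thv)] = h_{\thv}^{(1)}(\rho)$ for every $P\in S_n$.

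I do not anticipate a genuine technical obstacle in the invariance claim itself, which collapses to the short computation above once the weight-tied ansatz is adopted. The subtler issue, which the theorem deliberately sidesteps, is the \emph{discriminating} power of such product operators: by Schur--Weyl they do span the full commutant, but a natural, easily measurable element such as $Z^{\otimes n}$ will only be sensitive to those features of the graph encoding in Eqs.~\eqref{eq:graph-states}--\eqref{eq:graph-states-ham} that survive the $S_n$-symmetrization imprinted by the fiduciary state. Determining which $\widetilde{O}(\thv)\in{\rm span}(A^{\otimes n})$ actually separates the two isomorphism classes for given reference graphs $\GC^0,\GC^1$ is therefore where the substantive work of building a classifier, beyond mere invariance, would lie.
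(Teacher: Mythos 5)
Your proof is correct and follows essentially the same route as the paper's: both reduce $S_n$-invariance to membership of $\widetilde{O}(\thv)$ in the linear commutant $\CC(S_n)$ via Proposition~\ref{prop:no-post-processing-symmetries} with $k=1$, and identify that commutant with ${\rm span}(\{A^{\otimes n}\})$ through Schur--Weyl duality. Your explicit weight-tied realization $U(\thv)=V(\thv)^{\otimes n}$, $O=B^{\otimes n}$, together with the direct check that $P\,A^{\otimes n}P\ad=A^{\otimes n}$, goes slightly beyond the paper (which only asserts the required form of $\widetilde{O}(\thv)$ without constructing a circuit), and your closing remark on discriminating power mirrors the paper's own post-theorem discussion of how to choose $\widetilde{O}(\thv)$ within the commutant.
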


\begin{proof}
From Proposition~\ref{prop:no-post-processing-symmetries} we know that $h_{\thv}^{(1)}$ will be invariant under the action of $S_n$ if $\widetilde{O}(\thv)$ is a linear symmetry of $S_n$. Using the Schur-Weyl duality leads to~\cite{goodman2000representations}
\begin{equation}\label{eq:k_symmetries-ent}
    \CC(S_n)={\rm span}(\{A^{\otimes n}\,|\quad \forall A\in\Ubb(2)\}),
\end{equation}
meaning that the operator $\widetilde{O}(\thv)$ has to be a Hermitian linear combination of $n$-fold tensor products of single-qubit unitaries. Notably, the dimension of this solution manifold grows polynomially with $n$ as the dimension of $\CC(S_n)$  can be shown to follow the Tetrahedral numbers.
\end{proof}

Finally, it remains to identify an adequate $\widetilde{O}(\thv)$ among the space of operators that was just defined. 
This choice should be taken such that the model predictions are (maximally) distinct for states belonging to different classes, that is, $h_{\thv}^{(1)}(\rho_{i})\neq h_{\thv}^{(1)}(\rho_{i'})$ if $y_i\neq y_{i'}$. 
For an adequate parameterization of $\widetilde{O}(\thv)$, this search could be turned into an optimization task. 
Given that, by construction, the model predicts the same value for any states $\rho_i$ belonging to the same class, this optimization would only require a single representative state for each class.
Otherwise, one could employ heuristically-defined or physically-motivated operators. 
In particular, we highlight that the operators studied in~\cite{hen2012solving,gaitan2014graph,zick2015experimental,izquierdo2020discriminating} to distinguish non-isomorphic graphs belong to the family of $\widetilde{O}(\thv)$ yielded by Theorem~\ref{theo:graphs}. 

Overall, Theorem~\ref{theo:graphs} examplifies how our present framework can be readily applied to datasets with discrete symmetries. While studied here for the case of $\mathfrak{G}=S_n$, this can be specialized to subgroups of $S_n$, such as the group of reflexions, translations, or more subtle discrete symmetries which naturally arise in condensed-matter models and quantum chemistry problems.

\section{Equivariant Quantum Neural Networks}\label{sec:equivariance}

While the framework laid here provides the \textit{ultimate form} that a $\mathfrak{G}$-invariant model $h_{\thv}$ should adopt, it does neither prescribe how to actually parameterize the quantum neural networks $U(\thv)$, nor tell us how to choose the measurement operator $O$ that realizes  $\widetilde{O}(\thv)=U\ad(\thv) O U(\thv)$ such that it complies with our theorems. 

For this purpose, it is convenient to consider the action of the quantum neural network and the measurement process separately, and note that these can be generally though as concatenated maps. More generally, a constructive way to achieve $\mathfrak{G}$-invariance of general QML models starts by decomposing the model as
\begin{equation}\label{eq:decomp_model}
    h_{\thv} = \EC^{(M)}_{\thv^{(M)}} \circ  \cdots \circ \EC^{(1)}_{\thv^{(1)}} \,,
\end{equation}
i.e., as a composition of $M$ maps $\EC^{(m)}_{\thv^{(m)}}$, with integer labels $m=1,\ldots,M$, each parameterized by a subset of parameters $\thv^{(m)} \subset \thv$.
For instance, one such map could represent the action of a QNN (or of its individual constituents, i.e., its layers), the final measurement operation, steps of post-processing, or even the process of encoding classical data into quantum states in the first place.

Although imposing \textit{invariance} at the map level effectively results in global invariance of the model, this is quite restrictive. A more relaxed approach towards the construction of group invariant models involves the concept of \textit{equivariance}~\cite{cohen2016group,kondor2018generalization,bronstein2021geometric,castelazo2021quantum} which is now defined.
\begin{definition}[$\mathfrak{G}$-equivariance]\label{def:equivariance}
    Given a group $\mf{G}$ with an action on spaces $\AC$ and $\BC$, a function $\EC: \AC \mapsto \BC $ is called $\mathfrak{G}$-equivariant if it commutes with the action of the group
    \begin{equation}
        \EC(V \cdot x)=V \cdot \EC(x)\,,
    \end{equation}
    for all elements $V\in \mathfrak{G}$ and inputs $x\in\AC$.
\end{definition}
That is, a function is $\mf{G}$-equivariant if group-shifting the input ($x \mapsto V\cdot x$) produces a group-shifted output ($\EC(x) \mapsto V\cdot \EC(x)$). It can be verified that (i) $\mathfrak{G}$-equivariance of the intermediary maps ($m < M)$ along with (ii) $\mathfrak{G}$-invariance of the final map ($m=M$) is sufficient to ensure $\mathfrak{G}$-invariance of the composed model.
Intuitively, equivariance permits the propagation of the action of elements of $\mathfrak{G}$ up to the final map, so that the symmetries get preserved.

Accordingly, a model $h_{\thv}$ belonging to Hypothesis Class~\ref{def:hyp_class_1} could be split in terms of the transformation of the input state and the final measurement. 
That is, $h_{\thv}(\rho)=\EC^{(2)} \circ \EC^{(1)}_{\thv}(\rho)$ with $\EC^{(1)}_{\thv}(\rho)= U(\thv) \rho^{\otimes k} U^\dagger(\thv)$ and $\EC^{(2)}(\rho)= \Tr [O\rho^{\otimes k}]$. Thus, the model $h_{\thv}$ will be $\mathfrak{G}$-invariant if $\EC^{(2)}$ is invariant and $\EC^{(1)}_{\thv}$ is equivariant. As before, invariance of  $\EC^{(2)}$ can be achieved by choosing the measurement operator $O$ to be a Hermitian operator satisfying one of the Propositions~\ref{prop:no-post-processing-symmetries}--\ref{prop:no-post-processing-orthogonal-two}. On the other hand, equivariance of $\EC^{(1)}_{\thv}$ can be achieved in a way very close in essence to Proposition~\ref{prop:no-post-processing-symmetries}. Specializing Definition~\ref{def:equivariance} to the case of unitary maps acting on $k$ copies of $\rho$, we see that equivariance of such QNN (or of its layers) is equivalent to requiring that
\begin{equation}
    [U(\thv), V^{\otimes k}] =0,\quad\forall V \in \mathfrak{G}\,.
\end{equation}
That is, a unitary $U(\thv)$ is $\mf{G}$-equivariant if $U(\thv)\in\CC^{(k)}(\mathfrak{G})$ (it belongs to the $k$-th commutant of $\mf{G}$). For instance, for graph classification, the quantum graph convolutional neural network, presented in Appendix~\ref{app:QGCNN}, can be  verified to be be equivariant under the action of $S_n$.

Overall, approaching $\mf{G}$-invariance through the decomposition in Eq.~\eqref{eq:decomp_model} has the advantage of modularity -- equivariant maps could be more easily identified and reused across different models -- and also allows for the study of more general models than the ones in Hypothesis Class~\ref{def:hyp_class_1}. 
For instance additional steps of post-processing, or of encoding of classical data into quantum states, can be described as additional maps to be composed, and readily fit in such framework. Finally, although in this manuscript we have exclusively focused on classification tasks, i.e., where the model outputs scalars, one can consider the more general setting of a model producing operator valued outputs (i.e., in the case of quantum generative modeling \cite{romero2017quantum,verdon2019quantum,zhu2019training}). In such a case, one would be interested in global equivariance of the model, rather than invariance.

\section{Conclusions}\label{section:conclusions}

In this work, we presented a theoretical framework to design QML models that, by construction, respect the symmetries of a group $\mathfrak{G}$ associated to the dataset. This approach has several benefits~\cite{bronstein2021geometric}: it is more data efficient, it reduces the model's search space (less parameters), it often leads to better generalization, and the classification accuracy is robust under perturbations drawn from the symmetry group.  Our main contributions are as follows. First, in Propositions~\ref{prop:no-post-processing-symmetries}--\ref{prop:no-post-processing-orthogonal-two} we  leveraged properties from representation theory to determine the conditions that lead to $\mathfrak{G}$-invariance. These results constitute guidelines for designing group-invariant models and were used to show how models such as those in Hypothesis Class~\ref{def:hyp_class_1}  can be made $\mathfrak{G}$-invariant and accurately solve certain supervised learning tasks. We then showcased the power of our framework for several QML tasks, where we find how embedding symmetry information into the model allows us to recover in an elegant and formal way several algorithms from the literature that were heuristically obtained, or that were obtained through trial-and-error.

As a first application, we addressed the task of classifying pure states from mixed states in \textit{conventional} and \textit{quantum-enhanced} experiments (i.e., experiments with and without access to a quantum memory). For this case, the symmetry group is the unitary group, since applying a unitary to a quantum state preserves its spectral properties.  Theorem~\ref{theo:purity-conventional} showed that there exist no conventional experiments that are  $\mathfrak{G}$-invariant and that can classify the data in the purity dataset. However, by allowing the QML model to coherently act on two copies of each state in the dataset we showed  in Theorem~\ref{theo:unitary-quadratic} that there exist models that are $\mathfrak{G}$-invariant and can classify the data. These are based on taking the expectation value of the SWAP operator, which naturally appeared through the Schur-Weyl duality as an element of the Symmetric group.

The second task we considered was that of classifying time-reversal symmetric states from Haar random states.   In Theorems~\ref{th:time-reversal-conventional} and~\ref{theo:orthogonal-quadratic} we showed that models in both conventional and quantum-enhanced experiments can be used to distinguish such states. Surprisingly, we recovered the well-known Bell basis measurement scheme for detecting time-reversal, where the Bell measurement operator appeared naturally as an element of the basis of the Brauer algebra.  Moreover, by the seemingly innocuous change of allowing the model to access the unitaries that prepare the states in the dataset (rather than the states themselves), we can obtain in Theorem~\ref{theo:orthogonal-quadratic-2} the model used in  Ref.~\cite{huang2021quantum} to show that quantum-enhanced experiments can classify the data with exponentially less experiments than conventional experiments. This connects our work with recent research showing that QML models are capable of exponential advantages in some tasks of data classification. 

We then applied our framework to a dataset composed of pure states with different amounts of multipartite entanglement. Here, the symmetry group preserving entanglement is the $n$-fold direct product of the local unitary group. This example proved the power of our framework as we showed that all the entanglement measures of  Refs.~\cite{brennen2003observable,meyer2002global,rungta2001universal,bhaskara2017generalized,beckey2021computable,carvalho2004decoherence,wong2001potential} are special cases of the family of $\mathfrak{G}$-invariant models defined in Theorem~\ref{theo:entanglement}. Moreover, we conjectured that allowing the QML model to access more than two copies of each quantum state and measuring the expectation value of local permutation operators can lead to new entanglement measures. Interestingly, we recently became aware of the work in Ref.~\cite{liu2022detecting} where it was shown such  expectation values can indeed detect the presence entanglement.  

Additionally, we showed how our results extend beyond continuous Lie groups, by studying a problem of classification in a quantum graph isomorphism dataset. That is, we addressed the task of determining if a given graph-encoded quantum state belongs to one isomorphism class or the other. In this case, the symmetry group associated with the dataset is the Symmetric group. In Theorem~\ref{theo:graphs}, we identified $\mathfrak{G}$-invariant models capable of classifying the data in such graph-isomorphism dataset.

Our results take one of the first  steps towards a general theory of QML models with sharp geometric priors based on the dataset symmetries. Since our work is  inspired by the theory and success of geometric deep learning, we envision that soon enough the field of \textit{geometric quantum  machine learning} will be a thriving and exciting field.

\section{Outlook}\label{section:outlook}

Here we overview some questions  left unanswered by our results, and propose different paths forward.

\subsection{Equivariance}
As detailed in Section~\ref{sec:equivariance}, the concept of equivariance in quantum neural networks may play a central role when building models that respect the symmetries of a given dataset. While a few  examples of equivariant quantum neural networks have been proposed, such as the Quantum Convolutional Neural Network~\cite{cong2019quantum} which respects translational symmetry, the Quantum Convolutional Graph Neural Network~\cite{verdon2019quantumgraph} which respects $S_n$-symmetry in graphs, the $\mathbb{U}(d)$-equivariant ansatz of~\cite{zheng2021speeding}, or the graph automorphism group-invariant ansatz in~\cite{sauvage2022building}, it is worth noting that these are the exception to the rule. Most quantum neural networks in the literature are not equivariant, and do not use information about symmetries in their design. Hence, much work remains to be done in the path towards general equivariant architectures, especially to guarantee that they have circuit depth and connectivity requirements compatible with near-term quantum hardware.

\subsection{Trainability: Expressibility and gradient magnitudes}
Arguably, one of the main threats to the trainability of QNNs are Barren Plateaus (BPs), a phenomenon by which gradients along the parameter landscape become exponentially concentrated around zero as the system size grows~\cite{holmes2021connecting,mcclean2018barren,cerezo2020cost,sharma2020trainability,thanasilp2021subtleties,arrasmith2021equivalence}.  In the presence of BPs, an exponential number of measurement shots is required to correctly identify a minimizing direction on the landscape. Given such limitations, understanding the conditions that lead to their presence has been the subject of extensive work~\cite{grant2019initialization,cerezo2020cost,pesah2020absence,volkoff2021large,larocca2021diagnosing}.

Naively, one would be tempted to choose QNNs to be highly expressive so that good approximations of the relevant unitary transformation can be achieved. Nevertheless, Ref.~\citep{holmes2021connecting} unveiled a connection between the expressibility of an ansatz and the magnitudes of the gradients: highly expressive ansätze were shown to exhibit BPs, suggesting that expressibility should be limited to give room for trainability. 
Later on, Ref.~\citep{larocca2021diagnosing} pointed towards the Lie closure of the gate generators of an ansatz as a measure of its \textit{ultimate} expressibility. 
Most importantly, when the dimension of such Lie closure grows exponentially with the system size (as is the case for problem-agnostic architectures such as the harware efficient ansatz~\cite{kandala2017hardware,mcclean2018barren,cerezo2020cost}) there exists some critical number of layers beyond which barren plateaus are known to dominate the parameter landscapes. Finally, we note that the size of the Lie closure has also been related to the number of parameters needed to overparametrize a QNN~\citep{larocca2021theory}. 
Therefore, given a fixed number of parameters we expect, in general, less expressive ansatz to have more favourable landscapes.
Overall, all these results point towards the importance of reducing as much as possible (in a sensible way) the expressivity of QNNs.

In this context, building models with strong geometric priors such as equivariant QNNs constitute a sensible choice. By constraining the expressibility of the ansatz to the relevant region only, these symmetry-based proposals emerge as goldilocks candidates for trainability-aware ansatz design. While the exact improvement in tranability will be certainly problem dependent, there is already evidence that equivariant QNNs do indeed lead to better performance and trainability in several archetypal near-term algorithms~\cite{sauvage2022building}.

\subsection{Generalization}
Complementary to its trainability, the ability of a model to generalize to unseen data is key to its applicability in realistic scenarios. 
While errors evaluated on a training dataset are the main metric when training a model, its practical success should be gauged when applied to new testing data.
The generalization error quantifies the gap between training and testing errors.
In the realm of QML, recent results~\cite{caro2021generalization} have shown that such generalization error is upper bounded by a quantity scaling as $\sqrt{T/N}$ where $T$ denotes the number of parameters, and $N$ the number of training data.
As such, given a fixed training dataset, reducing the expressibility of a model (by means of equivariant QNNs with appropriate geometric priors), and thus the number of free parameters, is expected to yield better model generalization.

\subsection{Quantum advantage}

The gold standard for quantum machine learning models, and for quantum algorithms in general, is being able to solve a given task faster that any classical method. As exemplified by our main results (see Sec.~\ref{sec:time-reversal}), the concept of $\mathfrak{G}$-invariance is not tied to that of computational advantage, as there exists $\mathfrak{G}$-invariant models capable, but also incapable, of achieving  a quantum advantage. Hence, it will be fundamental to determine the key features that lead to models with favourable scalings.

\subsection{More general models and learning scenarios}

In this work we considered QML models of the form in Hypothesis Class~\ref{def:hyp_class_1}. However, these are not the most general models one can have. For instance, the ability to perform non-trivial post-processing on the measurement outcomes~\cite{cincio2018learning} or employing randomized measurement techniques~\cite{elben2022randomized} can greatly increase the model's power and performance~\cite{huang2021provably}. 
We expect that the principles exposed here can be applied to more general settings opening up the possibility of obtaining $\mathfrak{G}$-invariance with methods beyond those  described in Propositions~\ref{prop:no-post-processing-symmetries}--\ref{prop:no-post-processing-orthogonal-two}. Moreover, while the concepts of $k$-th order symmetries and orthogonal complements played a key role in our derivation of $\mathfrak{G}$-invariant models, we expect that other properties will be needed to understand group-invariance in more general settings. 

Finally we highlight that we have mainly focused on supervised tasks of binary classification. Nevertheless, the ideas of $\mathfrak{G}$-invariance should also be applied to more general supervised learning scenarios (including regression problems), or to unsupervised learning scenarios.

\section{Acknowledgments}

We thank Robert Zeier for helpful and insightful discussions.  ML and PJC were supported by the U.S. Department of Energy (DOE), Office of Science, Office of Advanced Scientific Computing Research, under the Accelerated Research in Quantum Computing (ARQC) program. ML was also supported by the Center for Nonlinear Studies at LANL.   PJC and MC were also initially supported by the LANL ASC Beyond Moore's Law project.  FS was supported by the Laboratory Directed Research and Development (LDRD) program of Los Alamos National Laboratory (LANL) under project number 20220745ER. MC was supported by the LDRD program of LANL under project number 20210116DR.  This work was also supported by the Quantum Science Center (QSC), a National Quantum Information Science Research Center of the U.S. Department of Energy (DOE). X, formerly known as Google[x], is
part of the Alphabet family of companies, which includes Google, Verily, Waymo, and others (\url{www.x.company}).

\bibliography{quantum.bib}

\cleardoublepage

\onecolumngrid

\appendix

\begin{center}
	{\Large \bf Appendices} 
\end{center}

\setcounter{section}{0}
\setcounter{proposition}{0}
\setcounter{figure}{0}
\setcounter{corollary}{0}
\setcounter{definition}{0}
\setcounter{lemma}{0}

\bigskip

Here we present the additional details and proofs for some of the  results in the main text.

\section{ Hermitian part of the commutant}\label{app:A}

Let us first show that the following lemma holds

\begin{lemma}
Let $\CC^{( k)}(\mathfrak{G})$ denote the $k$-th order symmetries of   $\mathfrak{G}\subseteq\mathbb{U}(d)$. Then, for any matrix $A$ in $\CC^{( k)}(\mathfrak{G})$, its Hermitian conjugate $A\ad$ is also in $\CC^{( k)}(\mathfrak{G})$.
\end{lemma}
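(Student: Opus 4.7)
The plan is to exploit two basic facts: (i) $\CC^{(k)}(\mathfrak{G})$ is defined by a commutation relation, and commutation is well-behaved under Hermitian conjugation; (ii) the elements of $\mathfrak{G}$ are unitaries forming a group, so inverses lie back in $\mathfrak{G}$. Together these let us turn a commutation identity for $A$ into one for $A^\dagger$ by simply taking adjoints and relabeling.

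Concretely, I would start from the defining condition in Eq.~\eqref{eq:k_symmetries}: $A \in \CC^{(k)}(\mathfrak{G})$ means $A\, V^{\otimes k} = V^{\otimes k} A$ for every $V \in \mathfrak{G}$. Taking the Hermitian adjoint of both sides and using $(XY)^\dagger = Y^\dagger X^\dagger$ together with $(V^{\otimes k})^\dagger = (V^\dagger)^{\otimes k}$ yields
\begin{equation}
 (V^\dagger)^{\otimes k} A^\dagger \;=\; A^\dagger (V^\dagger)^{\otimes k}, \qquad \forall V\in\mathfrak{G}.
\end{equation}

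Next, since $\mathfrak{G}\subseteq \mathbb{U}(d)$, unitarity gives $V^\dagger = V^{-1}$, and the group property of $\mathfrak{G}$ ensures $V^{-1}\in \mathfrak{G}$. Writing $W := V^{-1}$ and noting that $W$ ranges over all of $\mathfrak{G}$ as $V$ does, the displayed identity becomes $[A^\dagger, W^{\otimes k}]=0$ for every $W\in\mathfrak{G}$, which is exactly the statement that $A^\dagger \in \CC^{(k)}(\mathfrak{G})$.

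I do not anticipate any real obstacle here: the argument is essentially a bookkeeping exercise, and the only substantive inputs are unitarity (to turn $V^\dagger$ into $V^{-1}$) and closure of $\mathfrak{G}$ under inversion. The same lemma would fail for a non-unitary representation or for a subset of $\mathbb{U}(d)$ that is not closed under inverses, which is a useful sanity check on where the hypotheses are used. As a downstream remark, combining this lemma with the decomposition $A = \tfrac{1}{2}(A+A^\dagger) + \tfrac{1}{2}(A-A^\dagger)$ shows that $\CC^{(k)}(\mathfrak{G})$ splits into its Hermitian and anti-Hermitian parts, which is the route by which one obtains the claim in the main text that every element of $\CC^{(k)}(\mathfrak{G})$ has a non-zero projection onto its Hermitian subspace (whenever $A$ itself is non-Hermitian, either $A+A^\dagger$ or $i(A-A^\dagger)$ is a non-zero Hermitian element of $\CC^{(k)}(\mathfrak{G})$).
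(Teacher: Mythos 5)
Your proof is correct and follows essentially the same route as the paper's: take the Hermitian adjoint of the defining commutation relation and use unitarity of the elements of $\mathfrak{G}$. The only cosmetic difference is that you restore the relation for $V$ itself by invoking closure of $\mathfrak{G}$ under inverses and relabeling $W=V^{-1}$, whereas the paper first rewrites the commutation as the conjugation identity $A=V^{\otimes k}A(V^\dagger)^{\otimes k}$, whose adjoint already yields the statement for the same $V$; your closing remark about extracting the Hermitian elements $A+A^\dagger$ and $i(A-A^\dagger)$ matches the paper's use of the lemma.
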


\begin{proof}~\label{lemma:Hermiticity}
Let $A$ be a matrix in $\CC^{( k)}(\mathfrak{G})$. From Definition~\ref{def:symmetries} we know that $A$ is such that
\begin{equation}
[A,V^{\otimes k}]=0\,,\quad \forall V\in \mathfrak{G}\,.
\end{equation}
The previous equation can be explicitly written as $AV^{\otimes k}=V^{\otimes k} A $, or equivalently, as
\begin{equation}
A=V^{\otimes k} A (V\ad)^{\otimes k}\,,\quad \forall V\in \mathfrak{G}\,.
\end{equation}
Taking the conjugate transpose on each side leads to 
\begin{equation}
A\ad=V^{\otimes k} A\ad (V\ad)^{\otimes k}\,,\quad \forall V\in \mathfrak{G}\,,
\end{equation}
which, by Definition~\ref{def:symmetries}, implies that  $A\ad$ is also in $\CC^{( k)}(\mathfrak{G})$.
\end{proof}

As a consequence of Lemma~\ref{lemma:Hermiticity}, one can always associate a Hermitian operator to any matrix in $\CC^{(k)}(\mathfrak{G})$. Namely, if $A\in\CC^{(k)}(\mathfrak{G})$ is Hermitian, then nothing needs to be done. But if $A$ is not Hermitian, one can create the operators $A+A\ad$ and $i(A-A\ad)$ which are Hermitian and which also belong to $\CC^{(k)}(\mathfrak{G})$.

\setcounter{proposition}{1}
\section{Proof of Proposition~\ref{prop:no-post-processing-orthogonal}}\label{app:B}

Here we prove Proposition~\ref{prop:no-post-processing-orthogonal} of the main text, which we recall for convenience.

\begin{proposition}\label{prop:no-post-processing-orthogonal-SM}
Let $h_{\thv}^{(k)}\in\HC_1$ be a model in  Hypothesis Class~\ref{def:hyp_class_1}. Then, let $\mathfrak{G}$ be the symmetry Lie group associated with the dataset, and let $\mathfrak{g}\subseteq\mathfrak{u}(d)$ be its Lie algebra with $i\id\in\mathfrak{g}$. 
The model will be $\mathfrak{G}$-invariant when $\rho \in i \mathfrak{g}$ and $\widetilde{O}(\thv)\in{\rm span}(\{A_j\otimes A_{\overline{j}}\}_j )$. Here, $A_j\in i\mathfrak{g}^{\perp}$ is a Hermitian operator acting on the $j$-th copy of $\rho$ and $A_{\overline{j}}$ is an operator acting on all copies of $\rho$ but the $j$-th one.
\end{proposition}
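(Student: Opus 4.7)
The plan is to show that under the stated hypotheses both $h_{\thv}^{(k)}(V\rho V\ad)$ and $h_{\thv}^{(k)}(\rho)$ vanish identically, so the invariance condition $h_{\thv}^{(k)}(V\rho V\ad)=h_{\thv}^{(k)}(\rho)$ is trivially satisfied. By linearity of the trace, it is enough to run the argument for a single spanning element $\widetilde{O}(\thv)=A_j\otimes A_{\overline{j}}$ with $A_j\in i\mathfrak{g}^\perp$ acting on the $j$-th copy; invariance for any real linear combination then follows immediately.

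First I would exploit the tensor-product structure. Because $(V\rho V\ad)^{\otimes k}$ is a product across the $k$ copies and the observable $A_j\otimes A_{\overline{j}}$ factorizes as the $j$-th copy versus the rest, the trace splits as
\begin{equation}
h_{\thv}^{(k)}(V\rho V\ad)=\Tr[V\rho V\ad\, A_j]\cdot \Tr[(V\rho V\ad)^{\otimes(k-1)} A_{\overline{j}}]\,.
\end{equation}
It therefore suffices to show that the first scalar factor vanishes for every $V\in\mathfrak{G}$.

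Next I would invoke the Lie-algebraic assumptions. Since $i\id\in\mathfrak{g}$, a Hermitian density matrix with $\rho\in i\mathfrak{g}$ admits the writing $\rho=ig_0$ with $g_0\in\mathfrak{g}$, and similarly $A_j=ih_0$ with $h_0\in\mathfrak{g}^\perp\subseteq\mathfrak{u}(d)$. Two ingredients then drive the calculation: (i) because $\mathfrak{g}$ is the Lie algebra of the Lie group $\mathfrak{G}$, the adjoint action preserves it, so $Vg_0V\ad\in\mathfrak{g}$; and (ii) by Definition~\ref{def:orthogonal-complement}, $\Tr[h_0\ad g]=0$ for every $g\in\mathfrak{g}$. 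Using the skew-Hermiticity $h_0\ad=-h_0$ to convert between $\Tr[h_0\ad(Vg_0V\ad)]$ and $\Tr[(Vg_0V\ad)h_0]$, this yields
\begin{equation}
\Tr[V\rho V\ad\, A_j]=-\Tr[(Vg_0V\ad)\,h_0]=0\,.
\end{equation}

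Plugging back into the factored expression gives $h_{\thv}^{(k)}(V\rho V\ad)=0$ for every $V\in\mathfrak{G}$, and the special case $V=\id$ likewise gives $h_{\thv}^{(k)}(\rho)=0$, so the two are trivially equal and $\mathfrak{G}$-invariance follows. I anticipate the main subtlety to be the bookkeeping of the factors of $i$ together with the skew-Hermitian/Hermitian distinction needed to identify the Hilbert-Schmidt orthogonality of Definition~\ref{def:orthogonal-complement} with the trace that appears in the factored expression; once the invariance of $\mathfrak{g}$ under the adjoint action of $\mathfrak{G}$ is in hand, the rest is essentially routine linear algebra.
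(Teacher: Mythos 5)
Your proof is correct and follows essentially the same route as the paper's: factor the trace across the $j$-th copy versus the rest, use closure of $\mathfrak{g}$ under the adjoint action of $\mathfrak{G}$ to keep $V\rho V\ad$ in $i\mathfrak{g}$, and invoke the Hilbert--Schmidt orthogonality of Definition~\ref{def:orthogonal-complement} to kill the scalar factor $\Tr[V\rho V\ad A_j]$, so that both sides of the invariance condition vanish identically. The only cosmetic difference is that the paper expands $\rho$ and $A_j$ in bases of $i\mathfrak{g}$ and $i\mathfrak{g}^{\perp}$ whereas you track the factors of $i$ and skew-Hermiticity directly.
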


\begin{proof}

Let us first consider the case when $i\id\in\mathfrak{g}$, and when $\widetilde{O}(\thv)=A_j\otimes A_{\overline{j}}$ for a given $j$.  If $\rho \in i \mathfrak{g}$ then
\begin{equation}
\rho=\sum_\mu r_\mu h_\mu  \quad \text{with} \quad h_\mu\in i\mathfrak{g} \quad \text{and} \quad r_\mu\in\mathbb{R}\,.
\end{equation}
Similarly, if $A_j\in i\mathfrak{g}^{\perp}$ we can write
\begin{equation}
    A_j=\sum_{\mu'} a_{\mu'} g_{\mu'}\,, \quad \text{with} \quad  g_{\mu'}\in i\mathfrak{g}^\perp \quad \text{and} \quad a_\mu\in\mathbb{R}\,.
\end{equation}
It then follows from Definition~\ref{def:orthogonal-complement} that
\begin{equation}\label{eq:proof-prop-2-0}
    \Tr[\rho A_j]=\sum_{\mu,\mu'}r_\mu a_{\mu'}\Tr[h_\mu g_{\mu '}]=0\,,
\end{equation}
since each term $\Tr[h_\mu g_{\mu '}]=0$.

From Eq.~\eqref{eq:proof-prop-2-0} we know that the expectation value of $\widetilde{O}(\thv)$ over $\rho^{\otimes k}$ is
\begin{align}\label{eq:proof-prop-2-1}
    h_{\thv}^{(k)}(\rho)=\Tr[\rho^{\otimes k}\widetilde{O}(\thv) ] = \Tr[\rho A_j]\Tr_{\overline{j}}[\rho^{\otimes(k-1)}A_{\overline{j}}]=0\,, 
\end{align}
where $\Tr_{\overline{j}}$ is the  trace over all qubits except those in the $j$-th copy of $\rho$. Equation~\eqref{eq:proof-prop-2-0} shows that $h_{\thv}^{(k)}=0$  for all states $\rho$ having support exclusively on $i \mathfrak{g}$. Since a Lie algebra is closed under the action of its Lie group, then it follows that $V\rho V\ad\in i\mathfrak{g}$ for all $V\in\mathfrak{G}$.  Thus, with a similar argument, we have 
\begin{equation}
h_{\thv}^{(k)}(V\rho V\ad)=  \Tr[V \rho V\ad A_j]\Tr_{\overline{j}}[(V\rho V\ad)^{\otimes(k-1)}A_{\overline{j}}]=0
\end{equation}
for all $V\in\mathfrak{G}$, where we used again Definition~\ref{def:orthogonal-complement}. The latter shows that $h_{\thv}^{(k)}$ is $\mathfrak{G}$ invariant.

Finally, we can generalize the previous results to the case when  $\widetilde{O}(\thv)\in{\rm span}(\{A_j\otimes A_{\overline{j}}\}_j) $. Now we can expand 
\begin{equation}
    \widetilde{O}(\thv)=\sum_j o_j A_j\otimes A_{\overline{j}}\,.
\end{equation}
If $A_j\in i\mathfrak{g}^{\perp}$ $\forall j$, we finally find 
\begin{align}
   h_{\thv}^{(k)}(V\rho V\ad) =\Tr[(V\rho V\ad)^{\otimes k}\widetilde{O}(\thv) ]=\sum_j o_j \Tr[V \rho V\ad A_j]\Tr_{\overline{j}}[(V \rho V\ad)^{\otimes(k-1)}A_{\overline{j}}]=0\,,
\end{align}
which completes the proof.
\end{proof}

We here finally note that if $i\id \in \mathfrak{g}^{\perp}$, then we can obtain $\mathfrak{G}$-invariance when $\rho \in i \mathfrak{g}^{\perp}$ and $A_j\in i\mathfrak{g}$. The proofs follows similarly to that previously presented. Now we have 
\begin{align}
   h_{\thv}^{(k)}(V\rho V\ad) =\Tr[(V\rho V\ad)^{\otimes k}\widetilde{O}(\thv) ]=\sum_j o_j \Tr[ \rho V\ad A_j V]\Tr_{\overline{j}}[(V \rho V\ad)^{\otimes(k-1)}A_{\overline{j}}]=0\,.
\end{align}
Here we need to use the fact that $\Tr[ \rho V\ad A_j V]\in i\mathfrak{g}$ for all $V\in\mathfrak{G}$ (a Lie algebra is closed under the action of its associated Lie group). Moreover, we have used Definition~\ref{def:orthogonal-complement} to show that  $\Tr[ \rho V\ad A_j V]=0$.

\section{Proof of Propositions~\ref{prop:no-post-processing-symmetries-two} and~\ref{prop:no-post-processing-orthogonal-two}}\label{app:C}

Here we prove Propositions~\ref{prop:no-post-processing-symmetries-two} and~\ref{prop:no-post-processing-orthogonal-two}. Since their proofs follow from those of Propositions~\ref{prop:no-post-processing-symmetries} and~\ref{prop:no-post-processing-orthogonal}, respectively, we simply mention the main differences.

\begin{proof}

First, we recall that in Propositions~\ref{prop:no-post-processing-symmetries-two} and~\ref{prop:no-post-processing-orthogonal-two} we consider the case when the data with different labels have different symmetry groups, denoted as  $\mathfrak{G}_0$ and $\mathfrak{G}_1$.  Moreover, each symmetry group will have their own $k$-th order symmetries, and for the case of Lie groups, their associated Lie algebra and orthogonal complements. 

From Proposition~\ref{prop:no-post-processing-symmetries} we know that a model will be $\mathfrak{G}$-invariant if $\widetilde{O}(\thv)$ belongs to $\CC^{( k)}(\mathfrak{G})$. Since here there are two symmetry groups, we know that the models will be $\mathfrak{G}_0$-invariant if $\widetilde{O}(\thv)$ belongs to $\CC^{( k)}(\mathfrak{G}_0)$. Evidently, the model will also be $\mathfrak{G}_1$ invariant if  $\widetilde{O}(\thv)$ is also in $\CC^{( k)}(\mathfrak{G}_1)$. Thus, to guarantee $\mathfrak{G}_0$-invariance, but not necessarily $\mathfrak{G}_1$-invariance, one needs $\widetilde{O}(\thv)$ to  belong to $\mathfrak{G}_0$, but not to $\mathfrak{G}_1$. This constitutes the proof of Proposition~\ref{prop:no-post-processing-symmetries-two}.

The proof of Proposition~\ref{prop:no-post-processing-orthogonal-two} follows similarly from that of Proposition~\ref{prop:no-post-processing-symmetries-two}. Here, the model will be $\mathfrak{G}_0$- and $\mathfrak{G}_1$-invariant, if  $\widetilde{O}(\thv)\in{\rm span}(\{A_j\otimes A_{\overline{j}}\}_j) $,  $\rho \in i \mathfrak{g}_0,i \mathfrak{g}_1$,  and $A_j\in i\mathfrak{g}_0^{\perp},i\mathfrak{g}_1^{\perp}$. In addition, the model will be $\mathfrak{G}_{0(1)}$-invariant but not necessarily  $\mathfrak{G}_{1(0)}$-invariant when $\rho \in i \mathfrak{g}_0,i \mathfrak{g}_i$  and $A_j\in i\mathfrak{g}_0^{\perp}$ but $A_j\not\in i\mathfrak{g}_1^{\perp}$. 
\end{proof}

\section{Ancilla-based models for the purity dataset}\label{app:D}

In the main text we considered the task of classifying  the data in the purity dataset with models in Hypothesis Class~\ref{def:hyp_class_1}, i.e., with models of the form $h_{\thv}(\rho)=\Tr[U(\thv)\rho^{\otimes k} U\ad(\thv)O]$. As we saw in Theorem~\ref{theo:purity-conventional} of the main text, there are no such models with $k=1$ that allow for classification. On the other hand,  Theorem~\ref{theo:unitary-quadratic} shows that models with $k=2$ can indeed classify the data according to their purity. In this case, $\widetilde{O}(\thv)$ has to be the SWAP operator (up to some additive and multiplicative constants). 

The previous raises the issue of how to efficiently evaluate the expectation value of the SWAP operator. Taking inspiration from the Hadamard Test, which computes the expectation value of a unitary by controlling its action with an ancillary qubit, we envision a new family of ancilla-based hypothesis. Here, one appends to the QNN an extra qubit that is used along the two copies of $\rho$, so that  the $2n+1$ qubit state $(\dya{0}\otimes\rho\otimes\rho)$ is fed into a quantum neural network that acts globally on all qubits, and only measures the ancilla qubit. This  defines the following Hypothesis class. 
\begin{hypothesis}~\label{def:hyp_class_ancilla}
We define the Hypothesis Class $\HC_3$, computable in a quantum-enhanced experiment, as composed of functions of the form
\begin{align}
    h_{\thv}(\rho)=\Tr[U(\thv)(\dya{0}\otimes\rho\otimes\rho) U\ad(\thv)O_A]\,,
\end{align}
where $U(\thv)$ is a quantum neural network acting on $2n+1$-qubits, and $O_A=(O\otimes\id\otimes\id)$ with $O$ a one-qubit Hermitian operator acting on the ancilla qubit.
\end{hypothesis}

The models in Hypothesis Class $\HC_3$ now should be invariant under the action of $\id\otimes V\otimes V$ for any $V$ in $\mathbb{U}(d)$.
In the spirit of Proposition~\ref{prop:no-post-processing-symmetries} and defining $\widetilde{O}_A(\thv)=U\ad(\thv)O_AU(\thv)$, we know that this can be readily achieved when $[\widetilde{O}_A(\thv),\id\otimes V\otimes V]=0$ for all $V\in\mathbb{U}(d)$, that is, when $\widetilde{O}_A\in i\mf{u}(2)\otimes\CC^{(2)}(\mathfrak{G})$. This results in the following Theorem.
\begin{theorem}\label{theo:unitary-quadratic-ancilla}
Let $h_{\thv}\in \HC_3$ be a model in Hypothesis Class~\ref{def:hyp_class_ancilla}, computable in a quantum-enhanced experiment. There always exist quantum neural networks $U(\thv)$ and operators $O_A$, resulting in $\widetilde{O}_A(\thv)=A\otimes S$, where  $A\ket{0}=\ket{0}$ and $S\in{\rm span}(\{\id \otimes \id,SWAP\})$ with non-zero component in $SWAP$, such that $h_{\thv}$ is invariant under the action of $\mathbb{U}(d)$  and can perfectly classify the data in the purity dataset. The special choice of $\widetilde{O}_A(\thv)=Z\otimes SWAP$ corresponds to the operator measured in the Swap Test~\cite{buhrman2001quantum} and in the ancilla based algorithm of~\cite{cincio2018learning}. 
\end{theorem}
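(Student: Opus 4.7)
The plan is to mirror the strategy used in Theorem~\ref{theo:unitary-quadratic}, but carefully accounting for the extra ancillary qubit. The symmetry acts on the enlarged $(2n+1)$-qubit register as $\id\otimes V\otimes V$ with $V\in\mathbb{U}(d)$, so the analogue of Proposition~\ref{prop:no-post-processing-symmetries} in this setting requires $[\widetilde{O}_A(\thv),\id\otimes V^{\otimes 2}]=0$ for all such $V$. By the Commutation Theorem for tensor products (invoked in the proof of Theorem~\ref{theo:entanglement-conventional}), the commutant factorizes, and combining this with the Schur--Weyl identity of Eq.~\eqref{eq:k_symmetries_unitary} on the last two factors gives
\begin{equation}
    \CC\bigl(\{\id\otimes V^{\otimes 2}\,:\,V\in\mathbb{U}(d)\}\bigr)=M_2(\mathbb{C})\otimes\mathrm{span}(\{\id\otimes\id,\,SWAP\}).
\end{equation}
Restricting to Hermitian elements (as guaranteed possible by Lemma~\ref{lemma:Hermiticity}) establishes that any $\mathfrak{G}$-invariant $\widetilde{O}_A(\thv)$ can be taken of the form $A\otimes (a_1\id\otimes\id+a_2\,SWAP)$ with $A$ a Hermitian $2\times2$ operator and $a_1,a_2\in\mathbb{R}$.

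Next, I would compute the model's prediction by direct substitution into the expression of Hypothesis Class~\ref{def:hyp_class_ancilla}. Using that $\Tr[(\rho\otimes\rho)(\id\otimes\id)]=1$ and $\Tr[(\rho\otimes\rho)\,SWAP]=\Tr[\rho^2]$, one obtains
\begin{equation}
    h_{\thv}(\rho)=\bra{0}A\ket{0}\bigl(a_1+a_2\,\Tr[\rho^2]\bigr).
\end{equation}
Imposing $A\ket{0}=\ket{0}$ collapses the ancilla factor to unity and yields $h_{\thv}(\rho)=a_1+a_2\,\Tr[\rho^2]$. As in Theorem~\ref{theo:unitary-quadratic}, whenever $a_2\neq 0$ the model is affine in the purity and therefore perfectly classifies the dataset (in the sense of Definition~\ref{def:accuracy}).

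Finally, to identify the Swap Test as the special case $\widetilde{O}_A(\thv)=Z\otimes SWAP$, I would exhibit a concrete realization $U(\thv)$ and $O$ implementing this effective observable. The natural candidate is the Swap Test circuit itself: $U=(H\otimes\id\otimes\id)\cdot\mathrm{cSWAP}\cdot(H\otimes\id\otimes\id)$ with $O=Z$ on the ancilla. A short calculation using $HZH=X$ and the explicit action of $\mathrm{cSWAP}(X\otimes\id\otimes\id)\mathrm{cSWAP}=X\otimes SWAP$ gives $\widetilde{O}_A=Z\otimes SWAP$, which satisfies $A=Z$ with $Z\ket{0}=\ket{0}$. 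Substituting back recovers $h_{\thv}(\rho)=\Tr[\rho^2]$, matching both the Swap Test of~\cite{buhrman2001quantum} and the ancilla-based algorithm of~\cite{cincio2018learning}.

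The main obstacle I anticipate is the last step: while the representation-theoretic characterization of $\widetilde{O}_A$ is a straightforward adaptation of earlier results, showing that the constrained subclass $A\otimes S$ with $A\ket{0}=\ket{0}$ is actually \emph{realizable} by some physical circuit $U(\thv)$ combined with a single-qubit ancilla measurement $O$ is not automatic. Resolving it requires an explicit constructive argument -- the Swap Test circuit above -- and one should also verify that the general condition $A\ket{0}=\ket{0}$ (not just $A=Z$) can be engineered by composing the Swap Test with an arbitrary ancilla unitary that stabilizes $\ket{0}$, thereby exhausting the parameterized family claimed in the theorem.
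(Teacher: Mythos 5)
Your proposal is correct and follows essentially the same route as the paper's proof: impose $[\widetilde{O}_A(\thv),\id\otimes V^{\otimes 2}]=0$, take $\widetilde{O}_A(\thv)=A\otimes S$ with $S\in\mathrm{span}(\{\id\otimes\id,SWAP\})$, factor the trace into $\bra{0}A\ket{0}\,(a_1+a_2\Tr[\rho^2])$ so that $A\ket{0}=\ket{0}$ yields perfect classification for $a_2\neq 0$, and exhibit a circuit realizing $Z\otimes SWAP$ (the paper verifies this against the two compiled circuits of its Fig.~\ref{fig:ancilla}, whereas you use the textbook $H\cdot\mathrm{cSWAP}\cdot H$ form; both computations agree). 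One minor imprecision worth flagging: the commutant of $\{\id\otimes V^{\otimes 2}\}$ is $M_2(\mathbb{C})\otimes\mathrm{span}(\{\id\otimes\id,SWAP\})$, whose general Hermitian element is a sum $A_1\otimes\id\otimes\id+A_2\otimes SWAP$ with possibly distinct $A_1\neq A_2$, so not every invariant observable has the product form $A\otimes S$ — but since the theorem only asserts existence of such models, this overstatement does not affect the validity of your argument.
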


\begin{proof}
We first recall that the models in $\HC_3$ will be $\mathfrak{G}$-invariant if $h_{\thv}(V\rho V \ad)=h_{\thv}(\rho)$ for all $V\in\mathbb{U}(d)$, i.e., when 
\begin{equation}\label{eq:to-satisfy-2}
    \Tr[(\id\otimes V\otimes V)(\dya{0}\otimes\rho\otimes\rho) (\id\otimes V\ad \otimes V\ad) \widetilde{O}_A(\thv)]=\Tr[(\dya{0}\otimes\rho\otimes\rho) \widetilde{O}_A(\thv)]\,.
\end{equation}
This holds when
\begin{equation}\label{eq:to-satisfy-3}
    [\widetilde{O}_A(\thv),(\id\otimes V\otimes V)]=0\,.
\end{equation}
Eq.~\eqref{eq:to-satisfy-3} is satisfied for the choice of $\widetilde{O}_A(\thv)=A\otimes S$ with $A$ an operator acting on the ancillary qubit and $S$ an operator in $\CC^{(2)}(\mathbb{U}(d))={\rm span}(S_2)$ with $S_2=\{\id\otimes \id,SWAP\}$ a representation of the Symmetric group of two elements. Then, replacing $\widetilde{O}_A(\thv)$ by $A\otimes S$  in the left-hand-side of~\eqref{eq:to-satisfy-2} leads to 
\begin{align}
    \Tr[(\id\otimes V\otimes V)(\dya{0}\otimes\rho\otimes\rho) (\id\otimes V\ad \otimes V\ad) \widetilde{O}_A(\thv)]&=\Tr[(\id\otimes V\otimes V)(\dya{0}\otimes\rho\otimes\rho) (\id\otimes V\ad \otimes V\ad) A\otimes S)]\nonumber\\
    &=\Tr[(\dya{0}\otimes\rho\otimes\rho)  (A\otimes S)]\nonumber\\
    &=\Tr[\dya{0} A]\Tr[\rho\otimes \rho S]\,.\nonumber
\end{align}
In the second inequality we have used the fact that, $S$ commutes with $V\otimes V$, while in the third line we have simply separated the trace over the different subsystems. Since replacing $\widetilde{O}_A(\thv)$ by $A\otimes S$  in the left-hand-side of~\eqref{eq:to-satisfy-2} leads to $\Tr[\dya{0}]\Tr[\rho\otimes \rho S]$, we can see that the model will be $\mathfrak{G}$-invariant iff $A\ket{0}=\ket{0}$, i.e., if $A$ has $\ket{0}$ as an eigenvector with eigenvalue equal to one.

Finally, we remark that a direct calculation with the circuits in Fig.~\ref{fig:ancilla}  verifies that both circuits satisfy  $U_1\ad(Z\otimes\id\otimes\id)U_1=U_2\ad(Z\otimes\id\otimes\id)U_2=Z\otimes SWAP$.  Hence, we recover the operator measured in the Swap Test~\cite{buhrman2001quantum} and in the ancilla based algorithm of~\cite{cincio2018learning}.
\end{proof}

Let us analyze here the remarkable fact that the special choice $A=Z$ leads to the \textit{exact} operator measured in two distinct ancilla-based circuits computing the purity: in the Swap Test~\cite{buhrman2001quantum} and in the algorithm discovered in~\cite{cincio2018learning}. For completeness, these two circuits are shown in Fig.~\ref{fig:ancilla} for the case when $\rho$ is a single qubit state. While both circuits end-up computing the purity of $\rho$  they do so by implementing distinct  unitaries. This can be seen by evaluating the Schmidt rank across the ancilla/data qubits cut of the two circuits.  This Schmidt rank is found to be $2$ for the unitary $U_1$ displayed in Fig.~\ref{fig:ancilla}(a), and $3$ for the unitary $U_2$ displayed in Fig.~\ref{fig:ancilla}(b). 
This indicates that both circuits are fundamentally different in the sense that there is no local operations that map one to the other~\cite{cincio2018learning}.
Still, one can verify that
\begin{equation}
    U_1\ad(Z\otimes\id\otimes\id)U_1=U_2\ad(Z\otimes\id\otimes\id)U_2=A\otimes SWAP\,,
\end{equation}
where $A=Z$. Hence, our results shed new light to the connection between these two circuits.

\begin{figure}[t]
\centering
\includegraphics[width=1\linewidth]{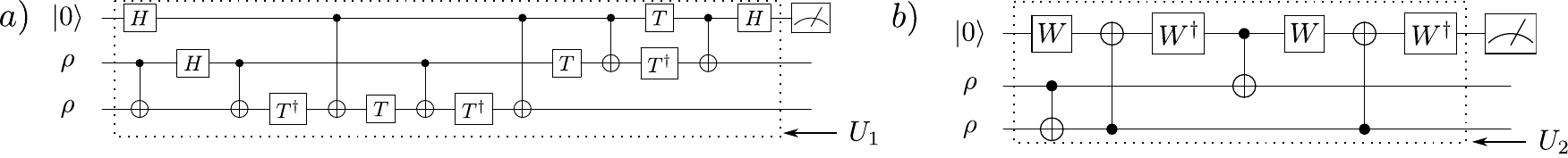}
\caption{\textbf{Ancilla-based circuits for computing the purity.} a) The circuit $U_1$ corresponds to the canonical SWAP test which was compiled into gates native to the IBM’s superconducting qubit devices~\cite{smolin1996five}. 
Here $H$ and $T$ denotes the Hadamard and $\pi/8$ phase gates respectively. 
At the end of the circuit one measures the expectation value of the Pauli $Z$ operator on the ancilla qubit. b) The circuit for $U_2$  corresponds to the ancilla-based algorithm for computing purity discovered in~\cite{cincio2018learning} trough a machine learning subroutine  designed to minimize the number of CNOTs required. (see also~\cite{bilkis2021semi}). Here $W=T\ad H$. At the end of the circuit one measures the expectation value of the Pauli $Z$ operator on the ancilla qubit. Both circuits satisfy the conditions in Theorem~\ref{theo:unitary-quadratic} as $U_1\ad(Z\otimes\id\otimes\id)U_1=U_2\ad(Z\otimes\id\otimes\id)U_2=Z\otimes SWAP$.   }
\label{fig:ancilla}
\end{figure}

\section{Classifying with Eq.~\eqref{eq:classification_reversal}}\label{app:E}

In the main text we argued that when the classification task has two symmetry groups, associated with the two different classes, then one can classify the data if there exists a $\mathfrak{G}_1$-invariant model $h_{\thv}^{(k)}\in\HC_1$ in the Hypothesis Class~\ref{def:hyp_class_1} such that
\begin{equation}\label{eq:classification_reversal_SM}
\begin{split}
    &h_{\thv}^{(k)}(\rho_i)=
    c\quad \text{if $y_i=1$}\,,\\
    &h_{\thv}^{(k)}(\rho_i)\in[b_1,b_2] \quad \text{if $y_i=0$}\,.
\end{split}
\end{equation}
When $c\not\in[b_1,b_2]$ one can readily use this model for unambiguous classification: when the model returns a value of $c$ (different from $c$) one would assign a label of $y=1$ ($y=0$) which corresponds to the true label of the state to be classified. 
However, when $c\in[b_1,b_2]$ we can still perform classification, but at the cost of misclassifying some of the states. Indeed, there will now be cases where one measures a value of $h_{\thv}^{(k)}(\rho_i)=c$ despite the fact that the underlying state has true label $y_i=0$, but assign it a label $y=1$. The probability of such misclassification event is quantified in the following.
\begin{lemma}\label{lemma:probab}
Let $P(0|c)$ be the probability of misclassification which happens when the true label of a state $\rho_i$ is $y_i=0$ given a model value $h_{\thv}^{(k)}(\rho_i)=c$. Accordingly, let $P(c|0)$ be the probability that the model takes a value of $c$ when the data has label $y_i=0$. Assuming equal probability of sampling states belonging to each of the two classes, we have
\begin{equation}\label{eq:lemma-probab-eq}
    P(0|c) = \frac{P(c |0)}{1 + P(c |0)}\,.
\end{equation}
\end{lemma}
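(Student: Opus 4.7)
The plan is to derive the formula by a direct application of Bayes' theorem, identifying the relevant conditional probabilities from the structure imposed by Eq.~\eqref{eq:classification_reversal_SM}. The key observation is that the model is deterministic on the class $y=1$: by construction, $h_{\thv}^{(k)}(\rho_i) = c$ whenever $y_i = 1$. This yields $P(c|1) = 1$. Meanwhile, $P(c|0)$ is a nontrivial quantity describing how often a state drawn from class $0$ accidentally reproduces the value $c$ (up to any tolerated additive precision).

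Given these ingredients, I would write Bayes' rule as
\begin{equation}
P(0|c) = \frac{P(c|0)\, P(0)}{P(c|0)\, P(0) + P(c|1)\, P(1)}\,,
\end{equation}
and then substitute $P(0) = P(1) = 1/2$ (equal sampling probability, as stated in the hypothesis) together with $P(c|1) = 1$. The $1/2$ factors cancel from numerator and denominator, and $P(c|1)=1$ reduces the denominator to $P(c|0) + 1$, giving the claimed Eq.~\eqref{eq:lemma-probab-eq}.

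There is essentially no obstacle here: the result is purely a consequence of Bayesian inversion combined with the deterministic nature of the model on class $y=1$. The only subtlety worth flagging in the write-up is the precision caveat mentioned in the main text, namely that ``$h_{\thv}^{(k)}(\rho_i) = c$'' should be understood up to some finite additive error coming from shot noise, so that $P(c|0)$ is really the probability that the estimated expectation value falls within a tolerance window of $c$. Once this is granted, the formula follows immediately.
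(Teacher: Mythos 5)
Your proposal is correct and follows essentially the same route as the paper's own proof: Bayes' theorem combined with the law of total probability, the substitution $P(0)=P(1)=1/2$, and the observation that $P(c|1)=1$ because the model deterministically outputs $c$ on class $y=1$. The precision caveat you flag is also acknowledged in the paper's surrounding discussion, so nothing is missing.
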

Lemma~\ref{lemma:probab} shows that the probability of misclassification will remain small as long as the probability that $h_{\thv}^{(k)}(\rho_i)=c$ given $y_i=0$ is small. Let us now provide a proof for Lemma~\ref{lemma:probab}.

\begin{proof}
We know from Bayes theorem that 
\begin{equation}\label{eq:probab-1}
    P(0|c) = \frac{P(0)P(c|0)}{P(c)}\,.
\end{equation}
According to the law of total probability we have $P(c)=P(c|1)P(1)+P(c|0)P(0)$, where $P(0)$ and $P(1)$ denote the probability of sampling a state $\rho_i$ with label $y_i=0$ and $y_i=1$, respectively. 
Assuming the same representation of each label in the dataset, i.e., $P(0)=P(1)=1/2$, we can rewrite Eq.~\eqref{eq:probab-1} as  
\begin{equation}\label{eq:probab-2}
    P(0|c) = \frac{\frac{1}{2}P(c |0)}{\frac{1}{2}\big(P(c|1)+P(c|0)\big)} =\frac{P(c|0)}{P(c|1)+P(c|0)}\,.
\end{equation}
Finally, recalling that when $y_i=1$ the model outcome is always $c$, we have $P(c|1)=1$ such that we recover~\eqref{eq:lemma-probab-eq}.
\end{proof}

Going further, we can bound this probability of misclassification if we know the expectation value and variance of $h_{\thv}^{(k)}(\rho_i)$ for states with label $y_i=0$. First, note that $P(c |0)\geq 0$, such that, according to~\eqref{eq:lemma-probab-eq}, $P(0|c) \leq P(c |0)$. Next, let us define $X$ to be the random variable corresponding to the QML model output, i.e., $X=h_{\thv}^{(k)}(\rho_i)$ for a random state $\rho_i$. We denote as $\langle X\rangle_{0}=E_{y_i=0}[h_{\thv}^{(k)}(\rho_i)]$ the expectation value of this variable conditioned on the sampled states to have label $y_i=0$. 
Assuming that this expectation value is greater than $c$ (without loss of generality)  and defining $\delta =\langle X \rangle_{0} - c >0$, we find via Cantelli's inequality that
\begin{equation}
    P(X-\langle X\rangle\geq \delta)\leq \frac{\Var_0[X]}{\Var_0[X]+\delta^2}\,,
\end{equation}
where $\Var_0[X]=\langle X^2\rangle_0-\langle X\rangle_0^2$ is the variance of $X$ when sampling states with label $y_i=0$. It follows that 
\begin{equation}\label{eq:probab-4}
P(0|c) \leq P(c |0)\leq P(X-\langle X\rangle\geq \delta) \leq \frac{\Var_0[X]}{\Var_0[X]+\delta^2}\,,
\end{equation}
showing that for large separation $\delta$ relative to the variance  $\Var_0[X]$, the misclassification error will be small. 
Finally note that, when taking into account additive errors $\epsilon$ when estimating the output of the model, one would assign a label $1$ for model values estimated in a range $C=[c-\epsilon, c+\epsilon]$. In this case misclassification would arise when the model value is estimated in $C$ despite the true label of the state being $0$, and Equations~\eqref{eq:lemma-probab-eq} and~\eqref{eq:probab-4} could readily be extended to this scenario.

\section{Concentration results for time-reversal datasets}\label{app:F}

In the previous section we have seen that one can classify the states in the dataset via Eq.~\eqref{eq:classification_reversal_SM}, and that for well separated expectation values for the two classes, the misclassification probability can be small. 
In other cases, however, it can happen that $\langle X\rangle_{0}=E_{y_i=0}[h_{\thv}^{(k)}(\rho_i)]\sim c$ and that one would need a large number of experiment repetitions to guarantee an accurate classification. Here we see that for some of the models considered in the main text, such issue arises.

\subsection{Conventional experiments}

We recall from Theorem~\ref{th:time-reversal-conventional} that if $O$ is  a purely complex operator, then 
\begin{equation}
    h_{\thv}^{(1)}(\rho_i)=0\,, \quad \text{$\forall \rho_i$ such that $y_i=1$}\,,
\end{equation}
We now show that values of $h_{\thv}^{(1)}(\rho_i)$, for $y_i=0$, exponentially concentrates around $c=0$, such that an exponential number of shots would be needed for classification~\cite{chen2021exponential,aharonov2022quantum,huang2021quantum}.

First, let us recall that the Haar random states $\rho_i$ with $y_i=0$ are obtained by evolving a fiduciary real-valued initial state $\rho_{\tin}$ according to a Haar random unitary $W_i$:
\begin{equation}
    \rho_i=W_i \rho_{\tin} W_i\ad\,.
\end{equation}
Thus, we can evaluate the expectation value $\langle X\rangle_{0}=E_{y_i=0}[h_{\thv}^{(k)}(\rho_i)]$ by averaging the  model predictions over the unitary group. That is, $\langle X\rangle_{0}=E_{Haar}[h_{\thv}^{(1)}(W_i \rho_i W_i\ad)]$. This can be analytically derived via symbolical integration with respect to the Haar measure on a unitary group~\cite{puchala2017symbolic}. For any $W\in \mathbb{U}(d)$ the following expressions are valid for the first two moments of the distribution:  
\small
\begin{equation}\label{eq:delta}
\begin{aligned}
    \int_{\mathbb{W}(d)} d\mu(U)w_{\vec{i}_1\vec{j}_1}w_{\vec{i}_2\vec{j}_2}^*&=\frac{\delta_{\vec{i}_1\vec{i}_2}\delta_{\vec{j}_1\vec{j}_2}}{d}\,,   \\
\int_{\mathbb{W}(d)} d\mu(U)w_{\vec{i}_1\vec{j}_1}w_{\vec{i}_2\vec{j}_2}w_{\vec{i}_1'\vec{j}_1'}^{*}w_{\vec{i}_2'\vec{j}_2'}^{*}&=\frac{\delta_{\vec{i}_1\vec{i}_1'}\delta_{\vec{i}_2\vec{i}_2'}\delta_{\vec{j}_1\vec{j}_1'}\delta_{\vec{j}_2\vec{j}_2'}+\delta_{\vec{i}_1\vec{i}_2'}\delta_{\vec{i}_2\vec{i}_1'}\delta_{\vec{j}_1\vec{j}_2'}\delta_{\vec{j}_2\vec{j}_1'}}{d^2-1}
-\frac{\delta_{\vec{i}_1\vec{i}_1'}\delta_{\vec{i}_2\vec{i}_2'}\delta_{\vec{j}_1\vec{j}_2'}\delta_{\vec{j}_2\vec{j}_1'}+\delta_{\vec{i}_1\vec{i}_2'}\delta_{\vec{i}_2\vec{i}_1'}\delta_{\vec{j}_1\vec{j}_1'}\delta_{\vec{j}_2\vec{j}_2'}}{d(d^2-1)}\,,
\end{aligned}
\end{equation}
\normalsize
where $w_{\vec{i}\vec{j}}$ are the matrix elements of $W$. Assuming $d=2^n$, we use the notation $\vec{i} = (i_1, \dots i_n)$ to denote a bitstring of length $n$ such that $i_1,i_2,\dotsc,i_{n}\in\{0,1\}$. 

A straightforward calculation leads to 
\begin{equation}
    \langle X\rangle_{0}=\frac{\Tr[O]}{d}=0\,.
\end{equation}
Here we have used the fact that $O$ is Hermitian and purely imaginary, and hence has no support on the identity operator. Moreover, we can also find that 
\begin{equation}
    \Var_0[X]=\langle X^2\rangle_{0}=\frac{\Tr[O^2]\Tr[\rho_{\tin}^2]}{d^2-1}-\frac{\Tr[O^2]}{d(d^2-1)}= \frac{d \Tr[\rho_{\tin}^2]}{d^2-1}-\frac{1}{d(d^2-1)}\,,
\end{equation}
where in the last inequality we have further assumed that $\Tr[O^2]=\id$ (as this is the case when $O$ is a tensor product of an odd number of Pauli $Y$ operators). Hence, we can see that $\Var_0[X]\in\OC(\frac{1}{2^n})$\, and thus that the values of the QML model (when sampling Haar random states) concentrate exponentially around its mean of zero.

\subsection{Quantum-enhanced experiments}

We start by  recalling that when classifying the states in the time-reversal dataset, we found that
\begin{align}
    h_{\thv}^{(2)}(\rho_i)
    &=|\langle \Phi^+ | 0\rangle ^{\otimes 2n}|^2=\frac{1}{d^2}\,,
\end{align}
for all $\rho_i$ such that $y_i=1$, and that
\begin{align}\label{eq:ortho-enhanced-vanish-SM}
    h_{\thv}^{(2)}(\rho_i)=|\langle \Phi^+ |(U_i\otimes U_i ) |0\rangle ^{\otimes 2n}|^2\,,
\end{align}
for all $\rho_i$ such that $y_i=0$, where $U_i$ is the Haar random unitary that generates the state $\rho_i$. Here we have assumed that  $\rho_{\tin }=\dya{0}^{\otimes n}$. An explicit calculation of the expectation value leads to 
\begin{equation}
    \langle X\rangle_{0}=\frac{2}{d(d+1)}\,,
\end{equation}
which shows that for large number of qubits  $\langle X\rangle_{0}\sim c\in\OC(\frac{1}{2^n})$.
Moreover, explicit calculation of the variance shows, as before, that  $\Var_0[X]\in\OC(\frac{1}{2^{2n}})$. Hence the  values of the QML model (for Haar random states) concentrate in a range exponentially close to zero.

\section{Quantum Graph Convolutional Neural Networks}~\label{app:QGCNN}

In this section we present additional details for the \textit{Quantum Graph Neural Networks} (QGNN) and \textit{Quantum Graph Convolutional Neural Networks} (QGCNN) architectures introduced in Ref.~\cite{verdon2019quantumgraph}. We start by introducing the general form of a QGNN, and then show how it can be specialized to particular permutation-invariant datasets. 
\subsubsection{General QGNN ansatz}
 The most general QGNN ansatz is defined as:
\begin{equation}\label{eq:QGNN}
    \hat{U}_{\textsc{qgnn}}(\bm{\eta}, \bm{\theta}) = \prod_{p=1}^P \left[ \prod_{q=1}^Q e^{-i\eta_{pq}\hat{H}_q(\bm{\theta})}\right],
\end{equation}
which consists of a sequence of evolutions under $Q$ different Hamilonians, repeated $P$ times, with $\bm{\eta}$ and $\bm{\theta}$ the variational parameters. The Hamiltonians $\hat{H}_q(\bm{\theta})$ can be taken to be any parameterized Hamiltonians with interaction topology corresponding to the graph $\mathcal{G} = \{\mathcal{V},\mathcal{E}\}$:
\begin{equation}\label{eq:Ham} \hat{H}_q(\bm{\theta})\! \equiv\!\!\!\!\!\! \sum_{\{j,k\}\in \mathcal{E}} \sum_{r\in\mathcal{I}_{jk}} W_{qrjk} \hat{O}^{(qr)}_{j}\otimes \hat{P}^{(qr)}_{k}+ \sum_{v\in \mathcal{V}} \sum_{r\in\mathcal{J}_{v}} B_{qrv} \hat{R}^{(qvr)}_{v}.\end{equation}
Note that here the $W_{qrjk}$ and $B_{qrv}$ are tensors of real-valued parameters, which are collected as one vector $\bm{\theta} \equiv \cup_{q,j,k,r}\{W_{qrjk}\}\cup_{q,v,r}\{B_{qrv}\}$. The operators $\hat{R}^{(qvr)}_{j}, \hat{O}^{(qr)}_{j}, \hat{P}^{(qr)}_{j}$ are Hermitian and act on the Hilbert space of the $j^\mathrm{th}$ node of the graph, while the sets $\mathcal{I}_{jk}$ and $\mathcal{J}_{v}$ are index sets for the terms corresponding to the edges and nodes of the graph $\mathcal{G}$, respectively. 

\subsubsection{Quantum Graph Convolutional Network (QGCNN)}

In order to ensure permutation invariance of the overall ansatz, we can enforce the generators of the QGNN to all be permutation invariant.
This is achieved by  tying some $\bm{\theta}$ parameters across vertex indices of the graph, we can then drop such indices, and the coefficients become: $W_{qrjk} \mapsto W_{qr}$ and $B_{qrv}\mapsto B_{qr}$, resulting in
\begin{equation}\label{eq:Ham_conv} \hat{H}_q(\bm{\theta})\! \equiv\!\!\!\!\!\! \sum_{\{j,k\}\in \mathcal{E}} \sum_{r\in\mathcal{I}} W_{qr} \hat{O}_j^{(qr)}\otimes \hat{O}^{(qr)}_{k}+  \sum_{v\in \mathcal{V}}\sum_{r\in\mathcal{J}} B_{qr} \hat{R}^{(qr)}_{v},\end{equation}
where $\mathcal{I}$ and $\mathcal{J}$ are index sets that are edge and node-independent. 
These Hamiltonian generators of our QGNN layers are invariant under any permutation of the node indices which preserve the edge structure, as boths sums' indices can be relabelled.
We call QGNN's of the form \eqref{eq:QGNN} with generators of the form \eqref{eq:Ham_conv} \textit{Quantum Graph Convolutional Neural Networks} (QGCNN's) and note that they are general forms of the so-called Quantum Alternating Operator Ansatze~\citep{hadfield2019quantum}, and of the Quantum Approximate Optimization Algorithm ~\citep{farhi2014quantum}.

\end{document}